\newtheorem{theorem}{Theorem}
\newtheorem{lemma}{Lemma}
\newtheorem{remark}{Remark}
\newtheorem{comment}{Comment}
\newtheorem{proposition}{Proposition}
\newtheorem{assumption}{Assumption}
\newtheorem{corollary}{Corollary}
\newtheorem*{proof-mine}{Proof}
\let\MYcaption\@makecaption
\let\@makecaption\MYcaption
\begin{document}
%\title{Modeling and Performance Analysis of Underlay Radar-Massive MIMO Downlink Spectrum Sharing Scenarios}
%\title{Tractable Analysis of Underlay Spectrum Sharing between Radar and Massive MIMO Cellular Downlink}
\title{Underlay Radar-Massive MIMO Spectrum Sharing: Modeling Fundamentals and Performance Analysis}
%\title{Underlay Radar-Massive MIMO Spectrum Sharing: Fundamentals and Performance Analysis}
\author{Raghunandan M. Rao,~\IEEEmembership{Member,~IEEE,} Harpreet S. Dhillon,~\IEEEmembership{Senior Member,~IEEE,} Vuk Marojevic,~\IEEEmembership{Senior Member,~IEEE,} Jeffrey H. Reed,~\IEEEmembership{Fellow,~IEEE} \\ 
\thanks{R. M. Rao, H. S. Dhillon and J. H. Reed are with Wireless@VT, Bradley Department of ECE, Virginia Tech, Blacksburg, VA, 24061, USA (e-mail: \{raghumr, hdhillon, reedjh\}@vt.edu). V. Marojevic is with the Department of ECE at Mississippi State University, Mississippi State, MS, 39762, USA (e-mail: vuk.marojevic@ece.msstate.edu). The support of the U.S. NSF Grants CNS-1564148, CNS-1642873, and ECCS-1731711 is gratefully acknowledged. This work was presented in part at IEEE Globecom, Waikoloa, HI, USA, 2019 \cite{Rao_Dhillon_Globecom_2019}.}
}
%\title{On the Impact of Non-Pilot Interference on Link Adaptation and Latency in Cellular Networks}
%\title{Analysis of Cellular Link Adaptation Schemes to Non-Pilot Interference}
%\author{\IEEEauthorblockN{Raghunandan M. Rao\IEEEauthorrefmark{1}, Vuk Marojevic\IEEEauthorrefmark{2},
%Jeffrey H. Reed\IEEEauthorrefmark{1} 
%}\\
%\IEEEauthorblockA{\IEEEauthorrefmark{1}Wireless@VT, Bradley Department of ECE, Virginia Tech, Blacksburg, VA, USA\\ 
%\IEEEauthorrefmark{2}Department of ECE, Mississippi State University, Mississippi State, MS, USA\\
%Email: raghumr@vt.edu, vuk.marojevic@ece.msstate.edu, reedjh@vt.edu}
%}
\maketitle
\vspace{-45pt}
\thispagestyle{empty}
\begin{abstract}
Spectrum sharing alleviates the severe shortage of spectrum in sub-6 GHz frequency bands through the harmonious coexistence of two or more wireless technologies on the same frequency resources. In this work, we study underlay radar-massive MIMO cellular coexistence in LoS/near-LoS channels, where both systems have 3D beamforming capabilities. Using mathematical tools from stochastic geometry, we derive an upper bound on the average interference power at the radar due to the 3D massive MIMO cellular downlink under the worst-case `cell-edge beamforming' conditions. To overcome the technical challenges imposed by asymmetric and arbitrarily large cells, we devise a novel construction in which each Poisson Voronoi (PV) cell is bounded by its circumcircle to bound the effect of the random cell shapes on average interference. Since this model is intractable for further analysis due to the correlation between adjacent PV cells' shapes and sizes, we propose a tractable nominal interference model, where we model each PV cell as a circular disk with an area equal to the average area of the typical cell. We quantify the gap in the average interference power between these two models and show that the upper bound is tight for realistic deployment parameters. We also compare them with a more practical but intractable MU-MIMO scheduling model to show that our worst-case interference models show the same trends and do not deviate significantly from realistic scheduler models. Under the nominal interference model, we characterize the interference distribution using the dominant interferer approximation by deriving the equi-interference contour expression when the typical receiver uses 3D beamforming. Finally, we use tractable expressions for the interference distribution to characterize radar's spatial probability of false alarm/detection in a quasi-static target tracking scenario. Our results reveal useful trends in the average interference as a function of the deployment parameters (BS density, exclusion zone radius, antenna height, transmit power of each BS, etc.). We also provide useful system design insights using radar receiver operating characteristic (ROC) curves by applying our analytical results to design the minimum exclusion zone radius in current and future radar-cellular spectrum sharing scenarios.
\end{abstract}
%\vspace{-12pt}
\begin{IEEEkeywords}
Stochastic geometry, radar-massive MIMO coexistence, interference distribution, probability of false alarm, probability of detection.
\end{IEEEkeywords}

\IEEEpeerreviewmaketitle
\vspace{- 5pt}
\section{Introduction}
Due to exponential growth in traffic over the last decade, mobile network operators (MNOs) in the United States have paid a total of more than $\$$60 billion to obtain licensed spectrum in the heavily congested sub-6 GHz bands \cite{FCC_AWS3_auction}. Spectrum sharing in 5G networks has the potential to alleviate this shortage, where the goal is to enable the operation of multiple wireless technologies on the same frequency bands without negatively impacting each others' performance. Spectrum sharing in sub-6 GHz bands is especially effective since it is underutilized. As a result, spectrum sharing policies in these bands have been ratified by the Federal Communications Commission in the United States \cite{FCC_3point5_GHz_Rules}, \cite{FCC_5_GHz_FirstOrder}. 
	
Radar systems are the most prominent incumbents in sub-6 GHz frequencies \cite{Griffiths_Radar_tech_regul_PRoc_IEEE_2015}, and their sparsity in space/time provides excellent opportunities for spectrum sharing.  For example, weather, air-traffic control (ATC), and military radars use a pulsed waveform and scan for targets using a \textit{rotating pattern} \cite{Khan_DaSilva_WCM_2016}, \cite{Hessar_Roy_Radar_WiFi_TAES_2016}. Opportunistic spectrum sharing schemes leverage the periodic rotation to schedule transmissions during interference-free durations. Similarly, cellular systems can operate in pulsed interference without significant performance degradation by using the interference-free durations between pulses \cite{Rao_Vuk_TVT_2020}. On the other hand, the radar can operate without significant performance degradation as long as the interference-to-noise-ratio (INR) is below the threshold \cite{sanders2006effects}. Also, radar systems are deployed in a spatially sparse manner, when compared to other systems such as Wi-Fi \cite{Li_Baccelli_Andrews_LTE_WiFi_TWC_2016}, \cite{Hessar_Roy_Radar_WiFi_TAES_2016}, Heterogeneous Networks (HetNets) \cite{Bodong_SSS_HetNet_TWC_2020}, and device-to-device (D2D) communication systems \cite{Chen_Dhillon_Liu_QoS_D2D_StochGeom_TCOM_2019}, \cite{Bodong_Liu_UAV_D2D_TCOM_2020}. 

In the wireless industry, these factors have encouraged commercial standardization and deployment of radar-cellular spectrum sharing systems in the 3.5 GHz band, led by the Citizens Broadband Radio Service (CBRS) Alliance \cite{CBRS_Alliance}. Its efforts will lead to the deployment of spectrum sharing solutions between LTE/5G and naval radar systems in the 3.5 GHz band on the coastal areas of the continental United States, which is projected to incur \$1 billion in infrastructure investment annually by 2023 \cite{SNS_Telecom_CBRS_Projection}. Due to the same reasons, radar-cellular spectrum sharing has also caught the interest of the defense community. The United States Department of Defense has undertaken the first step to incorporate 5G into their mission-critical networks. By allocating hundreds of millions of dollars to its partners in the wireless industry, it intends to build large-scale experimental testbeds to enable its terrestrial and airborne radar platforms to share spectrum with 5G systems in the 3.1-3.45 GHz band \cite{US_DoD_Spec_Share_budget}.

Among the different approaches, underlay radar-cellular spectrum sharing is an important baseline, wherein a large exclusion zone is established around the radar. Only the cellular base stations outside the exclusion zone are allowed to operate, often \textit{without coordination}. Spectrum sharing is feasible in these scenarios as long as the aggregate interference lies below a threshold. This paper develops a mathematical framework using stochastic geometry to model underlay radar-massive MIMO spectrum sharing scenarios and study the impact of the worst-case cellular interference on the radar's detection and false alarm performance.
\vspace{-10pt}
\subsection{Related Work}
%% 1. General overview of radar-cellular coexistence: explain why is cooperation hard
Prior works have considered different approaches for radar-cellular coexistence, such as multi-antenna techniques, waveform design, and opportunistic spectrum sharing. Multi-antenna techniques use the additional spatial degrees of freedom to mitigate mutual interference between the radar and cellular system \cite{Liu_Robust_MIMO_BF_Rad_Cell_Coexist_2017}, \cite{Biswas_FDMIMO_radar_coexist_TWC_2018}. On the other hand, the waveform of the radar \cite{Bica_Mitra_ICASSP_MI_Rad_Wfrm_2016}, \cite{Tang_Li_Spectr_constr_Rad_Wfm_TSP_2019} and cellular system \cite{Carrick_Reed_FRESH_TAES_2019} can be designed to enhance the receiver's resilience to interference. Finally, opportunistic spectrum sharing schemes improve the secondary system (cellular) performance by exploiting information of the temporal/spectral/spatial variation of primary user interference  \cite{Hessar_Roy_Radar_WiFi_TAES_2016}, \cite{Rao_Vuk_TVT_2020}. Accurate channel state information (CSI) is crucial for multi-antenna techniques to be effective, for which cooperative schemes such as common knowledge of radar and cellular probing waveforms is necessary \cite{Liu_Geraci_ICSI_HowMany_TWC_2019}. However, security concerns make cooperation infeasible with some military radar systems. Meanwhile, the adoption of interference-resilient waveforms has been very slow, since they require significant modifications to both systems, making their mass deployment over the next few years unlikely. While opportunistic spectrum access is feasible in the case of rotating radars in the `search mode' \cite{Hessar_Roy_Radar_WiFi_TAES_2016}, it is not possible when the radar is tracking a target. Therefore, an essential baseline of radar-cellular spectrum sharing is the underlay mode; a static exclusion zone is defined around the radar to limit cellular interference below a predefined threshold in the absence of cooperation.

In practice, protocol-oriented system-level simulators are used to undertake feasibility studies before proceeding with testbed-based experimentation and deployment. However, the use of 3D beamforming-capable massive MIMO in 5G base stations \cite{Xu_FDMIMO_Samsung_JSAC_2017}, and the presence of large exclusion zones \cite{Sudeep_Abid_DynExcZone_DySPAN_2014}, significantly increase the computational complexity of the system-level simulator, resulting in lengthy execution times. Over the last few years, stochastic geometry has augmented simulation studies by providing a tractable mathematical framework to gain fundamental system-design insights.  Due to its analytical tractability, stochastic geometry has become a useful tool to analyze large scale behavior of spectrum sharing scenarios such as LTE-WiFi coexistence \cite{Li_Baccelli_Andrews_LTE_WiFi_TWC_2016}, \cite{Parida_Dhillon_CBRS_Access_2017}, radar-WiFi coexistence \cite{Hessar_Roy_Radar_WiFi_TAES_2016}, HetNets \cite{Bodong_SSS_HetNet_TWC_2020}, and cellular-D2D coexistence \cite{Chen_Dhillon_Liu_QoS_D2D_StochGeom_TCOM_2019}, \cite{Bodong_Liu_UAV_D2D_TCOM_2020}.

%% 3. Elaborate on the prior works done in stochastic geometry, and what they've missed
Typically, radar systems are sparsely deployed with a large exclusion zone established around it when sharing spectrum with other technologies. This is significantly different when compared to spectrum sharing scenarios in \cite{Li_Baccelli_Andrews_LTE_WiFi_TWC_2016}, \cite{Chen_Dhillon_Liu_QoS_D2D_StochGeom_TCOM_2019}, \cite{Bodong_Liu_UAV_D2D_TCOM_2020}, \cite{Parida_Dhillon_CBRS_Access_2017} because the density of coexisting systems tend to be much higher. Furthermore, these systems tend to be closely located to one another and use spectrum sensing techniques (overlay spectrum sharing) instead of large exclusion zones to limit interference. The channel conditions are also significantly different; multipath is significant in dense cellular-WiFi/D2D deployments, whereas some radar-cellular coexistence scenarios are characterized by LoS \cite{Liu_Geraci_ICSI_HowMany_TWC_2019} or Rician channels \cite{Biswas_FDMIMO_radar_coexist_TWC_2018}, especially in coastal deployments and also when radar systems use ground clutter suppression techniques \cite{Melvin_STAP_TAESM_2004}. In radar-cellular coexistence where both systems are equipped with 3D beamforming capabilities \cite{LTEAPro_FDMIMO_Samsung_ComMag_2017}, modeling the impact of azimuth as well as elevation beamforming gains are crucial to accurately model the received interference power. However, most of the prior work in stochastic geometry consider uniform linear arrays with \textit{only azimuth beamforming capabilities}, and the beamforming pattern is approximated by a piecewise constant function, often obtained from the main lobe and the two side lobe gains \cite{Bai_Heath_mmWave_TWC_2015}, or the exact beamforming pattern \cite{Hessar_Roy_Radar_WiFi_TAES_2016}, \cite{Kim_WiFi_Radar_WCL_2017}. Even though some recent works in stochastic geometry account for the elevation beamforming gain in their analysis, the models aren't well-suited for analytical treatment \cite{Rebato_Park_Zorzi_AntPattrn_mmWave_TCOM_2019}, \cite{Kim_Visotsky_5G_mmWave_Coexist_JSAC_2017}, or focus on fixed downtilt scenarios for optimal coverage in multi-cellular networks  \cite{Yang_BS_Downtilt_DL_Cellular_TWC_2019}, which is different from radar-cellular coexistence due to the aforementioned reasons. 
\vspace{-10pt}
\subsection{Contributions}
In this work, we develop a novel and tractable analytical framework to analyze radar performance metrics in a radar-massive MIMO spectrum sharing scenario. We consider a single radar system located at the origin, tracking a target above the horizon using a single beam from a uniform rectangular array (URA). The radar is surrounded by massive MIMO BSs, which are distributed as a homogeneous Poisson point process (PPP). All BSs are equipped with a massive MIMO URA mounted at the same height, where each BS is serving multiple users in its cell using hybrid 3D beamforming \cite{JSDM_Adhikary_Caire_TIT_2013}. A circular exclusion zone (EZ) is established around the radar, and only the BSs lying outside the EZ are allowed to operate.
%% Modeling assumptions 
%% Summary of key contributions
%and hence, there is no easy way of confining the elevation angles. 
\subsubsection*{Worst-Case Average Interference Power}
The main objective is to model the worst-case interference at the radar due to 3D beamforming in each cell. Worst-case interference occurs when BSs serve edge users located in the general direction of the radar. But in a random network of BSs, the notion of the `cell-edge' is unclear, because Poisson-Voronoi (PV) cells are radially asymmetric and can be arbitrarily large. State-of-the-art works focusing on cellular network performance overcome this challenge by analyzing the performance at a typical user \cite{Rebato_Park_Zorzi_AntPattrn_mmWave_TCOM_2019}, \cite{Yang_BS_Downtilt_DL_Cellular_TWC_2019}. Due to our differing objectives when compared to prior works, we devise a novel formulation by bounding the random effects of asymmetric and irregular cell shapes, termed as the \textit{Circumcircle-based cell (CBC) model}. In addition, the presence of sidelobes result in a beamforming gain that is a non-monotonic function of the elevation angle. We derive an upper bound on the beamforming gain that monotonically decreases with the elevation angle, which is crucial to deriving the upper bound on the worst-case average interference. To develop a tractable and easy-to-use approximation, we also derive the nominal average interference power by modeling each PV cell as a circle of area equal to the average area of a typical cell, termed as the \textit{Average Area-Equivalent Circular Cell (AAECC) model}. Finally, we provide approximations, that lead to the development of new system design insights regarding the worst-case exclusion zone radius, scaling laws, and the gap between the worst-case and nominal average interference.% values.

%% Tell the technical challenges involved in using the dominant interferer method
\subsubsection*{Interference Distribution}
%Prior works in \cite{Yang_BS_Downtilt_DL_Cellular_TWC_2019}, \cite{Rebato_Park_Zorzi_AntPattrn_mmWave_TCOM_2019}, \cite{Kim_Visotsky_5G_mmWave_Coexist_JSAC_2017} focus on the analysis of average interference and SINR coverage of the typical user and use the Laplace transform or moment matching methods \cite{ElSawy_Haenggi_Comm_Surv_Tut_2013} in their analysis.  
Since we are interested in understanding the worst-case radar performance in near-LoS channel conditions, we need a different approach compared to those presented in \cite{Rebato_Park_Zorzi_AntPattrn_mmWave_TCOM_2019}, \cite{Yang_BS_Downtilt_DL_Cellular_TWC_2019}. In this paper, we use the dominant interferer approach \cite{Parida_Dhillon_CBRS_Access_2017}, \cite{Vishnu_Dhillon_UAVs_TCOM_2017} due to its tractability, and the property that it can be used to upper bound the interference power's CDF \cite{Vishnu_Dhillon_UAVs_TCOM_2017}. Among the two cell-shape models, the CBC model is intractable since it induces correlation in the circumradii of adjacent PV cells. Therefore, we use the AAECC model to derive an approximate but accurate expression for the interference distribution. Even then, this approach is non-trivial since receive beamforming at the radar URA distorts the radial symmetry of the equi-interference contour, unlike the case of omnidirectional reception where it is a circle \cite{Heath_Kontouris_Bai_Dom_Int_TSP_2013}. A novel intermediate result is the derivation of the \textit{equal interference contour}, which resembles a 2D slice of the 3D radar beamforming pattern, when the exclusion zone radius is much larger than the BS antenna deployment height. We use this to characterize the total interference distribution in terms of that of the \textit{farthest contour distance from the radar}, and provide insights regarding the accuracy of this method. 
\subsubsection*{Radar Performance Metrics}
Under a Gaussian signaling scheme, we characterize the radar detection and false alarm probabilities averaged over the BS point process \cite{Chen_Dhillon_Liu_QoS_D2D_StochGeom_TCOM_2019} in a quasi-static target scenario. We derive the exact probabilities, and develop accurate approximations using the dominant interferer method and the central limit theorem. %We present extensive numerical results to validate the accuracy of our analytical expressions. 
Performance trends, and system design insights are discussed using radar receiver operating characteristic (ROC) curves, and we demonstrate a simple application of our analytical expressions to design the minimum exclusion zone radius in a radar-massive MIMO spectrum sharing scenario. 
\vspace{-10pt}
\section{System Model} \label{Sec_Sys_Model}
We consider the radar-massive MIMO spectrum sharing scenario shown in Fig. \ref{Fig1_Radar_mMIMO_SpecShare_Illustration}. The radar is the primary user (PU), equipped with a $N^{(\mathtt{rad})}_\mathtt{az} \times N^{(\mathtt{rad})}_\mathtt{el}$ uniform rectangular array (URA) with $\tfrac{\lambda}{2}$-spacing, mounted at a height of $h_\mathtt{rad}$ m. The massive MIMO downlink is the secondary user (SU), with each BS serving $K$ users with equal power allocation using multi-user MIMO (MU-MIMO). Each BS is equipped with a $N^{(\mathtt{BS})}_\mathtt{az} \times N^{(\mathtt{BS})}_\mathtt{el}$ URA with $\tfrac{\lambda}{2}$-spacing, mounted at a height of $h_\mathtt{BS}$ m. The subscripts $\mathtt{az}$ ($\mathtt{el}$) are used to denote the azimuth (elevation) elements respectively, and superscripts $\mathtt{rad}$ ($\mathtt{BS}$) denote the radar (BS) antenna elements respectively. The radar is assumed to be located at the origin, and protected from SU interference by a \textit{circular exclusion zone} of radius $r_\mathtt{exc}$. The exclusion zone is chosen to be circular since there is no coordination between the cellular network and the radar system, and the radar is assumed to search for a target in the azimuth $[-\tfrac{\pi}{2}, \tfrac{\pi}{2})$, as shown in Fig. \ref{Fig_Radar_mMIMO_SpecSharing_Illustrate_Assumptions}. The spatial distribution of the massive MIMO BSs is modeled as a homogeneous PPP $\mathbf{\Phi_{BS}}$, of intensity $\lambda_\mathtt{BS}$. The set of locations in the exclusion zone of radius $r_\mathtt{exc}$ is denoted by the set $\mathcal{A}_\mathtt{exc} = \{(x,y)|(x^2 + y^2) \leq r^2_\mathtt{exc}\} \subset \mathbb{R}^2$ and hence, the BS locations outside the exclusion zone in the azimuth $[-\tfrac{\pi}{2}, \tfrac{\pi}{2})$ (henceforth termed as the `interference region') is denoted by the set $\mathbf{\Phi_{int}} = \{(x,y)| x^2 + y^2 \geq r^2_\mathtt{exc}, -\tfrac{\pi}{2} \leq \theta \leq \tfrac{\pi}{2}\}$. In this work, we are interested in characterizing the \textit{worst-case interference} to the radar due to the cellular network. The users' spatial distribution is not necessary to characterize the worst-case interference, because the highest possible interference power is transmitted to the radar when each BS schedules the cell-edge user in the cell. We term this as the `cell-edge beamforming model,' where we implicitly assume that a cell-edge user is always served in each cell.

\begin{figure*}[t]
	\centering
	\begin{subfigure}[t]{0.47\textwidth}
		\raggedleft
		\includegraphics[width=2.8in]{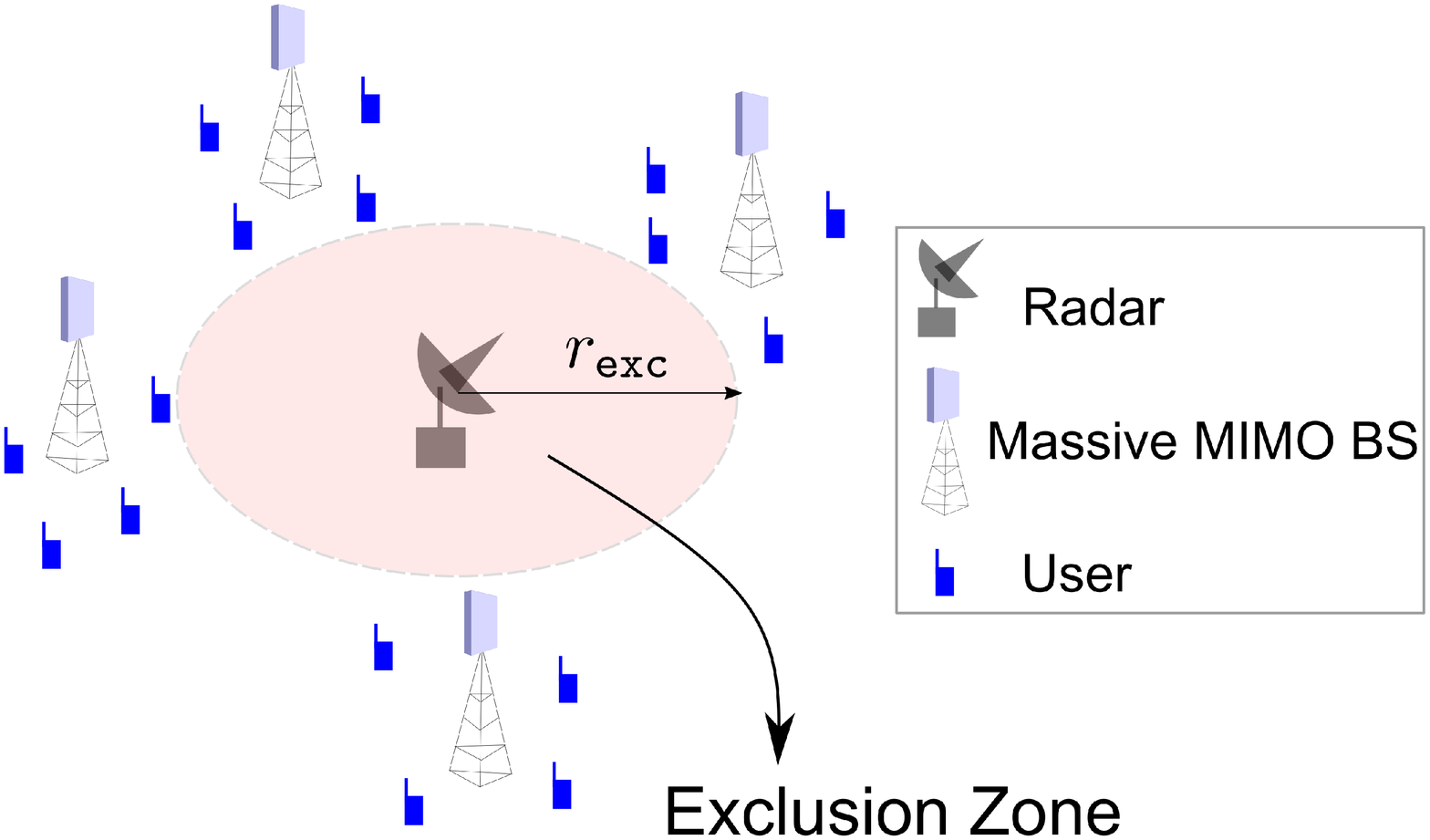}
		\caption{}
		\label{Fig1_Radar_mMIMO_SpecShare_Illustration}
	\end{subfigure}
	~
	\begin{subfigure}[t]{0.48\textwidth}
		\centering
		\includegraphics[width=3.2in]{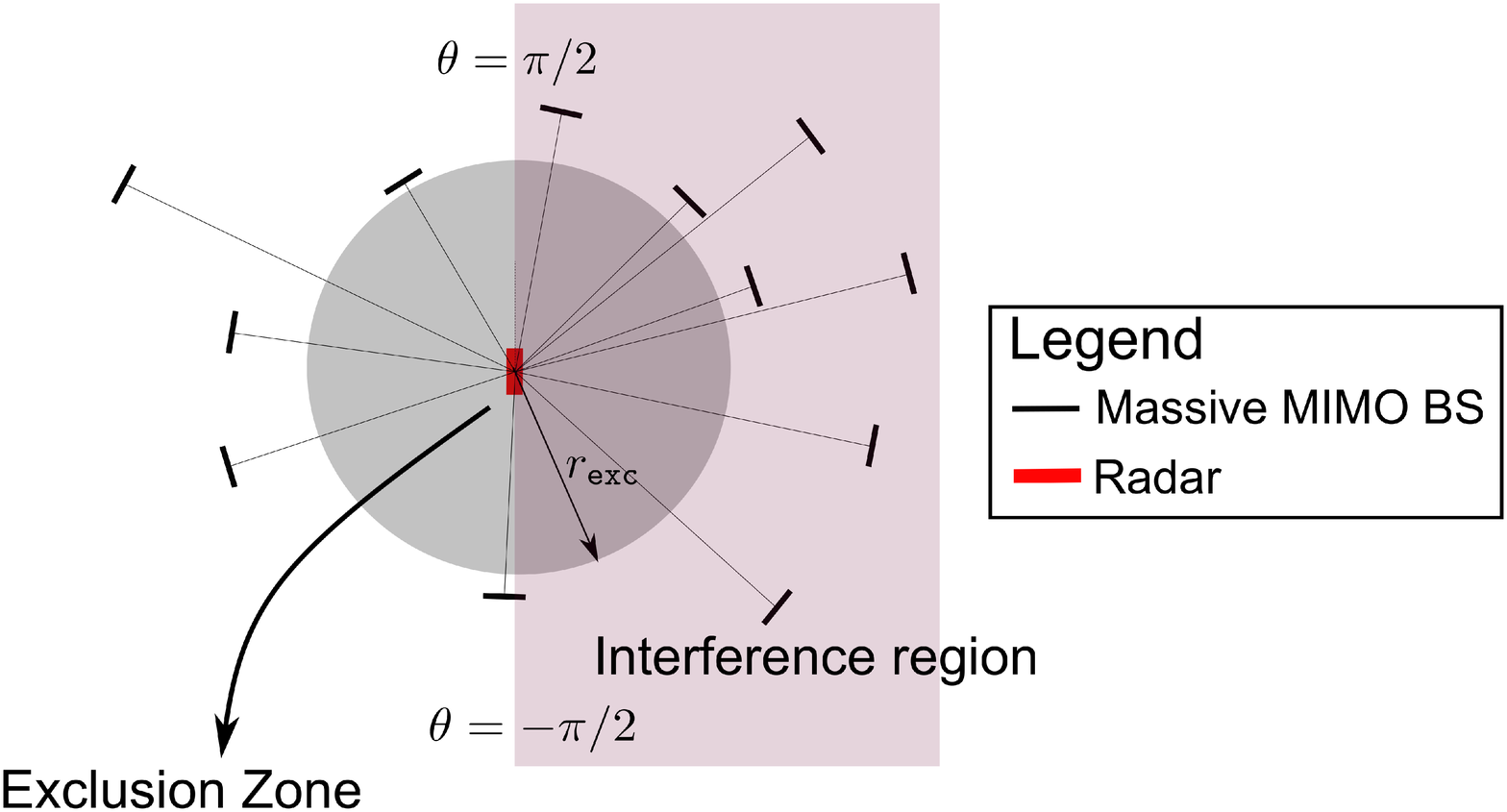}
		\caption{}
		\label{Fig1_Radar_mMIMO_SpecShare_TopView}		
	\end{subfigure}
	~
	\begin{subfigure}[t]{0.9\textwidth}
		\centering
		\includegraphics[width=4.0in]{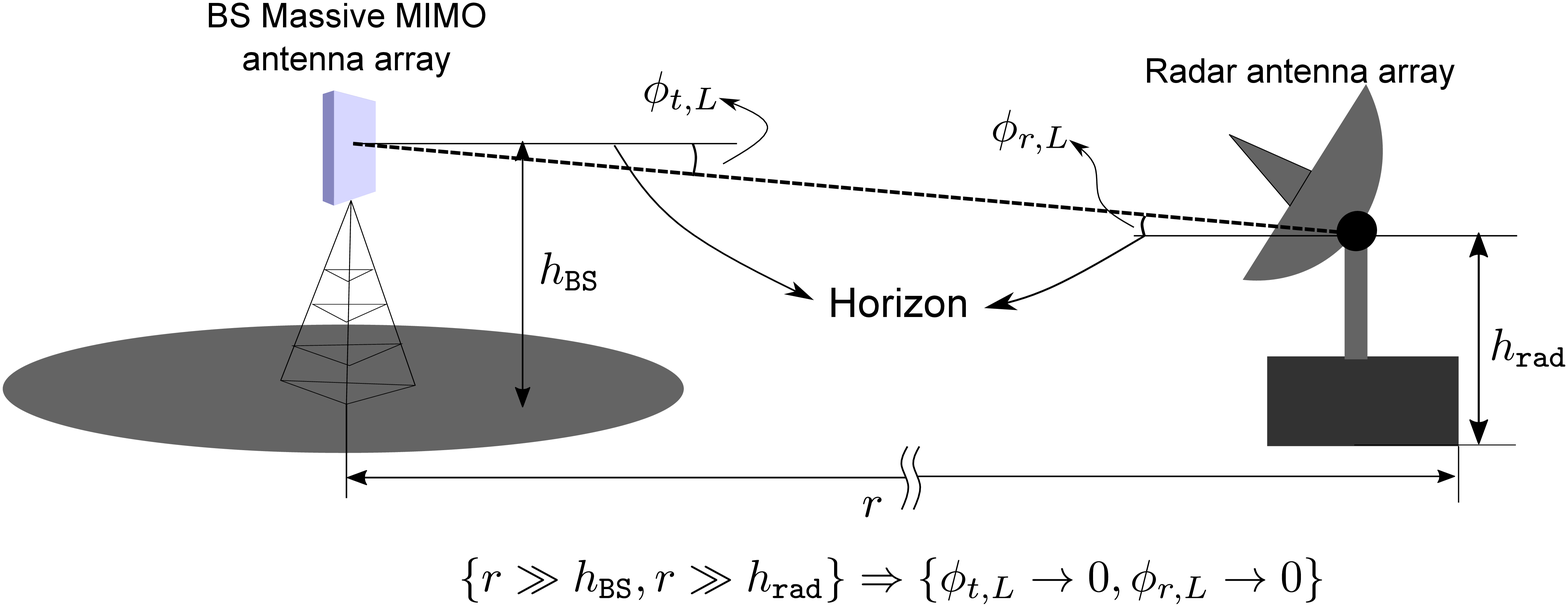}
		\caption{}
		\label{Fig1_Radar_mMIMO_SpecShare_Assumptions}		
	\end{subfigure}
	\caption{(a) Illustration of the radar-massive MIMO spectrum sharing scenario. The radar is protected from massive MIMO downlink interference by an exclusion zone of radius $r_\mathtt{exc}$. (b) Top View: the boresight of each BS is aligned along the direction of the radar, and the radar receives interference from the azimuth $[-\pi/2, \pi/2]$ depicted by the shaded region. (c) The LoS component has elevation angle of departure ($\phi_{t,L}$) and arrival ($\phi_{r,L}$) close to $0^\circ$, i.e. the horizon. In our convention, $-\pi/2 \leq \phi < 0^\circ$ for elevation angles above the horizon, and $0 < \phi \leq \pi/2$ for elevation angles below the horizon. }
	\label{Fig_Radar_mMIMO_SpecSharing_Illustrate_Assumptions}
\end{figure*}

\vspace{-10pt}
\subsection{Channel Model}
In quasi-stationary conditions, the channel between each BS and the radar is given by \cite{3GPP5GNR_ChanModels}
\begin{align}
	\label{Doub_dir_chan_model_f}
	\mathbf{H_R} = & \sqrt{\tfrac{\beta(d)}{1 + K_R}} \Big( \sqrt{K_R} \mathbf{a} (\theta_{t,L}, \phi_{t,L}) \mathbf{a}^H (\theta_{r,L}, \phi_{r,L}) + \nonumber \\
	&  \sqrt{\tfrac{1}{N_c}} \sum\limits_{i=1}^{N_c} \gamma_i \mathbf{a} (\theta_{t,i}, \phi_{t,i}) \mathbf{a}^H (\theta_{r,i}, \phi_{r, i}) \Big),  
\end{align}
where $\beta(d)= PL(r_0) d^{-\alpha}$ is the path loss at distance $d$, $PL(r_0)$ is the path-loss at reference distance $r_0$, $\alpha$ is the path-loss exponent ($\alpha > 2$), $d$ is the 3D distance between the BS and the radar, and $N_c$ is the number of discrete multipath components (MPCs). The Rician factor $K_R\gg 1$, where propagation is dominated by the LoS component\footnote{Such propagation scenarios are observed in (a) coastal deployments (for e.g., BSs sharing spectrum with a naval radar), (b) terrestrial deployments in flat rural/suburban terrain (for e.g., terrestrial BSs sharing spectrum with a terrestrial radar), and (c) radar systems that use ground clutter-suppressing techniques \cite{Melvin_STAP_TAESM_2004}.}. In addition, the random small-scale fading amplitude satisfies $\mathbb{E}[\gamma_i] = 0$ and $\mathbb{E}[|\gamma_i|^2] = 1$. The azimuth and elevation angles of arrival (departure) of the $i^{th}$ MPC at the radar (from the BS) is denoted by $\theta_{r,i}$ ($\theta_{t,i}$) and $\phi_{r,i}$ ($\phi_{t,i}$), respectively. Similarly, the azimuth and elevation angles of departure (arrival) of the LoS component are given by $\theta_{t,L}$ ($\theta_{r,L}$) and $\phi_{t,L}$ ($\phi_{r,L}$), respectively, as shown in Fig. \ref{Fig1_Radar_mMIMO_SpecShare_Assumptions}. The steering vector $\mathbf{a}(\theta_t, \phi_t) \in \mathbb{C}^{M_\mathtt{BS} \times 1}$ (BS), and $\mathbf{a}(\theta_r, \phi_r) \in \mathbb{C}^{M_\mathtt{rad} \times 1}$ (radar) is defined in Appendix \ref{App1_Proof_BFGain_UpBound}, where $M_\mathtt{BS} = N^{(\mathtt{BS})}_\mathtt{az} \times N^{(\mathtt{BS})}_\mathtt{el}$ and $M_\mathtt{rad} = N^{(\mathtt{rad})}_\mathtt{az} \times N^{(\mathtt{rad})}_\mathtt{el}$.
\vspace{-10pt}
\subsection{Massive MIMO Downlink Beamforming Model}
Each massive MIMO cell has $K$ clusters/virtual sectors with mutually disjoint angular support. From each cluster, only one user is co-scheduled and served on the massive MIMO downlink using joint spatial division multiplexing (JSDM) \cite{JSDM_Adhikary_Caire_TIT_2013}. Hence at any given point of time, we assume that $K$ users are co-scheduled from $K$ different clusters. We consider a highly spatially correlated downlink channel, given by the one-ring model as $\mathbf{h_i} = \sqrt{\beta_i}\mathbf{U_i} \mathbf{\Lambda}^{1/2}_\mathbf{i} \mathbf{z_i} \in \mathbb{C}^{M_\mathtt{BS} \times 1}$ with channel covariance $\mathbf{R_i} = \mathbf{U_i} \mathbf{\Lambda}_\mathbf{i} \mathbf{U}^H_\mathbf{i}$ \cite{JSDM_Adhikary_Caire_TIT_2013}, where $\beta_i$ is the large-scale pathloss for the $i^{th}$ user, $\mathbf{U_i} \in \mathbb{C}^{M_\mathtt{BS} \times r_i}$ is the orthonormal matrix of eigenvectors, $\mathbf{\Lambda_i} \in \mathbb{R}^{r_i \times r_i}$ is the diagonal matrix of eigenvalues, and $\mathbf{z_i} \sim \mathcal{CN}(\mathbf{0, I_{r_i}}) \in \mathbb{C}^{r_i \times 1}$ is a complex Gaussian random vector, where $r_i \ll M_\mathtt{BS}$ is the rank of in the covariance matrix $\mathbf{R_i}$ in high spatially correlated downlink channel conditions \cite{JSDM_Adhikary_Caire_TIT_2013}. For simplicity, we consider that all users in the network have the same channel rank. The received signal $\mathbf{y} \in \mathbb{C}^{K \times 1}$ can be written as
%\begin{align}
%	\label{per_UE_rx_sig}
%	y_k & = \mathbf{h}^H_\mathbf{k} \mathbf{W_{RF} w_{BB,k}} d_k + \sum_{\substack{i=1 \\
%	i\neq k}}^{K} \mathbf{h}^H_\mathbf{k} \mathbf{W_{RF} w_{BB,i}} d_i + n_k,
%\end{align}
\begin{align}
	\label{per_UE_rx_sig}
	\mathbf{y} & = \mathbf{H}^H \mathbf{W_{RF} W_{BB} d}  + \mathbf{n},
\end{align}
where $\mathbf{H} = [\mathbf{h_1}\ \mathbf{h_2}\ \cdots \mathbf{h_K}] \in \mathbb{C}^{M_\mathtt{BS} \times K}$ is the channel matrix, $\mathbf{W_{RF}} = [\mathbf{w_{RF,1}}\ \ \mathbf{w_{RF,2}}\cdots \mathbf{w_{RF,K}}] \in \mathbb{C}^{M_\mathtt{BS} \times K}$ is the RF beamformer that groups user clusters with disjoint angular support using nearly orthogonal beams, and $\mathbf{W_{BB}} = [\mathbf{w_{BB,1}}\ \cdots  \mathbf{w_{BB,K}}]$ $\in \mathbb{C}^{K \times K}$ is the baseband precoder \cite{JSDM_Adhikary_Caire_TIT_2013}. If the azimuth and elevation angular support of the $k^{th}$ user cluster is given by $\Theta_k = [\theta^{(\mathtt{min})}_k, \theta^{(\mathtt{max})}_k]$ and $\Phi_k = [\phi^{(\mathtt{min})}_k, \phi^{(\mathtt{max})}_k]$, then without loss of generality we consider that the RF beamformer is given by $\mathbf{w_{RF,k}} = \tfrac{1}{\sqrt{M_\mathtt{BS}}}\mathbf{a}(\theta_k, \phi_k)$, where $\theta_k = (\theta^{(\mathtt{min})}_k + \theta^{(\mathtt{max})}_k)/2$ and $\phi_k = (\phi^{(\mathtt{min})}_k + \phi^{(\mathtt{max})}_k)/2$. The data $\mathbf{d}= [d_1\ d_2\ \cdots\ d_K]^T \in \mathbb{C}^{K \times 1}$, such that $\mathbb{E}[\mathbf{d}]=\mathbf{0}$ and $\mathbb{E}[\mathbf{dd}^H] = \tfrac{P_{BS}}{K} \mathbf{I}$, where $d_k$ is the symbol intended for the $k^{th}$ UE and $P_{BS}$ is the total transmit power per BS. The noise $\mathbf{n} \in \mathbb{C}^{K \times 1}$ is spatially white with $\mathbf{n} \sim \mathcal{CN} (\mathbf{0}, \sigma^2_n \mathbf{I})$.
\begin{proposition}\label{Prop_WBB}
%When users in clusters with disjoint angular support are co-scheduled, the RF beamforming (prebeamforming) vector in the asymptotic regime satisfies $\mathbf{U}^H_\mathbf{i} \mathbf{w_{RF,j}} \approx \mathbf{0}$ for $i \neq j$ \cite{JSDM_Adhikary_Caire_TIT_2013}. 
For the massive MIMO BS in the asymptotic regime, the baseband precoding matrix $\mathbf{W_{BB}} \approx \mathbf{I}$ for Zero-Forcing (ZF) and Maximum Ratio Transmission (MRT)\footnote{This result applies to other beamforming schemes such as Regularized Zero-Forcing (RZF) and Minimum Mean Square Error (MMSE). The proof involves additional steps but follows the same procedure presented below.}, when $K$ users from different clusters with mutually disjoint angular support are served. 
\end{proposition}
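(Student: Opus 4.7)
The plan is to exploit the JSDM two-stage architecture: under mutually disjoint angular supports, the effective baseband channel $\mathbf{H}^H \mathbf{W_{RF}} \in \mathbb{C}^{K \times K}$ becomes asymptotically diagonal as $M_\mathtt{BS}$ grows, so both the ZF and MRT baseband precoders collapse to diagonal matrices, and after per-user column normalization, to the identity.

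First, I would write the $(k,j)$ entry of the effective channel. Substituting the one-ring model $\mathbf{h_k} = \sqrt{\beta_k}\,\mathbf{U_k} \mathbf{\Lambda_k}^{1/2} \mathbf{z_k}$ and the RF beamformer $\mathbf{w_{RF,j}} = M_\mathtt{BS}^{-1/2}\,\mathbf{a}(\theta_j, \phi_j)$ gives
\begin{align*}
	[\mathbf{H}^H \mathbf{W_{RF}}]_{kj} = \sqrt{\tfrac{\beta_k}{M_\mathtt{BS}}}\;\mathbf{z_k}^H \mathbf{\Lambda_k}^{1/2}\, \mathbf{U_k}^H\,\mathbf{a}(\theta_j,\phi_j),
\end{align*}
so the task reduces to showing $\mathbf{U_k}^H\,\mathbf{a}(\theta_j,\phi_j) \to \mathbf{0}$ for $j \neq k$. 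I would invoke the asymptotic eigenstructure of the one-ring covariance matrix used in the JSDM literature: when $r_k \ll M_\mathtt{BS}$, the columns of $\mathbf{U_k}$ are well approximated by URA steering vectors supported in $\Theta_k \times \Phi_k$, and hence are asymptotically orthogonal to the single steering vector $\mathbf{a}(\theta_j,\phi_j)$ aimed at cluster $j$ whenever $\Theta_k \times \Phi_k$ and $\Theta_j \times \Phi_j$ are disjoint. Substituting back, every off-diagonal entry of $\mathbf{H}^H \mathbf{W_{RF}}$ vanishes in the large-array limit, leaving a diagonal matrix $\mathbf{D}$ whose $(k,k)$ entry is the effective single-user beamforming gain.

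Plugging this diagonal structure into the two precoders, ZF gives $\mathbf{W_{BB}} \propto \mathbf{D}^{-1}$ and MRT gives $\mathbf{W_{BB}} \propto \mathbf{D}^H$; both are diagonal. The standard per-column unit-norm normalization then reduces every diagonal entry to a unit-modulus scalar, which can be absorbed into the zero-mean data symbol of the corresponding user without altering its statistics, yielding $\mathbf{W_{BB}} \approx \mathbf{I}$. The main obstacle is the asymptotic orthogonality step: making it rigorous requires the 2D extension of Szeg\H{o}-type limit theorems for the block-Toeplitz URA covariance together with careful control of eigenvector contributions near the boundary of the angular support. Once that is granted, the remainder is bookkeeping, and the extension alluded to in the footnote to RZF or MMSE follows by noting that both additionally introduce a diagonal regularization in the same asymptotic regime.
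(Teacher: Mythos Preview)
Your argument is essentially the same as the paper's: both show that $\mathbf{H}^H\mathbf{W_{RF}}$ is asymptotically diagonal via $\mathbf{U_k}^H\mathbf{w_{RF,j}}\to\mathbf{0}$ for $j\neq k$ under disjoint angular supports, deduce that the ZF and MRT baseband precoders are diagonal, and then use a per-user normalization to reduce to the identity. The paper states the orthogonality step by citation to the JSDM paper and phrases the normalization as the per-user power constraint $\mathbb{E}[\|\mathbf{W_{RF}}\mathbf{w_{BB,i}}d_i\|_2^2]=P_{BS}/K$ (which, combined with $\mathbf{W_{RF}}^H\mathbf{W_{RF}}\approx\mathbf{I}$, is exactly your per-column unit-norm condition), whereas you spell out the eigenstructure argument and absorb the residual unit-modulus phase into the data symbol; these are cosmetic differences, not substantive ones.
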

\begin{proof}
(Sketch) The MRT and ZF precoders are $\mathbf{W}^{(\mathtt{MRT})}_\mathbf{BB} = \mathbf{W}^H_\mathbf{RF} \mathbf{H}$ and
$\mathbf{W}^{(\mathtt{ZF})}_\mathbf{BB} = (\mathbf{H}^H \mathbf{W}_\mathbf{RF})^{-1}$ respectively. In the asymptotic regime $\mathbf{W}^H_\mathbf{RF} \mathbf{W_{RF}} \approx \mathbf{I}$ \cite{JSDM_Adhikary_Caire_TIT_2013}. For users in clusters with mutually disjoint angular support, $\mathbf{U}^H_\mathbf{i} \mathbf{w_{RB,j}} \approx \mathbf{0}, i \neq j$ \cite{JSDM_Adhikary_Caire_TIT_2013}. Therefore, $\mathbf{H}^H \mathbf{W_{RF}} \approx \mathbf{\Upsilon} = \text{diag}[\upsilon_1\ \upsilon_2 \cdots \upsilon_K]$. Since $\mathbb{E}[\mathbf{dd}^H] = \tfrac{P_{BS}}{K} \mathbf{I}$, when the per-user power constraint $\mathbb{E}[\| \mathbf{W_{RF} w_{BB,i} } d_i \|^2_2] = \tfrac{P_{BS}}{K}$ is imposed, we obtain the desired result.% for ZF and MRT precoders. 
\end{proof}
\begin{remark}
The above is true when $N^{(\mathtt{BS})}_\mathtt{el}, N^{(\mathtt{BS})}_\mathtt{az} \rightarrow \infty$. In the case of finite number of antenna elements, we consider a scheduler where the BS co-schedules $K$ users from clusters such that the above approximation is accurate. 
\end{remark}
\vspace{-10pt}
\subsection{Interference at the Radar due to a Single BS}
The radar is assumed to be searching/tracking a target above the horizon ($\phi < 0$) using a receive beamformer $\mathbf{w_{rad}} \in \mathbb{C}^{M_\mathtt{rad} \times 1}$. The interference signal prior to beamforming is $\mathbf{y_{rad}} = \mathbf{H}^H_{\mathbf{R}} \mathbf{W_{RF}} \mathbf{W_{BB}} \mathbf{d}$, where $\mathbf{H_R}$ is the high-${K_R}$ Rician channel between the BS and the radar from (\ref{Doub_dir_chan_model_f}). Upon receive beamforming, the interference signal is given by $i_\mathtt{rad} = \mathbf{w}^H_{\mathbf{rad}} \mathbf{H}^H_{\mathbf{R}} \mathbf{W_{RF} \mathbf{W_{BB}} d}$. Using equation (\ref{Doub_dir_chan_model_f}) and simplifying, we get
\begin{align}
%\label{Int_at_radar_3}
i_\mathtt{rad} = &  \sqrt{\tfrac{\beta(d)}{K_R + 1}}\Big( \sqrt{K_R G_\mathtt{rad}(\theta_{r,L}, \phi_{r,L})} e^{-j \alpha_0} \mathbf{a}^H(\theta_{t,L}, \phi_{t,L}) +  \nonumber \\
& \sum\limits_{i=1}^{N_c} \sqrt{\tfrac{G_\mathtt{rad}(\theta_{r,i}, \phi_{r,i})}{ N_c}} \gamma'_i \mathbf{a}^H(\theta_{t,i}, \phi_{t,i}) \Big) \mathbf{W_{RF} W_{BB}d},
\end{align}
where $\gamma'_i = \gamma^*_i e^{-j \alpha_i}$, the radar beamforming gain $G_\mathtt{rad}(\theta_j, \phi_j) = |\mathbf{w}^H_\mathbf{rad} \mathbf{a}(\theta_j, \phi_j)|^2$, and $\alpha_0$ is the residual phase. The specular component can be ignored if $G_\mathtt{rad}(\theta_{r,L}, \phi_{r,L}) \gg G_\mathtt{rad}(\theta_{r,i}, \phi_{r,i})$. For a tractable worst-case analytical model, we make the following assumptions.
%% Assumptions are written below
\begin{assumption}
(LoS beamforming gain dominance) The radar is scanning above the horizon with $\mathbf{w_{rad}} = \frac{\mathbf{a}(\theta_\mathtt{rad}, \phi_\mathtt{rad})}{\sqrt{M_\mathtt{rad}}}$ such that $G_\mathtt{rad}(\theta_{r,L}, \phi_{r,L}) \gg  G_\mathtt{rad}(\theta_{r,i}, \phi_{r,i})\ \forall\ 1\leq i \leq N_c$.%i=1,2,\cdots, N_c$. %In addition, the Rician factor $K_R \ggg 1$, and hence the received power due to NLoS components will be negligible. 
\end{assumption}
\begin{assumption}\label{BoresightAssumption}
(Boresight assumption) Boresight of the antenna array of each massive MIMO BS is aligned along the direction of radar ($\theta_{t,L}=0$)
as shown in Fig. \ref{Fig1_Radar_mMIMO_SpecShare_TopView}. 
\end{assumption}
\begin{assumption}
The cellular downlink is exactly co-channel with the radar system, and radar and cellular operating bandwidths are equal. Hence, the frequency-dependent rejection (FDR) factor of the radar is unity\footnote{The FDR is dependent on the radar architecture, interfering signal's spectrum, and is independent of other parameters. The interference power at the radar is inversely proportional to the FDR. Interested readers are referred to \cite{Hessar_Roy_Radar_WiFi_TAES_2016} for more details.}.
\end{assumption}
\begin{assumption}\label{Scheduler_support_assumption}
In each cell, the scheduler allocates resources to users in different clusters, where all but one cluster has disjoint angular support with the radar's azimuth w.r.t. the BS.
\end{assumption}
The boresight assumption is used for ease of exposition. As we will discuss in Section \ref{Sec_Int_at_Radar_mMIMO_Network} and Appendix \ref{App1_Proof_BFGain_UpBound}, the radar's azimuth w.r.t. the BS boresight does not impact the worst-case interference analysis. Based on the above assumptions, we have the following lemma.
\begin{lemma}\label{Lemma_DIUC}
The interference to the radar from each BS is only due to data transmissions towards a single cluster whose angular support overlaps with the boresight of the URA. 
\end{lemma}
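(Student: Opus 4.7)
The plan is to track the interference expression $i_\mathtt{rad}$ through the chain of simplifications allowed by the preceding assumptions and by Proposition \ref{Prop_WBB}, and isolate the only term that survives. First I would invoke Assumption 1 (LoS beamforming gain dominance), which lets me drop the $N_c$ NLoS multipath components in $\mathbf{H_R}$, leaving only the specular (LoS) outer product. Combined with Proposition \ref{Prop_WBB}, which gives $\mathbf{W_{BB}} \approx \mathbf{I}$ in the asymptotic regime, the interference reduces (up to the constant prefactor involving $K_R$, $\beta(d)$, and $G_\mathtt{rad}(\theta_{r,L}, \phi_{r,L})$) to
\begin{align*}
i_\mathtt{rad} \;\propto\; \mathbf{a}^H(\theta_{t,L}, \phi_{t,L}) \, \mathbf{W_{RF}} \, \mathbf{d} \;=\; \sum_{k=1}^{K} \bigl[\mathbf{a}^H(\theta_{t,L}, \phi_{t,L}) \mathbf{w_{RF},k}\bigr] \, d_k.
\end{align*}

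Next I would analyze each inner product $\mathbf{a}^H(\theta_{t,L}, \phi_{t,L}) \mathbf{w_{RF,k}} = \tfrac{1}{\sqrt{M_\mathtt{BS}}} \mathbf{a}^H(\theta_{t,L}, \phi_{t,L}) \mathbf{a}(\theta_k, \phi_k)$. By Assumption \ref{BoresightAssumption}, the LoS azimuth from the BS to the radar satisfies $\theta_{t,L} = 0$, i.e.\ the radar lies on the boresight. By Assumption \ref{Scheduler_support_assumption}, the scheduler co-schedules $K$ users from clusters whose angular supports $\{(\Theta_k, \Phi_k)\}_{k=1}^{K}$ are mutually disjoint, and all but one of these supports is disjoint from the radar's direction $(\theta_{t,L}, \phi_{t,L}) = (0, \phi_{t,L})$ as seen from the BS. Call the unique cluster whose support contains this direction $k^\star$.

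For any $k \neq k^\star$, the support $(\Theta_k, \Phi_k)$ is angularly separated from $(\theta_{t,L}, \phi_{t,L})$, so by the standard JSDM near-orthogonality property of array steering vectors in the asymptotic regime (cf.\ \cite{JSDM_Adhikary_Caire_TIT_2013}) we have $\mathbf{a}^H(\theta_{t,L}, \phi_{t,L}) \mathbf{a}(\theta_k, \phi_k) \approx 0$. Therefore all terms in the sum vanish except the $k = k^\star$ term, yielding
\begin{align*}
i_\mathtt{rad} \;\approx\; \sqrt{\tfrac{K_R \beta(d) G_\mathtt{rad}(\theta_{r,L}, \phi_{r,L})}{K_R + 1}} \, e^{-j\alpha_0} \, \tfrac{1}{\sqrt{M_\mathtt{BS}}}\, \mathbf{a}^H(\theta_{t,L}, \phi_{t,L}) \mathbf{a}(\theta_{k^\star}, \phi_{k^\star}) \, d_{k^\star},
\end{align*}
which depends only on the data symbol $d_{k^\star}$ transmitted to the single cluster whose angular support overlaps the boresight direction toward the radar. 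This establishes the lemma.

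The main obstacle is making the near-orthogonality step precise: strictly speaking it is an asymptotic statement ($N^{(\mathtt{BS})}_\mathtt{az}, N^{(\mathtt{BS})}_\mathtt{el} \to \infty$), and for finite arrays it holds only under the scheduler-aware clustering flagged in the Remark following Proposition \ref{Prop_WBB}. I would handle this by invoking the same asymptotic regime already used to justify $\mathbf{W_{BB}} \approx \mathbf{I}$, so that both simplifications rest on the same approximation and the lemma inherits exactly the same accuracy regime as Proposition \ref{Prop_WBB}. The boresight assumption is cosmetic here (it only fixes $\theta_{t,L} = 0$); the argument would go through identically for any fixed radar azimuth, which is consistent with the author's comment that the azimuth of the radar does not affect the worst-case analysis.
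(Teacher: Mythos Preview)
Your proposal is correct and takes essentially the same approach as the paper: the core step in both is the asymptotic near-orthogonality $\mathbf{a}^H(\theta_{t,L},\phi_{t,L})\mathbf{w_{RF,j}}\approx 0$ for $j\neq k^\star$ from \cite{JSDM_Adhikary_Caire_TIT_2013}, leaving only the single cluster whose azimuth support contains the boresight. The paper's proof is a two-line sketch that states this orthogonality directly (conditioning only on $\Theta_k\cap\{0^\circ\}\neq\emptyset$), whereas you additionally unpack how Assumption~1 and Proposition~\ref{Prop_WBB} reduce $i_\mathtt{rad}$ to the sum $\sum_k[\mathbf{a}^H(\theta_{t,L},\phi_{t,L})\mathbf{w_{RF,k}}]d_k$ before applying the orthogonality---this extra scaffolding is not wrong, but the paper defers that machinery to the proof of Theorem~1 rather than Lemma~\ref{Lemma_DIUC} itself.
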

\begin{proof}
Let the $K$ clusters have azimuth and elevation angles of support given by $\Theta_k$ and $\Phi_k$ respectively, for $1 \leq k \leq K$. In the asymptotic regime, if there is only one $k$ such that $\Theta_k \cap \{ 0^\circ \} \neq \emptyset$, then we get $\mathbf{a}^H(\theta_{t,L}, \phi_{t,L}) \mathbf{w_{RF,j}} \approx 0$ for $j \neq k$ and $\mathbf{a}^H(\theta_{t,L}, \phi_{t,L}) \mathbf{w_{RF,k}} \neq 0$ \cite{JSDM_Adhikary_Caire_TIT_2013}. The cluster that has its angular support overlapping with the BS boresight is termed as the ``Dominant Interfering User Cluster'' (DIUC).
\end{proof}

Based the above, we have the following key result. 
\begin{theorem}
The worst-case average interference power at the radar due to the DIUC is %given by
\begin{align}
\label{WorstCaseAvgIntPow_SingleBS}
%\label{Worst_case_Irad}
\bar{I}_\mathtt{rad} < I^{(\mathtt{w})}_\mathtt{rad} = & \frac{\beta(d) G_\mathtt{rad}(\theta_\mathtt{rad}, \phi_\mathtt{rad},\theta_{r,L}, \phi_{r,L}) P_{BS}}{K} \cdot \nonumber \\
& \frac{|\mathbf{a}^H(0, \phi_{t,L}) \mathbf{a}(\theta_k, \phi_k)|^2 }{M_\mathtt{BS}}, 
\end{align}
where $G_\mathtt{rad}(\theta_\mathtt{rad}, \phi_\mathtt{rad},\theta_{r,L}, \phi_{r,L}) = \frac{|\mathbf{a}^H(\theta_\mathtt{rad}, \phi_\mathtt{rad})  \mathbf{a}(\theta_{r,L}, \phi_{r,L})|^2}{M_\mathtt{rad}}$.
\end{theorem}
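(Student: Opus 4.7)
The plan is to compute $\bar{I}_\mathtt{rad} = \mathbb{E}[|i_\mathtt{rad}|^2]$ directly from the expression for $i_\mathtt{rad}$ given just before the theorem, then apply Assumptions 1--4 together with Proposition 1 and Lemma 1 to simplify the resulting expression and extract the strict upper bound. The expectation is taken jointly over the random small-scale fading amplitudes $\gamma'_i$ and the transmitted symbol vector $\mathbf{d}$.

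First, I would expand $|i_\mathtt{rad}|^2$ by writing $i_\mathtt{rad}$ as the sum of a deterministic (up to phase) LoS term and $N_c$ NLOS terms whose random amplitudes $\gamma'_i = \gamma_i^* e^{-j\alpha_i}$ are zero-mean (since $\mathbb{E}[\gamma_i]=0$) and mutually uncorrelated with unit variance. Taking the expectation annihilates every LoS$\times$NLOS and NLOS$\times$NLOS ($i \neq j$) cross term, leaving the sum of squared magnitudes: one LoS contribution scaled by $\tfrac{K_R \beta(d)}{K_R+1} G_\mathtt{rad}(\theta_{r,L},\phi_{r,L})$, and $N_c$ NLOS contributions each scaled by $\tfrac{\beta(d)}{N_c(K_R+1)} G_\mathtt{rad}(\theta_{r,i},\phi_{r,i})$. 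By Assumption 1, every NLOS receive-beamforming gain is negligible compared to the LoS one, so the NLOS sum is dropped.

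Next, I would evaluate the remaining LoS quadratic form $\mathbb{E}[|\mathbf{a}^H(\theta_{t,L},\phi_{t,L})\mathbf{W_{RF}W_{BB}d}|^2]$. Applying Proposition 1 to set $\mathbf{W_{BB}} \approx \mathbf{I}$ and using $\mathbb{E}[\mathbf{dd}^H] = \tfrac{P_{BS}}{K}\mathbf{I}$, this reduces to $\tfrac{P_{BS}}{K}\sum_{j=1}^K |\mathbf{a}^H(\theta_{t,L},\phi_{t,L})\mathbf{w_{RF,j}}|^2$. By Lemma 1 and Assumption 4, only the DIUC (cluster $k$ whose angular support intersects the BS boresight direction) contributes in the asymptotic regime, so all other $K-1$ terms vanish. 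Substituting the normalized RF beamformer $\mathbf{w_{RF,k}} = \mathbf{a}(\theta_k,\phi_k)/\sqrt{M_\mathtt{BS}}$, and setting $\theta_{t,L}=0$ by Assumption 2, produces exactly the factor $|\mathbf{a}^H(0,\phi_{t,L})\mathbf{a}(\theta_k,\phi_k)|^2/M_\mathtt{BS}$ appearing in the statement. The LoS receive-beamforming gain simplifies similarly to $G_\mathtt{rad}(\theta_\mathtt{rad},\phi_\mathtt{rad},\theta_{r,L},\phi_{r,L}) = |\mathbf{a}^H(\theta_\mathtt{rad},\phi_\mathtt{rad})\mathbf{a}(\theta_{r,L},\phi_{r,L})|^2/M_\mathtt{rad}$ by the definition of the radar receive beamformer in Assumption 1.

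The strict inequality finally emerges by observing that the Rician prefactor satisfies $\tfrac{K_R}{K_R+1} < 1$; replacing it by $1$ produces the clean upper bound $I^{(\mathtt{w})}_\mathtt{rad}$ with $\beta(d)$ outside rather than $\tfrac{K_R\beta(d)}{K_R+1}$. I expect the main obstacle to be careful bookkeeping of the NLOS contribution: Assumption 1 only asserts that $G_\mathtt{rad}$ is much larger in the LoS direction, so to make the result a genuine upper bound rather than just an approximation one must either absorb the small NLOS sum into the slack created by replacing $\tfrac{K_R}{K_R+1}$ with $1$, or explicitly bound the NLOS sum using $\|\mathbf{a}(\theta_{t,i},\phi_{t,i})\|^2 = M_\mathtt{BS}$ and verify that the total Rician slack dominates. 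Showing this slack-versus-dropped-term comparison, together with confirming uniqueness of the DIUC via Assumption 4, is the one technically delicate step.
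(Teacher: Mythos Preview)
Your proposal is correct and follows essentially the same route as the paper: expand $\mathbb{E}[|i_\mathtt{rad}|^2]$, use uncorrelated MPCs to kill cross terms, invoke Assumption~1 to dominate the NLoS contributions, then apply Proposition~1, Lemma~1, Assumption~2, and the RF beamformer expression to isolate the single DIUC term with the $P_{BS}/K$ factor. The ``technically delicate step'' you flag---absorbing the dropped NLoS sum into the slack $1 - \tfrac{K_R}{K_R+1}$---is in fact glossed over in the paper, which simply asserts the strict inequality after citing Assumption~1 and $\mathbb{E}[|\gamma'_i|^2]=1$; so your proposal is, if anything, more careful on this point than the original argument.
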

\begin{proof}
Under the realistic assumption that each MPC is uncorrelated with the others, the average interference power $\bar{I}_\mathtt{rad} = \mathbb{E}[|i_\mathtt{rad}|^2]$ is given by
\begin{align}
\label{Irad_avgPow}
\bar{I}_\mathtt{rad} = & \tfrac{\beta(d)}{K_R + 1} \Big(K_R G_\mathtt{rad}(\theta_\mathtt{rad}, \phi_\mathtt{rad},\theta_{r,L}, \phi_{r,L}) \times \mathbb{E}[ \|\mathbf{a}^H(0, \phi_{t,L}) \cdot \nonumber \\
&  \mathbf{W_{RF} W_{BB} d}\|^2_2] +  \frac{1}{N_c} \sum_{i=1}^{N_c} G_\mathtt{rad}(\theta_\mathtt{rad}, \phi_\mathtt{rad},\theta_{r,i}, \phi_{r,i}) \cdot \nonumber \\
& \mathbb{E}[\gamma'^2_i \| \mathbf{a}^H(\theta_{t,i}, \phi_{t,i}) \mathbf{W_{RF} W_{BB} d}\|^2_2] \Big).
\end{align}
Using Assumption 1, we get 
\begin{align}
\bar{I}_\mathtt{rad} < & \beta(d) G_\mathtt{rad}(\theta_\mathtt{rad}, \phi_\mathtt{rad},\theta_{r,L}, \phi_{r,L}) \mathbb{E}[ \|\mathbf{a}^H(\theta_{t,L}, \phi_{t,L}) \cdot \nonumber \\
& \mathbf{W_{RF} W_{BB} d}\|^2_2], 
\end{align}
since $\mathbb{E}[|\gamma'_i|^2]=1$. In addition, by Proposition \ref{Prop_WBB}, Assumption 2 and Lemma \ref{Lemma_DIUC}, we get $\bar{I}_\mathtt{rad} < \mathbb{E}[|\mathbf{a}^H(0, \phi_{t,L}) \mathbf{w_{RF,k}} d_k|^2] \beta(d) G_\mathtt{rad}(\theta_\mathtt{rad}, \phi_\mathtt{rad},\theta_{r,L}, \phi_{r,L})$. Finally, using $\mathbb{E}[|d_k|^2]=P_{BS}/K$ and substituting the RF beamformer for the DIUC, we obtain the desired result.
\end{proof}
In summary, the worst-case average interference in high-$K_R$ Rician channels in the asymptotic regime resembles the Friis transmission equation, with the power scaled by the beamforming gains, and the power allocation factor to the DIUC. With this general result, we analyze the average interference due to the cellular network in the next section. 
\vspace{-1pt}
\section{Average Interference Power at the Radar due to the Cellular Downlink} \label{Sec_Int_at_Radar_mMIMO_Network}
Since the BS locations are modeled as a homogeneous PPP, the cells are polygons formed by the Poisson-Voronoi tessellation under maximum average power-based cell association \cite{andrews2016primer}. While the range of azimuth of a randomly selected point in the cell is independent of the cell size, the elevation angle depends on the cell size and hence, on $\lambda_\mathtt{BS}$. Compared to prior works \cite{Hessar_Roy_Radar_WiFi_TAES_2016}, \cite{Kim_WiFi_Radar_WCL_2017}, which focus on beamforming in the azimuth, mathematical modeling of elevation beamforming presents technical challenges due to (a) lack of radial symmetry in the PV cell, (b) possibility of arbitrarily large PV cells, and (c) correlation between the shapes and sizes of adjacent cells, which can affect the \textit{joint elevation distribution}. It is worthwhile to note that even though the presence of correlation hinders the analytical characterization of the \textit{worst-case interference distribution}, it does not impact the \textit{worst-case average interference}. However, the lack of radial symmetry and possibility of arbitrarily large cells need a more thoughtful treatment as far as average interference is concerned. To complicate matters further, the presence of sidelobes in the beamforming pattern makes it non-trivial to represent the worst-case beamforming gain as a function of the cell-size. Below, we develop the techniques to address these technical challenges, and present the worst-case and nominal average interference analysis. 
\begin{lemma}\label{Lemma_Monotonic_BF_Gain}
For a $N_\mathtt{az} \times N_\mathtt{el}$ URA with $\lambda/2$-spacing, if $\phi \in [-\pi/2, \pi/2), 0 \leq \phi_\mathtt{m} \leq \tfrac{\pi}{2}$, and $\theta \in [-\pi/2, \pi/2)$, then the upper bound of the beamforming gain is given by
\begin{align}
\label{BFGain_Tight_UpperBound}
G^{(\mathtt{max})}_\mathtt{BS}(\phi, \phi_\mathtt{m}) & = \underset{\substack{\phi_k\in [\phi_\mathtt{m},\pi/2) \\ \theta_k \in [-\pi/2, \pi/2)}}{\max} G_\mathtt{BS}(\theta, \phi, \theta_k, \phi_k)  \\
& = \begin{cases}
N_\mathtt{az}N_\mathtt{el}, \quad \quad \quad \quad \text{if } \phi_\mathtt{m} \leq \phi, \\
 G_\mathtt{BS} (0, \phi, 0, \phi_\mathtt{m}),\ \text{if } \sin \phi_\mathtt{m} \leq \tfrac{1 + N_\mathtt{el} \sin \phi}{N_\mathtt{el}} \\
 \tfrac{N_\mathtt{az}}{N_\mathtt{el}\sin^2 \big(\tfrac{\pi (\sin \phi_\mathtt{m} - \sin\phi )}{2}  \big)}, \quad \text{otherwise}
\end{cases} \nonumber
\end{align}
where $G_\mathtt{BS}(\theta, \phi, \theta_k, \phi_k) = \tfrac{1}{N_\mathtt{az} N_\mathtt{el}}|\mathbf{a}^H(\theta,\phi) \mathbf{a} (\theta_k, \phi_k)|^2$.
\end{lemma}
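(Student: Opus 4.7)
My plan is to exploit the Kronecker product structure of the URA steering vector for $\lambda/2$-spacing: with standard indexing, $\mathbf{a}(\theta,\phi)$ is the tensor product of an elevation ULA steering vector depending only on $\sin\phi$ and an azimuth ULA steering vector depending only on $\cos\phi\sin\theta$. The normalized inner product therefore factorizes as
\begin{equation*}
G_\mathtt{BS}(\theta,\phi,\theta_k,\phi_k) = \frac{1}{N_\mathtt{az} N_\mathtt{el}}\, F_{N_\mathtt{az}}\bigl(\cos\phi\sin\theta - \cos\phi_k\sin\theta_k\bigr)\, F_{N_\mathtt{el}}\bigl(\sin\phi - \sin\phi_k\bigr),
\end{equation*}
where $F_N(x) := |\sin(N\pi x/2)/\sin(\pi x/2)|^2$ is the Fej\'er kernel. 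I will use two standard properties of $F_N$: it is strictly decreasing on its main lobe $[0,2/N]$ with peak $F_N(0)=N^2$, and it is dominated everywhere by the sidelobe envelope $F_N(x)\le 1/\sin^2(\pi x/2)$.

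The first step is to maximize over $\theta_k\in[-\pi/2,\pi/2)$. Since $\phi_k<\pi/2$ implies $\cos\phi_k>0$, one can generically choose $\theta_k$ so that $\cos\phi_k\sin\theta_k = \cos\phi\sin\theta$, driving $F_{N_\mathtt{az}}$ to its peak $N_\mathtt{az}^2$; in corner cases I just use the trivial bound $F_{N_\mathtt{az}}\le N_\mathtt{az}^2$. This reduces the problem to bounding
\begin{equation*}
G^{(\mathtt{max})}_\mathtt{BS}(\phi,\phi_\mathtt{m}) \;\le\; \frac{N_\mathtt{az}}{N_\mathtt{el}} \sup_{\phi_k\in[\phi_\mathtt{m},\pi/2)} F_{N_\mathtt{el}}(\sin\phi_k - \sin\phi).
\end{equation*}
I then split on $\phi$ versus $\phi_\mathtt{m}$. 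If $\phi_\mathtt{m}\le\phi$, then $\phi_k=\phi$ is feasible and attains the global Fej\'er peak, giving the first case $N_\mathtt{az}N_\mathtt{el}$. Otherwise, write $\Delta := \sin\phi_\mathtt{m}-\sin\phi>0$; as $\phi_k$ ranges over $[\phi_\mathtt{m},\pi/2)$ the argument $x_k := \sin\phi_k-\sin\phi$ ranges over $[\Delta,\,1-\sin\phi)$. If $\Delta\le 1/N_\mathtt{el}$, main-lobe monotonicity pushes the supremum to the left endpoint $x_k=\Delta$, yielding the exact expression $G_\mathtt{BS}(0,\phi,0,\phi_\mathtt{m})$ of case~2; if $\Delta>1/N_\mathtt{el}$, applying the envelope $1/\sin^2(\pi x_k/2)$ (monotonically decreasing in $|x_k|$) at $x_k=\Delta$ produces case~3.

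The main obstacle is verifying case~2: namely, that no sidelobe peak at some $x_k>2/N_\mathtt{el}$ in the feasible interval beats the main-lobe value $F_{N_\mathtt{el}}(\Delta)$. By main-lobe monotonicity one has $F_{N_\mathtt{el}}(\Delta)\ge F_{N_\mathtt{el}}(1/N_\mathtt{el}) = 1/\sin^2(\pi/(2N_\mathtt{el}))$, while every sidelobe obeys $F_{N_\mathtt{el}}(x_k)\le 1/\sin^2(\pi x_k/2)\le 1/\sin^2(\pi/N_\mathtt{el})$ for $x_k\ge 2/N_\mathtt{el}$. The double-angle identity $\sin(\pi/N_\mathtt{el}) = 2\sin(\pi/(2N_\mathtt{el}))\cos(\pi/(2N_\mathtt{el}))$ combined with $\cos(\pi/(2N_\mathtt{el}))>1/\sqrt{2}$ for $N_\mathtt{el}\ge 2$ gives $\sin^2(\pi/N_\mathtt{el})>\sin^2(\pi/(2N_\mathtt{el}))$, so the main-lobe value strictly dominates. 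Continuity of the bound at the transition $\Delta=1/N_\mathtt{el}$ follows because $|\sin(N_\mathtt{el}\pi\Delta/2)|=1$ there, so the case-2 and case-3 expressions match on the boundary, confirming that the piecewise formula is a valid (and tight in cases~1--2) upper bound.
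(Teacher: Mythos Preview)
Your proposal is correct and follows essentially the same approach as the paper: factorize the URA gain via the Kronecker product into a product of Fej\'er kernels, bound the azimuth factor by $N_\mathtt{az}$, then treat the three elevation regimes (feasible peak, main lobe, envelope) separately. Your treatment is in fact more careful than the paper's in Case~2: the paper simply asserts that the elevation factor ``monotonically decreases w.r.t.\ $\phi_k$'' in that range, whereas you explicitly verify via the double-angle identity that no sidelobe at $x_k\ge 2/N_\mathtt{el}$ can exceed the main-lobe value $F_{N_\mathtt{el}}(\Delta)$ when $\Delta\le 1/N_\mathtt{el}$, which is the nontrivial part of that case.
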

\begin{proof}
See Appendix \ref{App1_Proof_BFGain_UpBound}.
\end{proof}
We observe that the upper bound on $G_\mathtt{BS}$ is independent of the azimuth, since the maximum azimuth beamforming gain is universally upper bounded by $N_\mathtt{az}$. Hence, for ease of exposition, we use the boresight assumption as discussed in Assumption \ref{BoresightAssumption}.
%Using the above lemma, we analyze the worst-case average interference as a function of the cell size using the following models.
\vspace{-10pt}
\subsection{Circumcircle-based Cell (CBC) Model}
To induce radial symmetry in the setup, the Voronoi cell needs to be modeled as a circle. When beamforming in the azimuthal direction of the radar, the worst-case interference occurs when the BS serving a user beamforms as close to horizon as possible, along which the radar is located. This corresponds to the scenario where the BS beamforms to the \textit{farthest point in the cell}, according to Lemma \ref{Lemma_Monotonic_BF_Gain}. Since the circumradius determines the distance to the farthest point in a cell, we propose a circumcircle-based construction as shown in Fig. \ref{Fig_Cell_Shape_Models}, with the following probability density function.

\begin{proposition}
The probability density function of the circumradius $r_c$ ($r_c > 0$) of a Poisson-Voronoi cell is 
\begin{align}
%\label{PDF_circumradius_final}
f_{R_C} (r_c) = & 8 \pi \lambda_\mathtt{BS} r_c e^{-4 \pi \lambda_\mathtt{BS} r^2_c} \Big[1 + \sum\nolimits_{k \geq 1} \Big\{ \tfrac{(-4 \pi \lambda_\mathtt{BS} r^2_c)^k}{k!} \cdot \nonumber \\ 
& \Big(\tfrac{\psi_k(r_c)}{8 \pi \lambda_\mathtt{BS} r_c} - \zeta_k (r_c) \Big) - \tfrac{(-4 \pi \lambda_\mathtt{BS} r^2_c)^{k-1} \zeta_k (r_c) }{(k-1)!}  \Big\} \Big],  \\
\zeta_k(r_c) = & \int\limits_{\|\mathbf{u} \|_1 = 1, u_i \in [0,1]} \Big[\prod_{i=1}^{k} F(u_i) \Big] e^{4 \pi \lambda_\mathtt{BS} r^2_c \sum\limits_{i=1}^k \int\limits_{0}^{u_i} F(t) {\rm d}t} {\rm d}\mathbf{u}, \nonumber \\
\psi_k (r) = & \tfrac{{\rm d} \zeta_k(r)}{{\rm d} r}, 
F(t) = \sin^2(\pi t) \mathbbm{1}(0\leq t \leq \tfrac{1}{2}) + \mathbbm{1}(t > \tfrac{1}{2}),\nonumber 
\end{align}
where $\mathbbm{1}(\cdot)$ denotes the indicator function.
\end{proposition}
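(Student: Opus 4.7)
The plan is to reduce the distribution of the circumradius to a random-arc covering problem on a circle, apply a Siegel-type coverage formula for random-length arcs, Poissonize over the number of active BSs in the relevant disk, and then differentiate the CDF to obtain the density. First, by Slivnyak's theorem, the typical cell is the Voronoi cell $V(\mathbf{0})$ of the origin with respect to $\mathbf{\Phi_{BS}}$, and $R_c$ equals the maximal distance from $\mathbf{0}$ to any point of $V(\mathbf{0})$. The key identity is that $R_c \leq r$ if and only if for every direction $\theta \in [0, 2\pi)$ there exists $x \in \mathbf{\Phi_{BS}}$ with $\|x - r e^{j\theta}\| \leq r$; this follows because $V(\mathbf{0}) \subseteq B(\mathbf{0}, r)$ iff every probe point $r e^{j\theta}$ on $\partial B(\mathbf{0}, r)$ has some BS at least as close as the origin. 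Translating to arcs on $\partial B(\mathbf{0}, r)$, each BS at distance $s \leq 2r$ with bearing $\arg x$ covers a closed arc of angular half-width $\arccos(s/(2r))$ centered at $\arg x$, while BSs outside $B(\mathbf{0}, 2r)$ contribute nothing. Hence $R_c \leq r$ iff the induced arc collection covers the full circle.

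Second, conditioning on $N = |\mathbf{\Phi_{BS}} \cap B(\mathbf{0}, 2r)|$, the generating points are i.i.d.\ uniform on $B(\mathbf{0}, 2r)$, giving i.i.d.\ arcs with uniform centers on $[0, 1)$ (after normalizing circumference to $1$) and normalized lengths $L_i = \arccos(S_i/(2r))/\pi$. A direct area transformation shows $L_i$ has exactly the CDF $F(t) = \sin^2(\pi t)\mathbbm{1}(0 \leq t \leq 1/2) + \mathbbm{1}(t > 1/2)$ stated in the proposition. Given $n$ such arcs, one expands the conditional coverage probability by an inclusion-exclusion over gap configurations: a non-coverage event is determined by $k$ maximal uncovered gaps of lengths $u_1, \ldots, u_k$ interleaved cyclically with $k$ covered blocks, and averaging over uniform centers and i.i.d.\ lengths produces the $\prod_{i=1}^k F(u_i)$ factor appearing in $\zeta_k$, where $F(u_i)$ expresses that the gap-bounding arc has length at most $u_i$ so that the gap persists. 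Summing the conditional coverage against Poisson weights $e^{-\mu}\mu^n/n!$ with $\mu = 4\pi\lambda_\mathtt{BS} r_c^2$ converts the combinatorial sum into a generating-function expansion, and the inner exponential $\exp\left(4\pi\lambda_\mathtt{BS} r_c^2 \sum_i \int_0^{u_i} F(t)\,{\rm d}t\right)$ inside $\zeta_k(r_c)$ then arises as the Poisson probability-generating functional for arcs absorbed inside the covered blocks, evaluated on the complementary region of the simplex.

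Finally, differentiating the resulting expression for $P(R_c \leq r_c)$ in $r_c$ yields the density. Two contributions emerge: (i) the derivative of the Poisson prefactor $e^{-\mu}\mu^k/k!$ supplies the overall factor $8\pi\lambda_\mathtt{BS} r_c$ and produces the paired $(-\mu)^k/k!$ and $(-\mu)^{k-1}/(k-1)!$ coefficients multiplying $\zeta_k(r_c)$; (ii) the $r_c$-dependence of $\zeta_k$ through the exponent of its integrand contributes the $\psi_k(r_c) = {\rm d}\zeta_k(r_c)/{\rm d} r_c$ term via the chain rule. Collecting terms and factoring $8\pi\lambda_\mathtt{BS} r_c e^{-4\pi\lambda_\mathtt{BS} r_c^2}$ reproduces the stated density.

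The principal obstacle is the combinatorial step — deriving the closed-form coverage probability for $n$ random-length, randomly-placed arcs in precisely the form dictated by the $\zeta_k$ integrand. The classical Stevens coverage formula for equal-length arcs must be generalized to random lengths with CDF $F$, and the bookkeeping that cleanly separates gap-bounding arcs (contributing $F(u_i)$ factors) from arcs absorbed within covered blocks (contributing the exponential factor after Poissonization) is delicate. Everything else in the argument is geometric reduction or routine differentiation once this coverage formula is in hand.
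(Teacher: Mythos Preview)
Your proposal is correct in outline, but it takes a far more elaborate route than the paper. The paper's proof is a single sentence: it cites the CDF $F_{R_C}(r_c)$ of the circumradius from prior work (Parida--Dhillon, \emph{IEEE Access}, 2017, which in turn traces back to Calka's distributional results for the smallest enclosing disk of the typical Poisson--Voronoi cell) and then differentiates it with respect to $r_c$ using Leibniz's rule. That is the entire argument --- the combinatorial coverage formula you describe as the ``principal obstacle'' is not derived at all; it is imported wholesale.

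What you are doing instead is reconstructing that cited CDF from first principles: the geometric reduction $\{R_c \leq r\} = \{\text{arcs cover the circle}\}$, the identification of the arc-length law as $F(t) = \sin^2(\pi t)$ on $[0,1/2]$, the Siegel/Stevens-type inclusion--exclusion over gap configurations, and the Poissonization that produces the inner exponential in $\zeta_k$. This is essentially Calka's derivation, and your sketch of it is accurate (in particular your computation of the arc-length CDF and the covering equivalence are both correct). The differentiation step at the end is then the same as the paper's. So your argument is self-contained and explains \emph{why} the $\zeta_k$ and $\psi_k$ terms arise, at the cost of redoing a substantial piece of stochastic-geometry machinery that the paper is content to cite. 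Either approach is valid; the paper's buys brevity by leaning on the literature, while yours buys insight at the price of the delicate combinatorics you correctly flag.
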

\begin{proof}
The result is obtained by differentiating the CDF of the circumradius ($F_{R_C}(r_c)$) \cite{Parida_Dhillon_CBRS_Access_2017} w.r.t. $r_c$ using Leibniz's rule.
\end{proof}
%that has a distribution \cite{Calka_Circumradius_CCDF_2001}

Using $f_{R_C}(r_c)$ and Lemma \ref{Lemma_Monotonic_BF_Gain}, we obtain the upper bound on the average interference in the following key result.
\begin{theorem}\label{Circum_rad_Model}
The worst-case average interference at the radar is given by 
\begin{align}
\label{Boresight_Cell_Edge_BF_CECC}
\bar{I}_\mathtt{rad, c} = & \tfrac{\lambda_\mathtt{BS} P_{BS} PL(r_0)}{K} \int\limits_{-\tfrac{\pi}{2}}^{\tfrac{\pi}{2}} \int\limits_{r_{\mathtt{exc}}}^\infty \int\limits_{0}^{\infty} \tfrac{r G_\mathtt{rad}(\theta_\mathtt{rad}, \phi_\mathtt{rad},\theta_{r,L},-\phi_{t,L}(r) )}{(r^2 + (h_\mathtt{rad} - h_\mathtt{BS})^2 )^{\alpha/2}} \cdot \nonumber \\
& G^{(\mathtt{max})}_\mathtt{BS}(\phi_{t,L}(r), \phi_\mathtt{m}(r_c)) f_{R_C}(r_c) {\rm d} r_c {\rm d}r {\rm d}\theta_{r,L}, \nonumber \\
\phi_{t,L}(r) & = \tan^{-1} \big( \tfrac{h_\mathtt{BS} - h_\mathtt{rad}}{r} \big),
\phi_\mathtt{m}(r_c) = \tan^{-1} \big( \tfrac{h_\mathtt{BS}} {r_c}\big).
\end{align}
\end{theorem}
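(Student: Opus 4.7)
The plan is to lift the single-BS worst-case result from the previous theorem to the full PPP of BSs by applying Campbell's theorem, then to bound the per-BS beamforming gain using the CBC construction and Lemma \ref{Lemma_Monotonic_BF_Gain}. Since each BS independently contributes $I^{(\mathtt{w})}_\mathtt{rad}$ that depends only on its horizontal distance $r$ to the radar, its azimuth $\theta_{r,L}$, and a random quantity tied to its own cell geometry (the circumradius $R_C$), linearity of expectation reduces the computation to a single integral over the interference region weighted by $\lambda_\mathtt{BS}$ and an expectation over the marginal law of $R_C$. Correlations between adjacent circumradii do not affect the mean and can be ignored here; only the marginal density $f_{R_C}(r_c)$ is required.

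The first step is the geometric translation. A BS at horizontal distance $r$ from the radar (origin) and at height $h_\mathtt{BS}$ sees the radar (at height $h_\mathtt{rad}$) along the LoS elevation $\phi_{t,L}(r) = \tan^{-1}((h_\mathtt{BS}-h_\mathtt{rad})/r)$, and by reciprocity the radar sees the BS at elevation $\phi_{r,L} = -\phi_{t,L}(r)$; the 3D separation is $d=\sqrt{r^2+(h_\mathtt{BS}-h_\mathtt{rad})^2}$, so $\beta(d) = PL(r_0)(r^2+(h_\mathtt{BS}-h_\mathtt{rad})^2)^{-\alpha/2}$. The radar's azimuth-of-arrival $\theta_{r,L}$ ranges over $[-\pi/2,\pi/2)$ because of the restriction to $\mathbf{\Phi_{int}}$, and by the boresight Assumption \ref{BoresightAssumption} only the elevation dependence of the BS beamforming pattern remains nontrivial.

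The second step is to cap the per-BS worst-case BF gain using the CBC model. When the PV cell is enclosed by its circumcircle of radius $r_c$ and every user sits on the ground, the smallest (closest-to-horizon) elevation angle of departure from the BS to any intra-cell point is $\phi_\mathtt{m}(r_c) = \tan^{-1}(h_\mathtt{BS}/r_c)$, attained exactly at the circumcircle boundary. The DIUC cell-edge user that maximizes interference is the one placed at this boundary roughly in the radar direction, so the steering direction the BS is forced to use has elevation at least $\phi_\mathtt{m}(r_c)$. Since Lemma \ref{Lemma_Monotonic_BF_Gain} gives the supremum of $G_\mathtt{BS}(\theta,\phi,\theta_k,\phi_k)$ over all steering directions $(\theta_k,\phi_k)$ with $\phi_k \ge \phi_\mathtt{m}$, substituting $(\phi,\phi_\mathtt{m}) = (\phi_{t,L}(r),\phi_\mathtt{m}(r_c))$ produces the tightest monotone-in-$r_c$ upper bound $G^{(\mathtt{max})}_\mathtt{BS}(\phi_{t,L}(r),\phi_\mathtt{m}(r_c))$ on the single-BS term $|\mathbf{a}^H(0,\phi_{t,L})\mathbf{a}(\theta_k,\phi_k)|^2/M_\mathtt{BS}$ appearing in (\ref{WorstCaseAvgIntPow_SingleBS}).

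Finally I would plug these bounds into (\ref{WorstCaseAvgIntPow_SingleBS}) and apply Campbell's theorem to the restricted PPP $\mathbf{\Phi_{int}}$ in polar coordinates $(r,\theta_{r,L})$ with area element $r\,{\rm d}r\,{\rm d}\theta_{r,L}$, then take an outer expectation over the independent circumradius of the typical cell using $f_{R_C}(r_c)$, yielding the triple integral in (\ref{Boresight_Cell_Edge_BF_CECC}). The main obstacle I anticipate is justifying that the \emph{worst-case average} is correctly captured by this product construction: one must argue that even though neighboring PV cells have correlated shapes, for the mean it suffices to use the marginal $f_{R_C}$ (by Campbell/linearity), and that replacing each PV cell by its circumcircle truly upper bounds per-BS worst-case interference because enlarging the admissible beamforming cone can only increase the maximum attainable gain toward the horizon. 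Once these two points are secured, the remaining manipulations are a bookkeeping exercise of inserting $\beta(d)$, the power split $P_{BS}/K$, and the two beamforming-gain factors into Campbell's formula.
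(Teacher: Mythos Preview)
Your proposal is correct and follows essentially the same route as the paper: apply Campbell's theorem to the restricted PPP $\mathbf{\Phi_{int}}$ in polar coordinates, substitute the single-BS worst-case expression (\ref{WorstCaseAvgIntPow_SingleBS}) with the geometric relations $\phi_{r,L}(r)=-\phi_{t,L}(r)$ and $d=\sqrt{r^2+(h_\mathtt{BS}-h_\mathtt{rad})^2}$, bound the BS gain via Lemma~\ref{Lemma_Monotonic_BF_Gain} under the CBC construction, and then take expectation over the marginal circumradius law $f_{R_C}$. Your remarks on why circumradius correlations are irrelevant for the mean and why enlarging the PV cell to its circumcircle can only increase the admissible gain toward the horizon are the right justifications; the paper's appendix proof simply asserts these steps more tersely without the surrounding discussion.
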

\begin{proof}
See Appendix \ref{App2_Proof_Worst_Case_Int}.
\end{proof}
%An accurate approximation of the above result is given by the following corollary.
\begin{corollary}\label{Corollary_Circum_Model}
The approximate worst-case average interference at the radar is given by
\begin{align}
\label{Approx_Circum_Model_avgInt}
\bar{I}^{(\mathtt{app})}_\mathtt{rad, c} = & \tfrac{\lambda_\mathtt{BS} P_{BS} PL(r_0) }{K(\alpha - 2) r^{\alpha - 2}_\mathtt{exc}} \Big[ \int_{-\tfrac{\pi}{2}}^{\tfrac{\pi}{2}}  G_\mathtt{rad}(\theta_\mathtt{rad}, \phi_\mathtt{rad},\theta_{r,L},0) {\rm d}\theta_{r,L} \Big] \cdot \nonumber \\
& \Big[\int_{0}^{\infty}  G^{(\mathtt{max})}_\mathtt{BS} (0, \phi_\mathtt{m}(r)) f_{R_C} (r) {\rm d}r \Big]. 
\end{align} \qedhere
\end{corollary}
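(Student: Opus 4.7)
The plan is to derive the approximate expression by applying three physically motivated simplifications to the exact triple integral in Theorem \ref{Circum_rad_Model}, all justified by the operational regime in which the exclusion zone radius dominates the antenna height difference, i.e., $r_\mathtt{exc} \gg |h_\mathtt{BS} - h_\mathtt{rad}|$. In realistic radar--cellular deployments, $r_\mathtt{exc}$ is on the order of kilometers while the height difference is on the order of tens of meters, so the quantity $(h_\mathtt{BS} - h_\mathtt{rad})/r$ is uniformly small throughout the integration domain $r \in [r_\mathtt{exc}, \infty)$.

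First, I would approximate the 3D path-loss denominator by its 2D counterpart, $(r^2 + (h_\mathtt{rad} - h_\mathtt{BS})^2)^{\alpha/2} \approx r^\alpha$, valid for all $r \geq r_\mathtt{exc}$ in the above regime. Second, I would use $\phi_{t,L}(r) = \tan^{-1}\!\big((h_\mathtt{BS} - h_\mathtt{rad})/r\big) \approx 0$ inside the radar beamforming gain, replacing $G_\mathtt{rad}(\theta_\mathtt{rad}, \phi_\mathtt{rad}, \theta_{r,L}, -\phi_{t,L}(r))$ by $G_\mathtt{rad}(\theta_\mathtt{rad}, \phi_\mathtt{rad}, \theta_{r,L}, 0)$, which removes its $r$-dependence. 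Third, by the same small-angle reasoning, I would approximate $G^{(\mathtt{max})}_\mathtt{BS}(\phi_{t,L}(r), \phi_\mathtt{m}(r_c))$ by $G^{(\mathtt{max})}_\mathtt{BS}(0, \phi_\mathtt{m}(r_c))$, eliminating the coupling between the $r$ and $r_c$ integrations.

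After these substitutions, the integrand factors as a product of a function of $\theta_{r,L}$ only, a function of $r$ only (namely $r^{1-\alpha}$), and a function of $r_c$ only (namely $G^{(\mathtt{max})}_\mathtt{BS}(0, \phi_\mathtt{m}(r_c)) f_{R_C}(r_c)$). Since all three factors are non-negative and each factor yields a convergent integral, Fubini's theorem permits me to split the triple integral into the product of three one-dimensional integrals. The $r$-integral evaluates in closed form as $\int_{r_\mathtt{exc}}^\infty r^{1-\alpha}\,{\rm d}r = r_\mathtt{exc}^{2-\alpha}/(\alpha - 2)$, which is finite because $\alpha > 2$. Collecting the three pieces together with the leading prefactor $\lambda_\mathtt{BS} P_{BS} PL(r_0)/K$ reproduces exactly the expression claimed in the corollary.

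The main obstacle is not algebraic but qualitative: one should argue that the three small-angle approximations, applied \emph{jointly}, remain faithful to the exact expression of Theorem \ref{Circum_rad_Model}. The tightness hinges on the smoothness of $G_\mathtt{rad}$ and $G^{(\mathtt{max})}_\mathtt{BS}$ near broadside elevation, and can in principle be quantified via a first-order Taylor expansion in $\phi_{t,L}(r)$; for $r$ close to $r_\mathtt{exc}$ and small height differences, the leading error term scales as $(h_\mathtt{BS}-h_\mathtt{rad})/r_\mathtt{exc}$, which the remainder of the paper presumably demonstrates numerically to be negligible for realistic deployment parameters.
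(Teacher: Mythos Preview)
Your proposal is correct and follows essentially the same approach as the paper: the paper's proof invokes exactly the same two approximations, $\phi_{t,L}(r) \approx 0$ (which simultaneously handles both beamforming gain terms) and $(r^2 + (h_\mathtt{BS} - h_\mathtt{rad})^2)^{\alpha/2} \approx r^\alpha$, justified by $r \gg h_\mathtt{BS}$ and $r \gg h_\mathtt{rad}$, then groups the integrands and evaluates the $r$-integral to obtain the factored form. Your additional remarks on Fubini and the first-order error scaling are reasonable elaborations but not required by the paper.
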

\begin{proof}
Since $r \gg h_\mathtt{BS}$ and $r \gg h_\mathtt{rad}$, we have $\phi_{t,L}(r) = -\phi_{r,L}(r) \approx 0$, %as shown in Fig. \ref{Fig1_el_close_to_zero},
and $(r^2 + (h_\mathtt{BS} - h_\mathtt{rad})^2)^{\tfrac{\alpha}{2}} \approx r^{\alpha}$. Using these approximations in $\bar{I}_\mathtt{rad, c}$, grouping the integrands, and integrating over $r$ yields the desired result.
\end{proof}

\begin{figure}[t]
	\centering
	\includegraphics[width=3.4in]{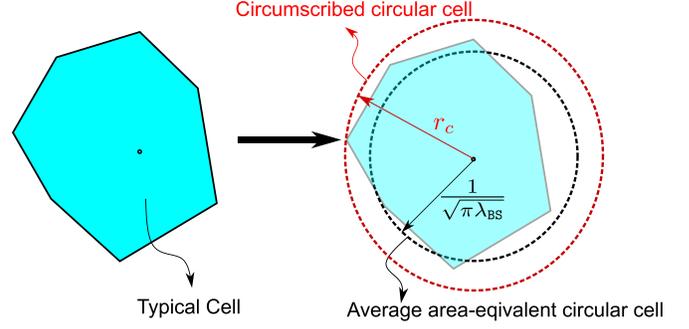}
	\caption{Radial symmetry can be induced by modeling the Voronoi cell as a (a) circumcircle, or (b) circle of area equal to that of the average typical cell.\\[-2ex]}
	\label{Fig_Cell_Shape_Models}
\end{figure}
\vspace{-15pt}
\subsection{Average Area-Equivalent Circular Cell (AAECC) Model}\label{AAEC_Model}
The circumcircle-based cell model results in a conservative value for average interference. A simpler, more optimistic model is to replace the Voronoi cell by a circle with an area equal to the average area of a typical cell given by $\tfrac{1}{\lambda_\mathtt{BS}}$. In this case, the cell radius $r_c = r_a = \tfrac{1}{\sqrt{\pi \lambda_\mathtt{BS}}}$, and the nominal average interference is given by the following theorem. 
\begin{theorem}\label{Theorem_AACC_Model}
The nominal mean and standard deviation of the interference power is% given by  
\begin{align}
\label{Boresight_Cell_Edge_BF_AACC_mean}
\bar{I}_\mathtt{rad, a}  = & \tfrac{\lambda_\mathtt{BS} P_{BS} PL(r_0)}{K} \int_{-\tfrac{\pi}{2}}^{\tfrac{\pi}{2}} \int_{r_{\mathtt{exc}}}^\infty \tfrac{r G_\mathtt{rad}(\theta_\mathtt{rad}, \phi_\mathtt{rad},\theta_{r,L},\phi_{r,L}(r) ) }{(r^2 + (h_\mathtt{rad} - h_\mathtt{BS})^2 )^{\alpha/2}} \cdot \nonumber \\
& G^{(\mathtt{max})}_\mathtt{BS}\big(\phi_{t,L}(r),\phi_\mathtt{m}(r_a) \big){\rm d}r {\rm d}\theta_{r,L}, \\
\label{Boresight_Cell_Edge_BF_AACC_var}
\sigma_\mathtt{rad, a} & = \tfrac{\sqrt{\lambda_\mathtt{BS}} P_{BS} PL(r_0)}{K} \Big[ \int_{-\tfrac{\pi}{2}}^{\tfrac{\pi}{2}} \int_{r_{\mathtt{exc}}}^\infty \tfrac{r G^2_\mathtt{rad}(\theta_\mathtt{rad}, \phi_\mathtt{rad},\theta_{r,L},\phi_{r,L}(r) )  }{(r^2 + (h_\mathtt{rad} - h_\mathtt{BS})^2 )^{\alpha}} \nonumber \\
& [G^{(\mathtt{max})}_\mathtt{BS}(\phi_{t,L}(r),\phi_\mathtt{m}(r_a) )]^2  {\rm d}r {\rm d}\theta_{r,L}\Big]^{\tfrac{1}{2}}.
\end{align}
\end{theorem}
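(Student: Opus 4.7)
The plan is to apply the first- and second-order Campbell moment formulas for a PPP, using the per-BS worst-case interference from Theorem~1 as the deterministic integrand. The decisive simplification in the AAECC model is that the cell radius becomes the deterministic constant $r_a = 1/\sqrt{\pi \lambda_\mathtt{BS}}$. Consequently, the maximum elevation angle along which a BS may steer toward a cell-edge user, $\phi_\mathtt{m}(r_a) = \tan^{-1}(h_\mathtt{BS}/r_a)$, is also deterministic, and by Lemma~\ref{Lemma_Monotonic_BF_Gain} the worst-case per-BS beamforming gain is $G^{(\mathtt{max})}_\mathtt{BS}(\phi_{t,L}(r), \phi_\mathtt{m}(r_a))$, which depends only on the planar distance $r$.

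First I would write the aggregate worst-case interference as $I_\mathtt{rad,a} = \sum_{x \in \mathbf{\Phi_{int}}} I^{(w)}_\mathtt{rad}(x; r_a)$, where $I^{(w)}_\mathtt{rad}(x; r_a)$ is the per-BS expression from (\ref{WorstCaseAvgIntPow_SingleBS}) evaluated at the BS location $x$, with the boresight-to-radar alignment enforced by Assumption~\ref{BoresightAssumption}. For the mean I would invoke Campbell's theorem, giving $\bar{I}_\mathtt{rad,a} = \lambda_\mathtt{BS} \int_{\mathbf{\Phi_{int}}} I^{(w)}_\mathtt{rad}(x; r_a)\,{\rm d}x$. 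Switching to polar coordinates $(r, \theta_{r,L})$ centered at the radar (Jacobian $r$, $r \geq r_\mathtt{exc}$, $\theta_{r,L} \in [-\pi/2, \pi/2)$), substituting $d^2 = r^2 + (h_\mathtt{BS}-h_\mathtt{rad})^2$ into $\beta(d) = PL(r_0)\,d^{-\alpha}$, and using the geometric identity $\phi_{r,L}(r) = -\phi_{t,L}(r)$ implied by the BS--radar elevation geometry of Fig.~\ref{Fig1_Radar_mMIMO_SpecShare_Assumptions}, I arrive at (\ref{Boresight_Cell_Edge_BF_AACC_mean}).

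For the standard deviation I would apply the second-order Campbell identity for a PPP, $\text{Var}\!\left(\sum_{x \in \Phi} f(x)\right) = \lambda \int f^2(x)\,{\rm d}x$, with $f(\cdot) = I^{(w)}_\mathtt{rad}(\cdot; r_a)$. Squaring the per-BS expression produces an integrand proportional to $G^2_\mathtt{rad}\,[G^{(\mathtt{max})}_\mathtt{BS}]^2 / d^{2\alpha}$; pulling the deterministic factor $P_{BS}\,PL(r_0)/K$ out of the ensuing square root and performing the same polar change of variables as above yields (\ref{Boresight_Cell_Edge_BF_AACC_var}).

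The main obstacle is really bookkeeping rather than anything conceptual: one must verify that $I^{(w)}_\mathtt{rad}(x; r_a)$ is a legitimate deterministic function of BS position so that the Campbell moment formulas apply without the additional outer integral over the circumradius density that encumbered Theorem~\ref{Circum_rad_Model}. Fixing the cell at radius $r_a$ precisely collapses that outer integral to a single evaluation at $r_a$, which leaves the PPP locations as the only source of randomness and makes the Poisson moment identities apply directly to give the stated mean and standard deviation.
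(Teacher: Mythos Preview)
Your proposal is correct and follows essentially the same approach as the paper. The paper phrases the mean derivation as specializing Theorem~\ref{Circum_rad_Model} with $f_{R_C}(r_c)=\delta(r_c-1/\sqrt{\pi\lambda_\mathtt{BS}})$ and invoking the sifting property, but that is exactly your observation that fixing the cell radius collapses the outer integral and leaves a direct first-order Campbell computation; for the variance, the paper likewise appeals to Campbell's theorem in the same way you do.
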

\begin{proof}
This model is a special case of Theorem \ref{Circum_rad_Model}, where $f_{R_c} (r_c) = \delta \big( r_c - \tfrac{1}{\sqrt{\pi \lambda_\mathtt{BS}}} \big)$. Using the sifting property of the Dirac delta function $\delta(\cdot)$ in equation (\ref{Boresight_Cell_Edge_BF_CECC}), we obtain equation (\ref{Boresight_Cell_Edge_BF_AACC_mean}). The variance is obtained using Campbell's theorem, in a similar manner as Appendix \ref{App2_Proof_Worst_Case_Int}.
\end{proof}
%An accurate approximation of the above expression is given by the following corollary.
\begin{corollary}\label{Corollary_AACC}
The approximate nominal average and variance of the interference power is
\begin{align}
\label{Approx_AvgInt_AACC}
\bar{I}^{(\mathtt{app})}_\mathtt{rad, a} = & \frac{\lambda_\mathtt{BS} P_{BS} PL(r_0) G^{(\mathtt{max})}_\mathtt{BS}\big(0,\phi_\mathtt{m}( r_a ) \big) }{K(\alpha - 2)r^{\alpha - 2}_\mathtt{exc}} \times \nonumber \\
& \int_{-\tfrac{\pi}{2}}^{\tfrac{\pi}{2}} G_\mathtt{rad}(\theta_\mathtt{rad}, \phi_\mathtt{rad},\theta, 0) {\rm d}\theta, \\
\label{Approx_std_dev_AACC}
\sigma^{(\mathtt{app})}_\mathtt{rad, a} = & \frac{\sqrt{\lambda_\mathtt{BS}} P_{BS} PL(r_0) G^{(\mathtt{max})}_\mathtt{BS}\big(0,\phi_\mathtt{m}( r_a ) \big) }
{\sqrt{(2 \alpha - 2)} K r^{\alpha - 1}_\mathtt{exc}}\times \nonumber \\  
& \sqrt{\int_{-\tfrac{\pi}{2}}^{\tfrac{\pi}{2}} G^2_\mathtt{rad}(\theta_\mathtt{rad}, \phi_\mathtt{rad},\theta, 0) {\rm d}\theta}.
\end{align}
\end{corollary}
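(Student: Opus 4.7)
The plan is to mimic the proof of Corollary \ref{Corollary_Circum_Model}, but starting from Theorem \ref{Theorem_AACC_Model} rather than Theorem \ref{Circum_rad_Model}. The key simplification that makes this corollary tractable is that in the AAECC model the cell radius $r_a = 1/\sqrt{\pi \lambda_\mathtt{BS}}$ is \emph{deterministic} (depends only on $\lambda_\mathtt{BS}$), so the factor $G^{(\mathtt{max})}_\mathtt{BS}\bigl(\phi_{t,L}(r),\phi_\mathtt{m}(r_a)\bigr)$ in (\ref{Boresight_Cell_Edge_BF_AACC_mean}) loses its $r$-dependence as soon as $\phi_{t,L}(r)$ is replaced by $0$. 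This is the structural reason a clean closed form emerges.

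The first step is to invoke the far-field regime $r_\mathtt{exc} \gg h_\mathtt{BS}, h_\mathtt{rad}$, which is the same regime used in Corollary \ref{Corollary_Circum_Model}. Under it, $\phi_{t,L}(r) = -\phi_{r,L}(r) = \tan^{-1}\bigl(\tfrac{h_\mathtt{BS}-h_\mathtt{rad}}{r}\bigr) \approx 0$ for all $r \geq r_\mathtt{exc}$, and the distance term simplifies as $(r^2 + (h_\mathtt{rad}-h_\mathtt{BS})^2)^{\alpha/2} \approx r^\alpha$. After applying these to (\ref{Boresight_Cell_Edge_BF_AACC_mean}), the BS-beamforming factor $G^{(\mathtt{max})}_\mathtt{BS}(0,\phi_\mathtt{m}(r_a))$ is a constant and pulls outside the integral, and the radar beamforming gain reduces to $G_\mathtt{rad}(\theta_\mathtt{rad},\phi_\mathtt{rad},\theta_{r,L},0)$, which depends only on $\theta_{r,L}$. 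The remaining radial integral separates and evaluates to $\int_{r_\mathtt{exc}}^\infty r^{1-\alpha}\, {\rm d}r = \tfrac{r_\mathtt{exc}^{2-\alpha}}{\alpha-2}$ for $\alpha>2$; collecting the constants yields (\ref{Approx_AvgInt_AACC}).

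For the standard deviation I would apply the same two approximations to (\ref{Boresight_Cell_Edge_BF_AACC_var}). The constant $\bigl[G^{(\mathtt{max})}_\mathtt{BS}(0,\phi_\mathtt{m}(r_a))\bigr]^2$ comes out as $G^{(\mathtt{max})}_\mathtt{BS}(0,\phi_\mathtt{m}(r_a))$ after taking the outer square root, and the denominator $(r^2 + (h_\mathtt{rad}-h_\mathtt{BS})^2)^\alpha$ simplifies to $r^{2\alpha}$. The radial integral is then $\int_{r_\mathtt{exc}}^\infty r^{1-2\alpha}\,{\rm d}r = \tfrac{r_\mathtt{exc}^{2-2\alpha}}{2\alpha-2}$ for $\alpha>1$, and taking the square root produces the prefactor $\tfrac{r_\mathtt{exc}^{1-\alpha}}{\sqrt{2\alpha-2}}$ in (\ref{Approx_std_dev_AACC}). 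The azimuthal integral $\int_{-\pi/2}^{\pi/2} G^2_\mathtt{rad}(\theta_\mathtt{rad},\phi_\mathtt{rad},\theta,0)\, {\rm d}\theta$ remains inside the square root because the radar-gain factor is squared and the angular integral does not separate through the square root operation.

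The argument is mostly mechanical once the deterministic-radius observation is made; the only genuine care needed is ensuring that the far-field approximation on $\phi_{t,L}$ is justified \emph{before} pulling $G^{(\mathtt{max})}_\mathtt{BS}$ out of the integral, since the first branch of (\ref{BFGain_Tight_UpperBound}) switches values across $\phi=\phi_\mathtt{m}$. Because $\phi_\mathtt{m}(r_a) = \tan^{-1}(h_\mathtt{BS}/r_a) > 0$ while $\phi_{t,L}(r) \approx 0$ throughout the integration region, we stay in the second or third branch uniformly in $r$, so no piecewise splitting is needed. This is the only subtle point; the rest of the derivation is algebraic bookkeeping identical in spirit to Corollary \ref{Corollary_Circum_Model}.
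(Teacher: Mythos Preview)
Your proposal is correct and follows exactly the approach the paper uses: apply the same far-field approximations $\phi_{t,L}(r)\approx 0$ and $(r^2+(h_\mathtt{BS}-h_\mathtt{rad})^2)^{\alpha/2}\approx r^\alpha$ from Corollary~\ref{Corollary_Circum_Model} to the expressions in Theorem~\ref{Theorem_AACC_Model}, then separate and evaluate the radial integral. Your additional remark about staying in a single branch of (\ref{BFGain_Tight_UpperBound}) is a welcome piece of care that the paper leaves implicit.
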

\begin{proof}
%Using the approximations (a) $\phi_{t,L}(r) = -\phi_{r,L}(r) \approx 0$, and (b) $(r^2 + (h_{BS} - h_{rad})^2)^{\tfrac{\alpha}{2}} \approx r^{\alpha}$, in Theorem \ref{Theorem_AACC_Model} and following the same procedure, we obtain the desired expression. 
The proof follows the same steps as Corollary \ref{Corollary_Circum_Model}.
\end{proof}
\vspace{-5pt}
\subsection{System Design Insights}
%The following key system design insights can be obtained from the above average interference analysis.
\subsubsection{Scaling of average interference power with BS density}
From (\ref{Boresight_Cell_Edge_BF_CECC}) and (\ref{Boresight_Cell_Edge_BF_AACC_mean}), we see that $\lambda_\mathtt{BS}$ impacts the average interference through the linear term and the BS beamforming gain ($G_\mathtt{BS}$) term. It is related to the cell size via the circumradius distribution and the average area of the typical cell, which impacts the \textit{minimum elevation angle} ($\phi_\mathtt{m}$). Note that this dependence is not observed in azimuth-only beamforming models. However, when $h_\mathtt{BS} \ll r_c$, the elevation angle $\phi_\mathtt{m}(r_c) \rightarrow 0$ and hence, $G_\mathtt{BS} \rightarrow M_\mathtt{BS}$. In this regime, the worst-case average interference power scales linearly with $\lambda_\mathtt{BS}$. 
\begin{comment} 
\subsubsection{Exclusion Zone Radius}
In practice, exclusion zones are defined based on the average aggregate interference power (for e.g. see \cite{FCC_3point5_GHz_Rules}). Using Corollaries \ref{Corollary_Circum_Model} and \ref{Corollary_AACC}, for an average interference threshold $\bar{I}_\mathtt{th}$ and $\alpha > 2$, the worst-case exclusion zone radius ($r^{(\mathtt{w})}_\mathtt{exc}$) can be obtained using
\begin{align*}
%\label{Worst_case_Exc_Zone_Rad}
r^{(\mathtt{w})}_\mathtt{exc} & \approx \Big(\tfrac{\lambda_\mathtt{BS} P_{BS} PL(r_0) }{K(\alpha - 2) \bar{I}_\mathtt{th}} \Big[ \int_{-\tfrac{\pi}{2}}^{\tfrac{\pi}{2}}  G_\mathtt{rad}(\theta_\mathtt{rad}, \phi_\mathtt{rad},\theta,0) {\rm d}\theta \Big] \cdot \Big[\int_{0}^{\infty}  G^{(\mathtt{max})}_\mathtt{BS} (0,\phi_\mathtt{m}(r_c)) f_{R_C} (r) {\rm d}r \Big] \Big)^{\frac{1}{\alpha - 2}}. 
\end{align*}
\end{comment} 
\subsubsection{Constant Gap in Average Interference Predicted by CBC and AAECC Models}
By Corollaries (\ref{Corollary_Circum_Model}) and (\ref{Corollary_AACC}), we observe that the ratio of average interference powers is nearly independent of $r_\mathtt{exc}$, given by
\begin{align} 
\eta_\mathtt{ca} = \tfrac{\bar{I}^{(\mathtt{app})}_\mathtt{rad, c}}{\bar{I}^{(\mathtt{app})}_\mathtt{rad,a}}= \tfrac{\int\displaylimits_{0}^{\infty}  G^{(\mathtt{max})}_\mathtt{BS} (0, \phi_\mathtt{m}(r_c)) f_{R_C} (r_c) {\rm d}r_c}{G^{(\mathtt{max})}_\mathtt{BS}\big(0, \phi_\mathtt{m}\big( \tfrac{1}{\sqrt{\pi \lambda_\mathtt{BS}}} \big) \big)}. 
\end{align}
Note that $\eta_\mathtt{ca} \rightarrow 1$ when $h_\mathtt{BS} \sqrt{\pi\lambda_\mathtt{BS}} \rightarrow 0$, due to BS gain saturation.  

In the next section, we analyze the distribution of interference at the radar due to the massive MIMO cellular downlink. 
\vspace{-10pt}
\section{Distribution of Massive-MIMO Downlink Interference at the Radar}
To study the impact of large-scale network interference on aggregate radar performance metrics such as \textit{spatial probability of detection/false alarm} \cite{Chen_Dhillon_Liu_QoS_D2D_StochGeom_TCOM_2019}, deriving the distribution of interference due to spatial randomness in the BS locations is a key intermediate step. To accomplish this, a common approach in stochastic geometry literature is to characterize the Laplace transform of the interference distribution, which leverages the presence of an exponential term in Rayleigh fading channels \cite{andrews2016primer}. However in our case, the Laplace transform method is not applicable, since we ignore the small scale fading term in the high-$K_R$ Rician channel to model the worst-case interference scenario. To obtain useful results, we use the dominant interferer approximation  \cite{Parida_Dhillon_CBRS_Access_2017}, \cite{Vishnu_Dhillon_UAVs_TCOM_2017}, \cite{Heath_Kontouris_Bai_Dom_Int_TSP_2013}, \cite{Schloemann_Dhillon_Localze_TWC_2016} described below.

\begin{assumption}\label{Assump_dom_int_approx}
	In the cellular network, if the interference power due to the dominant interfering BS is $I_\mathtt{dom}$, and that due to the rest of the network is $I_\mathtt{rest}$, then the total interference power $(I_\mathtt{tot})$ is approximated by the sum of the dominant BS interference power and the average interference power due to the rest of the network, conditioned on the dominant interference power. Mathematically, it can be written as
	\begin{align}
	\label{equation:dom_int_method_defn}
	I_\mathtt{tot} \approx I_\mathtt{dom} + \mathbb{E}_{I_\mathtt{dom}} [I_\mathtt{rest}|I_\mathtt{dom}],
	\end{align} 
\end{assumption}
In the case of omnidirectional reception at the receiver, the distribution of $I_\mathtt{dom}$ is directly related to the distance distribution of the nearest transmitter in the point process \cite{Haenggi_dists_Rand_wir_ntwks_TIT_2005}, since the contour of equal interference power is a circle \cite{Heath_Kontouris_Bai_Dom_Int_TSP_2013}. However in our case, receive beamforming at the radar distorts radial symmetry, since received power depends on the azimuth and elevation angle, in addition to the distance from the interfering BS. Therefore, the first step is to characterize the contour curves of equal interference power, which is fundamental to calculating the void probability \cite{andrews2016primer} and hence, the distribution of $I_\mathtt{dom}$. In the rest of this paper, we assume cell-edge beamforming in the AAECC model to derive useful expressions for the interference distribution. In the following subsection, we characterize the equal interference contours in our radar-cellular coexistence scenario. 
\vspace{-10pt}
\subsection{Equal Interference Contours in Radar-Massive MIMO Spectrum Sharing}
%We start with the definition of the equal interference contour line.
%\begin{definition}
The equal interference power contour $\mathcal{C}(I)$ contains points $(r,\theta)$ such that the received power due to a transmitter at location $(r,\theta) \in \mathcal{C}(I)$ is $I$. The following proposition denotes the contour lying outside the exclusion zone in the radar-cellular spectrum sharing scenario.
\begin{proposition}
	Under the AAECC model, the contour $\mathcal{C}(I_\mathtt{dom})$ is given by 
	\begin{align}
	\label{eq:contour_line_general_exp}
		\mathcal{C}(I_\mathtt{dom}) = & \Big \{ (r,\theta) \Big| r^{-\alpha}   G^{(\mathtt{max})}_\mathtt{BS} \big(-\phi(r), \phi_\mathtt{m}(1/\sqrt{\pi \lambda_\mathtt{BS}} \big) \big) \cdot \nonumber \\
		& G_\mathtt{rad} \big( \theta_\mathtt{rad}, \phi_\mathtt{rad}, \theta, \phi(r) \big) = \tfrac{K I_\mathtt{dom}}{PL(r_0) P_{BS}}, r \geq r_\mathtt{exc}, \nonumber \\
		& \theta \in \big[ -\tfrac{\pi}{2}, \tfrac{\pi}{2} \big] \Big \}, 
	\end{align}
	where  $\phi(r) = \tan^{-1} \Big( \frac{h_\mathtt{rad} - h_\mathtt{BS}}{r} \Big), \phi_\mathtt{m}(r')=\tan^{-1}( h_\mathtt{BS}/r')$.
\end{proposition}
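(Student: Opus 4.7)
My plan is to start from the single-BS worst-case interference expression in Theorem~1, parametrize the BS location in polar coordinates $(r,\theta)$ relative to the radar, specialize the resulting power to the AAECC model, and then set it equal to $I_\mathtt{dom}$ to read off the locus of equal interference.

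First, I would convert the geometric variables appearing in $I^{(\mathtt{w})}_\mathtt{rad}$ into the polar coordinates of the BS with respect to the radar at the origin. For a BS at 2D horizontal distance $r$ and azimuth $\theta$, the azimuth angle of arrival at the radar is $\theta_{r,L}=\theta$, and the elevation angle of arrival is $\phi_{r,L}=\tan^{-1}\!\bigl((h_\mathtt{rad}-h_\mathtt{BS})/r\bigr)=\phi(r)$ under the paper's sign convention (above-horizon negative, below-horizon positive). By the symmetry of the LoS ray, the departure elevation at the BS satisfies $\phi_{t,L}=-\phi_{r,L}=-\phi(r)$. Invoking the same far-field approximation already used in Corollaries~1 and 2, valid when $r\ge r_\mathtt{exc}\gg |h_\mathtt{BS}-h_\mathtt{rad}|$, the 3D distance reduces to $d\approx r$, so $\beta(d)\approx PL(r_0)\,r^{-\alpha}$.

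Second, I would apply the AAECC modeling assumption from Section~\ref{AAEC_Model}: each PV cell is replaced by a circular cell of area $1/\lambda_\mathtt{BS}$, hence radius $r_a=1/\sqrt{\pi\lambda_\mathtt{BS}}$. By Lemma~\ref{Lemma_Monotonic_BF_Gain}, the factor $|\mathbf{a}^H(0,\phi_{t,L})\mathbf{a}(\theta_k,\phi_k)|^2/M_\mathtt{BS}$ appearing in the single-BS bound is upper bounded by $G^{(\mathtt{max})}_\mathtt{BS}(\phi_{t,L},\phi_\mathtt{m})$, with $\phi_\mathtt{m}$ the smallest feasible elevation angle obtained by steering to the farthest in-cell user. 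Under AAECC this gives $\phi_\mathtt{m}=\tan^{-1}(h_\mathtt{BS}/r_a)=\phi_\mathtt{m}(1/\sqrt{\pi\lambda_\mathtt{BS}})$, which is precisely the Dirac-delta specialization $f_{R_C}(r_c)=\delta(r_c-1/\sqrt{\pi\lambda_\mathtt{BS}})$ used in the proof of Theorem~\ref{Theorem_AACC_Model}. Substituting these identifications into the expression for $I^{(\mathtt{w})}_\mathtt{rad}$ yields the per-BS interference
\begin{equation*}
I(r,\theta) \;=\; \tfrac{PL(r_0)\,P_{BS}}{K}\,r^{-\alpha}\,G^{(\mathtt{max})}_\mathtt{BS}\!\bigl(-\phi(r),\,\phi_\mathtt{m}(1/\sqrt{\pi\lambda_\mathtt{BS}})\bigr)\,G_\mathtt{rad}\!\bigl(\theta_\mathtt{rad},\phi_\mathtt{rad},\theta,\phi(r)\bigr).
\end{equation*}

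Finally, setting $I(r,\theta)=I_\mathtt{dom}$ and isolating the product of beamforming gains and $r^{-\alpha}$ on the left-hand side produces the implicit equation stated in the proposition. The restriction $r\ge r_\mathtt{exc}$ imposes the exclusion-zone boundary, while $\theta\in[-\pi/2,\pi/2]$ reflects that only BSs in $\mathbf{\Phi_{int}}$ are active toward the radar under Assumption~\ref{BoresightAssumption}. The principal obstacle is purely a bookkeeping one: keeping the elevation sign conventions consistent so that $\phi_{t,L}=-\phi_{r,L}=-\phi(r)$ feeds correctly into the first argument of $G^{(\mathtt{max})}_\mathtt{BS}$, and verifying that the AAECC reduction of $\phi_\mathtt{m}$ at the cell boundary reproduces the one already embedded in Theorem~\ref{Theorem_AACC_Model}.
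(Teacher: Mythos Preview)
Your proposal is correct and follows essentially the same approach as the paper: start from the single-BS worst-case interference expression in Theorem~1, specialize the BS beamforming gain to the AAECC cell radius $r_a=1/\sqrt{\pi\lambda_\mathtt{BS}}$, set the resulting power equal to $I_\mathtt{dom}$, and impose the exclusion-zone and azimuth constraints. Your write-up is in fact more explicit than the paper's about the geometric identifications (the sign convention $\phi_{t,L}=-\phi(r)$ and the far-field reduction $d\approx r$), but the underlying argument is identical.
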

\begin{proof}
	%can be written more concisely
	The worst-case interference power due to a massive MIMO BS at $(r,\theta)$ is given by (\ref{WorstCaseAvgIntPow_SingleBS}). Since the BSs inside the exclusion zone are inactive, the contour can be written as
	\begin{align}
	\label{eq:contour_line_general_exp_proof}
		\mathcal{C}(I_\mathtt{dom}) = & \Big \{ (r,\theta) \Big| \frac{PL(r_0) P_{BS} G_\mathtt{rad}(\theta_\mathtt{rad}, \phi_\mathtt{rad}, \theta, \phi(r))}{K r^{\alpha}} \cdot \nonumber \\
		& G^{(\mathtt{max})}_\mathtt{BS} (-\phi(r), \phi_\mathtt{m} (1/\sqrt{\pi \lambda_\mathtt{BS}})) = I_\mathtt{dom}, r \geq r_\mathtt{exc}, \nonumber \\
		& \theta \in \big[-\tfrac{\pi}{2}, \tfrac{\pi}{2} \big]  \Big \}. 
	\end{align}
Rearranging the terms independent of $(r,\theta)$, we obtain the desired result.
\end{proof}
In the case of large exclusion zone radii, we show in the following lemma that the equi-interference contour can be represented by the \textit{farthest distance between the contour and the radar}, when conditioned on the radar beamforming vector.
\begin{lemma}\label{lemma:Contour_large_exc_zone}
	Under the AAECC model, when $h_\mathtt{BS} \ll r_\mathtt{exc} \text{ and } h_\mathtt{rad} \ll r_\mathtt{exc}$, the equal interference contour is given by
	\begin{align}
		\label{eq:Contour_large_exc_zone_radius}
		\mathcal{C}(I_\mathtt{dom}) = & \Big \{ (r,\theta) \Big| r = \Big[ \tfrac{\sin \big(\tfrac{\pi}{2} N^{(\mathtt{rad})}_\mathtt{az} (\sin \theta_\mathtt{rad} \phi_\mathtt{rad} - \sin \theta) \big)}{N^{(\mathtt{rad})}_\mathtt{az} \sin \big( \tfrac{\pi}{2} (\sin \theta_\mathtt{rad} \phi_\mathtt{rad} - \sin \theta) \big)} \Big]^{2/\alpha} \times \nonumber \\
		& r_\mathtt{dom}, r_\mathtt{dom} \geq r_\mathtt{exc}, \theta \in \big[-\tfrac{\pi}{2}, \tfrac{\pi}{2} \big]  \Big \}, \nonumber \\
		\text{where } I_\mathtt{dom} \triangleq & \frac{P_{BS}PL(r_0)G^{(\mathtt{max})}_\mathtt{BS}(0, \phi_\mathtt{m}(1/\sqrt{\pi \lambda_\mathtt{BS}}))}{Kr^\alpha_\mathtt{dom}}\cdot \nonumber \\
		& \frac{N^{(\mathtt{rad})}_\mathtt{az} \sin^2 \big( \frac{\pi}{2} N^{(\mathtt{rad})}_\mathtt{el} \sin \phi_\mathtt{rad} \big)}{N^{(\mathtt{rad})}_\mathtt{el} \sin^2 \big( \tfrac{\pi}{2} \sin \phi_\mathtt{rad}  \big)}.
	\end{align}
\end{lemma}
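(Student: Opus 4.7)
The plan is to start from the general contour expression (\ref{eq:contour_line_general_exp}) and simplify it under the asymptotic regime $h_\mathtt{BS}, h_\mathtt{rad} \ll r_\mathtt{exc}$. First I would observe that since every point on the contour satisfies $r \geq r_\mathtt{exc}$, the elevation angle subtended at the radar, $\phi(r) = \tan^{-1}\!\big(\tfrac{h_\mathtt{rad}-h_\mathtt{BS}}{r}\big)$, is uniformly close to $0$ for all $(r,\theta)\in\mathcal{C}(I_\mathtt{dom})$. Substituting $\phi(r) \approx 0$ in (\ref{eq:contour_line_general_exp}) has two immediate effects: the BS-side term $G^{(\mathtt{max})}_\mathtt{BS}(-\phi(r),\phi_\mathtt{m}(1/\sqrt{\pi\lambda_\mathtt{BS}}))$ collapses to the constant $G^{(\mathtt{max})}_\mathtt{BS}(0,\phi_\mathtt{m}(1/\sqrt{\pi\lambda_\mathtt{BS}}))$ that no longer depends on $(r,\theta)$, and the radar-side gain reduces to $G_\mathtt{rad}(\theta_\mathtt{rad},\phi_\mathtt{rad},\theta,0)$, removing all explicit $r$-dependence from the gain factors.

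Next I would invoke the Kronecker-product structure of the URA steering vector set up in Appendix~\ref{App1_Proof_BFGain_UpBound} to factor $|\mathbf{a}^H(\theta_\mathtt{rad},\phi_\mathtt{rad})\mathbf{a}(\theta,0)|^2$ into a product of an azimuth-plane Dirichlet kernel depending on $\theta$ and the fixed steering direction, and an elevation-plane Dirichlet kernel depending only on $\phi_\mathtt{rad}$ (since $\phi=0$). The elevation factor evaluates to $\big|\sin(\tfrac{\pi}{2}N^{(\mathtt{rad})}_\mathtt{el}\sin\phi_\mathtt{rad})\big/\sin(\tfrac{\pi}{2}\sin\phi_\mathtt{rad})\big|^2$ and can be pulled out of the $\theta$-dependence. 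After this decomposition the contour condition reads, up to $(r,\theta)$-independent constants, $r^{-\alpha}\bigl|\sin(\tfrac{\pi}{2}N^{(\mathtt{rad})}_\mathtt{az}\Delta(\theta))/\sin(\tfrac{\pi}{2}\Delta(\theta))\bigr|^2 = \mathrm{const}$, where $\Delta(\theta)$ denotes the azimuthal spatial-frequency mismatch appearing in the lemma statement.

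To obtain the final form, I would define $r_\mathtt{dom}$ as the value of $r$ on the contour at $\theta = \theta_\mathtt{rad}$, where the azimuth Dirichlet kernel attains its maximum value $N^{(\mathtt{rad})}_\mathtt{az}$. Geometrically this is the farthest point of $\mathcal{C}(I_\mathtt{dom})$ from the radar, since moving $\theta$ away from $\theta_\mathtt{rad}$ only reduces the azimuth gain and, for fixed $I_\mathtt{dom}$, forces a smaller $r$. Dividing the contour equation at general $\theta$ by its value at $\theta=\theta_\mathtt{rad}$ cancels the constant BS-side gain, the elevation factor, and all remaining constants, leaving the closed-form scaling $r/r_\mathtt{dom}$ stated in the lemma. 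Evaluating the original contour equation at $(\theta,r)=(\theta_\mathtt{rad},r_\mathtt{dom})$ then yields the explicit expression for $I_\mathtt{dom}$.

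The only delicate point I anticipate is verifying that the linearization $\phi(r)\approx 0$ is valid uniformly in $\theta\in[-\pi/2,\pi/2]$ rather than only pointwise: one must be sure no point of the contour dips inside the exclusion zone, which is automatic from $r_\mathtt{dom}\geq r_\mathtt{exc}$ together with the monotonicity argument above. Matching the azimuth-argument notation in the statement (the quantity written as $\sin\theta_\mathtt{rad}\phi_\mathtt{rad}-\sin\theta$) to the form of the URA steering vector in Appendix~\ref{App1_Proof_BFGain_UpBound} is then the last bookkeeping step, after which the rest is mechanical algebraic rearrangement.
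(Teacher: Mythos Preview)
Your proposal is correct and follows essentially the same route as the paper's proof in Appendix~\ref{appendix:proof_of_contour}: use $\phi(r)\approx 0$ to freeze the BS-side gain and remove the $r$-dependence from the radar gain, factor $G_\mathtt{rad}$ via the Dirichlet-kernel decomposition of (\ref{BFGain_Geom_Series_Expansion}), and normalize by the point of maximum azimuth gain to obtain the scaling of $r/r_\mathtt{dom}$ and the expression for $I_\mathtt{dom}$. One small correction: the azimuth kernel peaks at $\theta_\mathtt{max}=\sin^{-1}(\sin\theta_\mathtt{rad}\cos\phi_\mathtt{rad})$, not at $\theta=\theta_\mathtt{rad}$ (they coincide only when $\phi_\mathtt{rad}=0$), which is precisely the notation-matching bookkeeping you flag in your final paragraph.
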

\begin{proof}
	Please refer to Appendix \ref{appendix:proof_of_contour}.
\end{proof}
It is important to mention that the above formulation is rather unique to radar-cellular spectrum sharing scenarios, because of the large exclusion zones that are typically used. For other spectrum sharing scenarios in the absence of exclusion zones where coexisting users are located close to each other, this approximation is valid only when \textit{all the antenna heights are equal}. From equation (\ref{eq:Contour_large_exc_zone_radius}), we observe that there is a bijection between $r_\mathtt{dom}$, the farthest distance of the contour from the radar, and interference power $I_\mathtt{dom}$ under the AAECC model when $h_\mathtt{BS} \ll r_\mathtt{exc} \text{ and } h_\mathtt{rad} \ll r_\mathtt{exc}$, which are both reasonable assumptions in practice. Therefore, we can equivalently denote the \textit{equal interference contour} by $\mathcal{C}(r_\mathtt{dom})$, when conditioned on the radar beamforming vector. Fig. \ref{subfig:Illustration_of_contour_theta_minus_60} shows an example of the {equal interference contour}, which resembles a horizontal cross section of the radar's 3D beamforming pattern at elevation $\phi = 0^\circ$. In the following subsection, we derive the distribution of the dominant interference power $I_\mathtt{dom}$.

\begin{figure*}[t]
	\centering
	\begin{subfigure}[t]{0.48\textwidth}
		\raggedleft
		\includegraphics[width=3.2in]{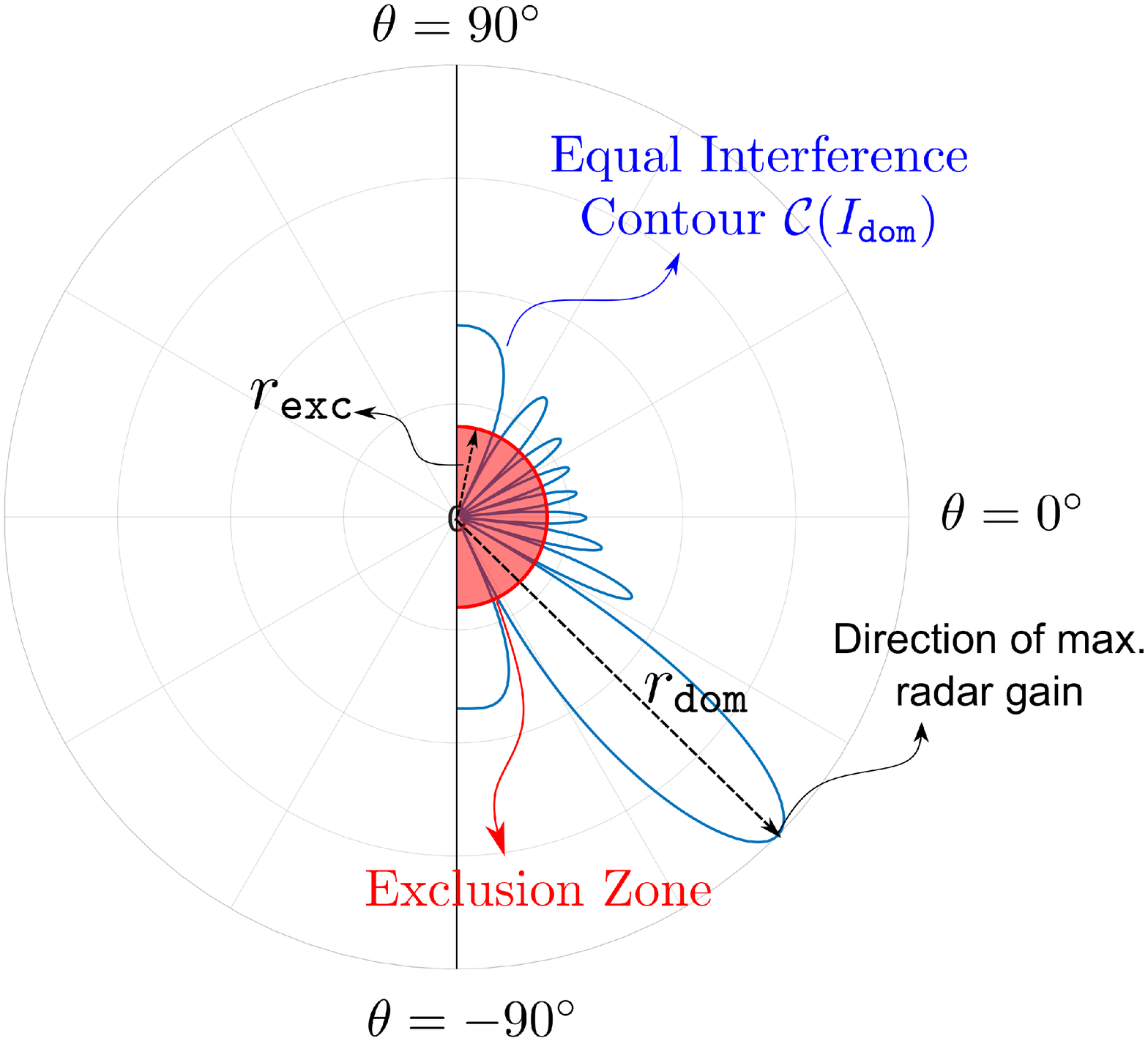}
		\caption{}
		\label{subfig:Illustration_of_contour_theta_minus_60}
	\end{subfigure}
	~
	\begin{subfigure}[t]{0.48\textwidth}
		\centering
		\includegraphics[width=3.2in]{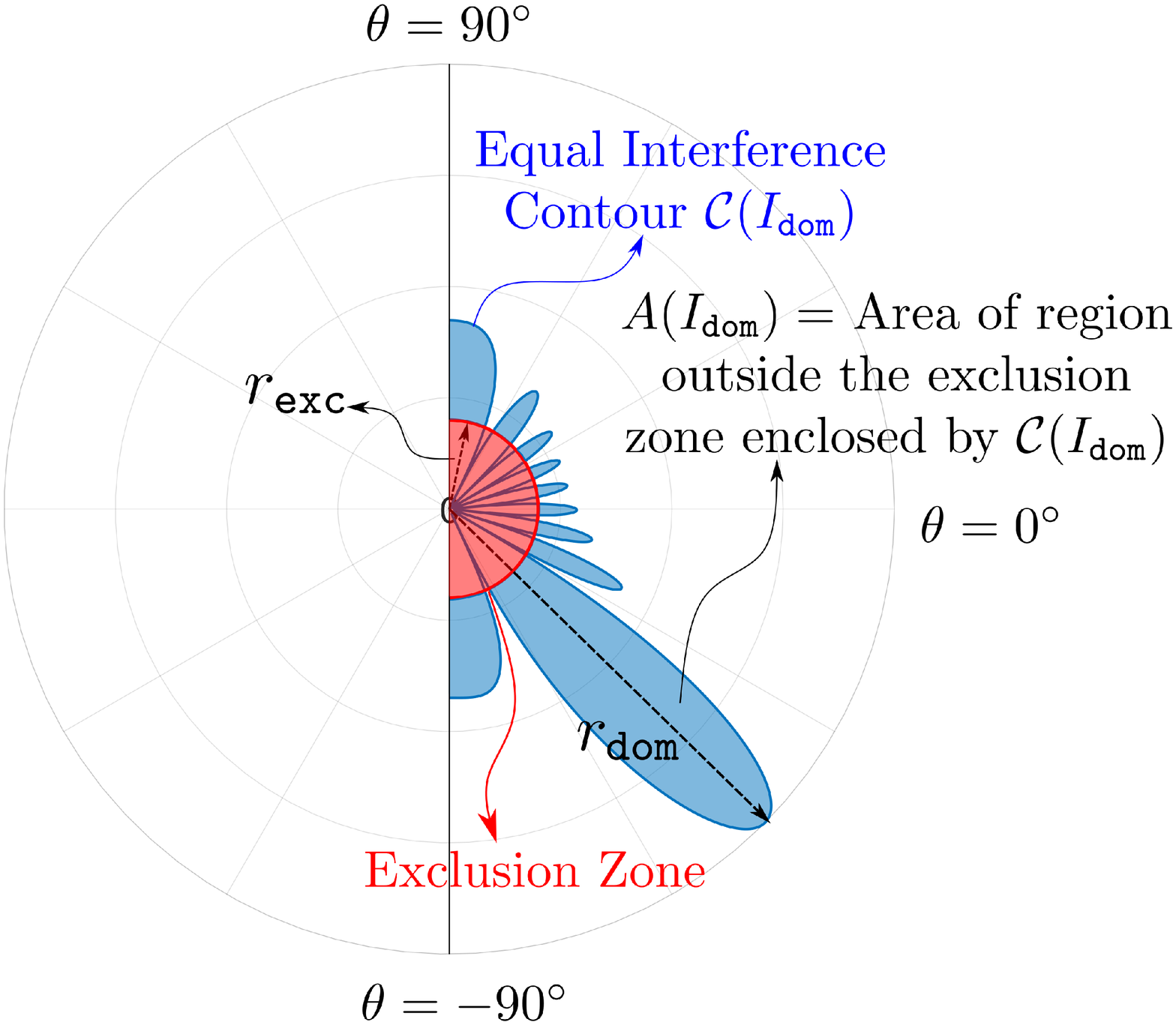}
		\caption{}
		\label{subfig:Illustration_of_contour_area_as_fn_of_r1}
	\end{subfigure}
	\caption{(a) Schematic of the equal interference power contour $\mathcal{C}(I_\mathtt{dom})$ in polar coordinates, for a radar with $N^{(\mathtt{rad})}_\mathtt{az} = N^{(\mathtt{rad})}_\mathtt{el} = 8$, scanning a target at $(\theta_\mathtt{rad}, \phi_\mathtt{rad}) = (-60^\circ, -5^\circ)$, with $\alpha = 3.5$, $r_\mathtt{exc} = 4$ km, and $r_1 = 20$ km. Distance of the farthest point on the contour is denoted by $r_\mathtt{dom}$. (b) Area of the region outside the exclusion zone but enclosed by $\mathcal{C}(I_\mathtt{dom})$ is denoted by $A(I_\mathtt{dom})$.}
	\label{fig:Contour_line_eq_interf_pow_example}
\end{figure*}
\vspace{-10pt}
\subsection{Distribution of $I_\mathtt{dom}$}
The distribution of $I_\mathtt{dom}$ is related to the void probability of a PPP in the region outside the exclusion zone enclosed by the equal interference contour \cite{andrews2016primer}, as shown in Fig. \ref{subfig:Illustration_of_contour_area_as_fn_of_r1}.  In the following key result, we derive an analytical expression for the area of this region $A(r_\mathtt{dom})$.
\begin{lemma}\label{lemma:area_B_r_closedform}
	Under the AAECC model, when $r_\mathtt{exc} \gg h_\mathtt{BS}$ and $r_\mathtt{exc} \gg h_\mathtt{rad}$, $A(r_\mathtt{dom})$ is given by
	\begin{align}
	\label{eq:B_r_dom_closedform}
	A(r_\mathtt{dom}) = & \frac{1}{2} \int_{-\frac{\pi}{2}}^{\frac{\pi}{2}} \max\Big( \Big[  \tfrac{\sin \big(\tfrac{\pi}{2} N^{(\mathtt{rad})}_\mathtt{az} (\sin \theta_\mathtt{rad} \cos \phi_\mathtt{rad} - \sin \theta) \big)}{N^{(\mathtt{rad})}_\mathtt{az} \sin \big( \tfrac{\pi}{2} (\sin \theta_\mathtt{rad} \cos \phi_\mathtt{rad} - \sin \theta) \big)} \Big]^\frac{4}{\alpha} \cdot  \nonumber \\
	& r^2_\mathtt{dom} , r^2_\mathtt{exc} \Big) {\rm d} \theta - \frac{\pi r^2_\mathtt{exc}}{2}, 
	\end{align} 
\end{lemma}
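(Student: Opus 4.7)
The plan is to apply the standard polar-coordinate area formula to the contour parameterization provided by Lemma \ref{lemma:Contour_large_exc_zone}, and then carefully subtract the portion of the exclusion zone swept out by the angular integration. Since Lemma \ref{lemma:Contour_large_exc_zone} already establishes that, under the same large-exclusion-zone regime assumed here, the contour admits the closed-form polar representation $r(\theta) = g(\theta)\, r_\mathtt{dom}$ with
\[
g(\theta) = \left[ \tfrac{\sin \bigl(\tfrac{\pi}{2} N^{(\mathtt{rad})}_\mathtt{az} (\sin \theta_\mathtt{rad} \cos \phi_\mathtt{rad} - \sin \theta) \bigr)}{N^{(\mathtt{rad})}_\mathtt{az} \sin \bigl( \tfrac{\pi}{2} (\sin \theta_\mathtt{rad} \cos \phi_\mathtt{rad} - \sin \theta) \bigr)} \right]^{2/\alpha},
\]
the remaining task is purely geometric: convert the enclosed region into a polar integral.

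First I would describe the region $\mathcal{R}(r_\mathtt{dom})$ whose area we want as the set of points that lie in the half-plane $\theta \in [-\tfrac{\pi}{2}, \tfrac{\pi}{2}]$, lie outside the exclusion zone (so $r \ge r_\mathtt{exc}$), and lie inside the contour (so $r \le r(\theta)$). For those angular slices where $r(\theta) \ge r_\mathtt{exc}$, the slice contributes an annular sector of inner radius $r_\mathtt{exc}$ and outer radius $r(\theta)$; for slices where $r(\theta) < r_\mathtt{exc}$, the contour is buried inside the exclusion zone and the slice contributes nothing. Both cases are uniformly captured by replacing the upper limit with $\max(r(\theta), r_\mathtt{exc})$, since the integrand $r\, dr$ vanishes when the lower and upper limits coincide.

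Next I would carry out the straightforward calculation
\[
A(r_\mathtt{dom}) = \int_{-\pi/2}^{\pi/2} \int_{r_\mathtt{exc}}^{\max(r(\theta),\, r_\mathtt{exc})} r \, dr \, d\theta = \tfrac{1}{2} \int_{-\pi/2}^{\pi/2} \bigl[ \max(r^2(\theta),\, r^2_\mathtt{exc}) - r^2_\mathtt{exc}\bigr]\, d\theta,
\]
and then pull the constant $r^2_\mathtt{exc}$ out of the integrand, producing the $-\tfrac{\pi r^2_\mathtt{exc}}{2}$ term. Substituting $r^2(\theta) = g(\theta)^2 r^2_\mathtt{dom}$ and noting that $g(\theta)^2$ matches the fourth-power-over-$\alpha$ expression in the statement (since the exponent $2/\alpha$ is squared) yields exactly equation (\ref{eq:B_r_dom_closedform}).

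The only real subtlety — and hence the main ``obstacle'' — is bookkeeping at the crossover angles where $r(\theta) = r_\mathtt{exc}$. One must verify that the $\max(\cdot, r^2_\mathtt{exc})$ formulation correctly accounts for contours that are non-convex in the polar sense, i.e., sidelobes and grating lobes of $g(\theta)$ that poke out past the exclusion zone separately from the main lobe. This holds because the derivation is purely pointwise in $\theta$: each angular slice is handled independently, so multiple lobes simply contribute additively through the integral, with no overlap concerns in the $(r,\theta)$ parameterization. This justifies using the same $\max$ expression regardless of whether the contour disconnects into several pieces outside the exclusion zone, completing the proof.
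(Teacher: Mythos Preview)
Your proposal is correct and follows essentially the same approach as the paper: describe the region in polar coordinates with radial limits $r_\mathtt{exc} \le r \le \max(r_\mathtt{exc}, \tilde r_\mathtt{dom}(\theta))$, evaluate the double integral $\int_{-\pi/2}^{\pi/2}\int_{r_\mathtt{exc}}^{\max(\cdot)} r\,dr\,d\theta$, and pull out the constant $r_\mathtt{exc}^2$ term to obtain the stated expression. Your added remark about the $\max(\cdot)$ handling multiple disconnected lobes is a nice clarification that the paper leaves implicit.
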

\begin{proof}
	Please refer Appendix \ref{appendix:proof_of_B_r_area}.
\end{proof}
Using the above result, the density of $r_\mathtt{dom}$ is characterized in the following lemma. 

\begin{lemma}\label{lemma:r_dom_density}
	The distribution and density function of $r_\mathtt{dom}$ are given by
	\begin{align}
	\label{eq:dist_of_r_dom}
	F_{R_\mathtt{dom}}(r_\mathtt{dom}) = & 1 - \exp \Big( - \frac{\lambda_\mathtt{BS}}{2} \int_{-\frac{\pi}{2}}^{\frac{\pi}{2}} \max\big(r^2_\mathtt{exc}, \tilde{r}^2_\mathtt{dom}(\theta) \big) {\rm d} \theta + \nonumber \\
	& \frac{\pi \lambda_\mathtt{BS} r^2_\mathtt{exc}}{2} \Big), \\
	\label{eq:density_of_r_dom_lemma}
	f_{R_\mathtt{dom}}(r_\mathtt{dom}) = & \lambda_\mathtt{BS} \Bigg[ \int_{-\tfrac{\pi}{2}}^{\tfrac{\pi}{2}} \Big[ \tfrac{\sin \big(\tfrac{\pi}{2} N^{(\mathtt{rad})}_\mathtt{az} (\sin \theta_\mathtt{rad} \cos \phi_\mathtt{rad} - \sin \theta) \big)}{N^{(\mathtt{rad})}_\mathtt{az} \sin \big( \tfrac{\pi}{2} (\sin \theta_\mathtt{rad} \cos \phi_\mathtt{rad} - \sin \theta) \big)} \Big]^{\frac{4}{\alpha}}  \cdot \nonumber \\
	& {r}_\mathtt{dom} \mathbbm{1} [\tilde{r}_\mathtt{dom}(\theta) \geq r_\mathtt{exc} ] {\rm d} \theta \Bigg] \cdot  \exp \Big(-\frac{\lambda_{BS}}{2} \times \nonumber \\
	& \int_{-\tfrac{\pi}{2}}^{\tfrac{\pi}{2}} \max\big(r^2_\mathtt{exc}, \tilde{r}^2_\mathtt{dom}(\theta) \big) {\rm d}\theta + \tfrac{\pi \lambda_{BS} r^2_\mathtt{exc}}{2}  \Big), 
	\end{align} 
	where $\tilde{r}_\mathtt{dom}(\theta) \triangleq r_\mathtt{dom} \Big[ \tfrac{\sin \big(\tfrac{\pi}{2} N^{(\mathtt{rad})}_\mathtt{az} (\sin \theta_\mathtt{rad} \cos \phi_\mathtt{rad} - \sin \theta) \big)}{N^{(\mathtt{rad})}_\mathtt{az} \sin \big( \tfrac{\pi}{2} (\sin \theta_\mathtt{rad} \cos \phi_\mathtt{rad} - \sin \theta) \big)} \Big]^{\frac{2}{\alpha}}$, and $\mathbbm{1}[\cdot]$ is the indicator function.
\end{lemma}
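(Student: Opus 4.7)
The plan is to obtain $F_{R_\mathtt{dom}}$ from the void probability of the homogeneous PPP $\mathbf{\Phi_{BS}}$ and then recover the density $f_{R_\mathtt{dom}}$ by differentiation. First I would translate the event $\{R_\mathtt{dom} > r_\mathtt{dom}\}$ into a condition on the point process. Under the bijection between $I_\mathtt{dom}$ and $r_\mathtt{dom}$ established in Lemma \ref{lemma:Contour_large_exc_zone}, $\{R_\mathtt{dom} > r_\mathtt{dom}\}$ is equivalent to $\{I_\mathtt{dom} < I(r_\mathtt{dom})\}$, which, under the dominant interferer approximation, holds if and only if no BS of $\mathbf{\Phi_{BS}}$ lies strictly inside the equi-interference contour $\mathcal{C}(I(r_\mathtt{dom}))$ while also being outside the exclusion zone. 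The area of this region is exactly $A(r_\mathtt{dom})$ as computed in Lemma \ref{lemma:area_B_r_closedform}.

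Next I would invoke the void probability of a homogeneous PPP of intensity $\lambda_\mathtt{BS}$, namely $\mathbb{P}(\mathbf{\Phi_{BS}}(\mathcal{B}) = 0) = \exp(-\lambda_\mathtt{BS}|\mathcal{B}|)$, to write $1 - F_{R_\mathtt{dom}}(r_\mathtt{dom}) = \exp(-\lambda_\mathtt{BS} A(r_\mathtt{dom}))$. Substituting the closed form for $A(r_\mathtt{dom})$ from (\ref{eq:B_r_dom_closedform}), recognising the abbreviation $\tilde{r}_\mathtt{dom}(\theta)$, and moving the constant $\pi \lambda_\mathtt{BS} r^2_\mathtt{exc}/2$ inside the exponent should directly reproduce (\ref{eq:dist_of_r_dom}).

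For the density, I would use $f_{R_\mathtt{dom}}(r_\mathtt{dom}) = \lambda_\mathtt{BS} A'(r_\mathtt{dom}) \exp(-\lambda_\mathtt{BS} A(r_\mathtt{dom}))$ and differentiate $A(r_\mathtt{dom})$ under the $\theta$-integral. The main technical obstacle, which I expect to be the only non-mechanical step, is handling the $\max$ function inside the integrand. Specifically, for each fixed $\theta$ I would apply the almost-everywhere identity $\tfrac{{\rm d}}{{\rm d}r}\max(g(r),c) = g'(r)\,\mathbbm{1}[g(r) \geq c]$ with $g(r_\mathtt{dom}) = \tilde{r}^2_\mathtt{dom}(\theta)$, which produces the indicator $\mathbbm{1}[\tilde{r}_\mathtt{dom}(\theta) \geq r_\mathtt{exc}]$ appearing in (\ref{eq:density_of_r_dom_lemma}). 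Differentiating $\tilde{r}^2_\mathtt{dom}(\theta) = r^2_\mathtt{dom}\bigl[\cdot\bigr]^{4/\alpha}$ in $r_\mathtt{dom}$ yields the linear-in-$r_\mathtt{dom}$ factor $2 r_\mathtt{dom}[\cdot]^{4/\alpha}$; the $1/2$ prefactor from Lemma \ref{lemma:area_B_r_closedform} cancels the factor of $2$, and a routine interchange of differentiation and integration (justified since on compact $r_\mathtt{dom}$-intervals the integrand is bounded uniformly in $\theta$) leaves exactly the expression in (\ref{eq:density_of_r_dom_lemma}). The exclusion-zone boundary at $\tilde{r}_\mathtt{dom}(\theta) = r_\mathtt{exc}$ is a measure-zero set in $\theta$, so no additional boundary contribution arises.
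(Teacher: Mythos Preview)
Your proposal is correct and follows essentially the same route as the paper: the CDF is obtained from the PPP void probability $F_{R_\mathtt{dom}}(r_\mathtt{dom}) = 1 - \exp(-\lambda_\mathtt{BS} A(r_\mathtt{dom}))$ using Lemma~\ref{lemma:area_B_r_closedform}, and the density is obtained by differentiating $A(r_\mathtt{dom})$ under the integral, handling the $\max$ via the indicator $\mathbbm{1}[\tilde{r}_\mathtt{dom}(\theta) \geq r_\mathtt{exc}]$ exactly as the paper does in (\ref{eq:derivative_B_r_dom_cases}). The only cosmetic difference is that you phrase the void-probability step through the complementary event $\{R_\mathtt{dom} > r_\mathtt{dom}\}$, whereas the paper states the CDF directly.
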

\begin{proof}
	Please refer Appendix \ref{appendix:lemma_r_dom_density_proof}.
\end{proof}
Since a bijection exists between $r_\mathtt{dom}$ and $I_\mathtt{dom}$, the density and distribution of $I_\mathtt{dom}$ can be derived similar to Lemma \ref{lemma:r_dom_density}, and is given in the following result.
\begin{lemma}\label{lemma:dist_and_density_of_I_dom}
	The distribution and density of $I_\mathtt{dom}$ under the AAECC model are given by
	\begin{align}
		\label{eq:I_dom_distrbn_final_expn}
		F_{I_\mathtt{dom}}(i_\mathtt{dom}) = & \exp\Big(-\tfrac{\lambda_\mathtt{BS} \kappa^{\frac{2}{\alpha}}}{2}\Big[ \int_{-\frac{\pi}{2}}^{\frac{\pi}{2}} \max \big( I^{-2/\alpha}_\mathtt{exc}, \tilde{i}^{-2/\alpha}_\mathtt{dom}(\theta)  \big) {\rm d} \theta \nonumber \\
		& - \pi I^{-2/\alpha}_\mathtt{exc} \Big] \Big), \\
		\label{eq:I_dom_density_final_expn}
		f_{I_\mathtt{dom}} (i_\mathtt{dom}) = & \tfrac{\lambda_\mathtt{BS} \kappa^{2/\alpha}}{\alpha} \Bigg[ \int_{-\frac{\pi}{2}}^{\frac{\pi}{2}} \Big[ \tfrac{\sin \big(\tfrac{\pi}{2} N^{(\mathtt{rad})}_\mathtt{az} (\sin \theta_\mathtt{rad} \cos \phi_\mathtt{rad} - \sin \theta) \big)}{N^{(\mathtt{rad})}_\mathtt{az} \sin \big( \tfrac{\pi}{2} (\sin \theta_\mathtt{rad} \cos \phi_\mathtt{rad} - \sin \theta) \big)} \Big]^{\frac{2}{\alpha}} \cdot \nonumber \\
		& i^{-(\alpha + 2)/\alpha}_\mathtt{dom}  \mathbbm{1}[\ \tilde{i}_\mathtt{dom} (\theta) \leq I_\mathtt{exc}] {\rm d} \theta \Bigg] \cdot \exp\Big(-\tfrac{\lambda_\mathtt{BS} \kappa^{\frac{2}{\alpha}}}{2} \cdot \nonumber \\
		& \Big[ \int_{-\frac{\pi}{2}}^{\frac{\pi}{2}} \max \big( I^{-\frac{2}{\alpha}}_\mathtt{exc}, \tilde{i}^{-\tfrac{2}{\alpha}}_\mathtt{dom}(\theta)  \big) {\rm d} \theta - \pi I^{-\frac{2}{\alpha}}_\mathtt{exc} \Big] \Big),
	\end{align}
	where $\tilde{i}_\mathtt{dom}(\theta) = i_\mathtt{dom} \tfrac{ N^{(\mathtt{rad})}_\mathtt{az} \sin \big( \tfrac{\pi}{2} (\sin \theta_\mathtt{rad} \cos \phi_\mathtt{rad} - \sin \theta) \big)}{\sin \big(\tfrac{\pi}{2} N^{(\mathtt{rad})}_\mathtt{az} (\sin \theta_\mathtt{rad} \cos \phi_\mathtt{rad} - \sin \theta) \big)}$, $\kappa = \frac{P_{BS}PL(r_0)G^{(\mathtt{max})}_\mathtt{BS}(0, \phi_\mathtt{m}(1/\sqrt{\pi \lambda_\mathtt{BS}}))}{K}\cdot \tfrac{N^{(\mathtt{rad})}_\mathtt{az} \sin^2 \big( \frac{\pi}{2} N^{(\mathtt{rad})}_\mathtt{el} \sin \phi_\mathtt{rad} \big)}{N^{(\mathtt{rad})}_\mathtt{el} \sin^2 \big( \tfrac{\pi}{2} \sin \phi_\mathtt{rad}  \big)}$, and $I_\mathtt{exc} = \kappa r^{-\alpha}_\mathtt{exc}$.
\end{lemma}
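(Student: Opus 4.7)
The plan is to leverage the strictly decreasing bijection $I_\mathtt{dom} = \kappa\, r^{-\alpha}_\mathtt{dom}$ established in Lemma \ref{lemma:Contour_large_exc_zone} to push forward the distribution and density of $r_\mathtt{dom}$ obtained in Lemma \ref{lemma:r_dom_density}. Because the map $r_\mathtt{dom} \mapsto \kappa r^{-\alpha}_\mathtt{dom}$ is strictly monotonically decreasing on $(0,\infty)$, the event $\{I_\mathtt{dom} \leq i_\mathtt{dom}\}$ coincides with $\{r_\mathtt{dom} \geq (\kappa/i_\mathtt{dom})^{1/\alpha}\}$, which immediately gives $F_{I_\mathtt{dom}}(i_\mathtt{dom}) = 1 - F_{R_\mathtt{dom}}\!\bigl((\kappa/i_\mathtt{dom})^{1/\alpha}\bigr)$. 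All that remains is to substitute this point into the closed-form expressions from Lemma \ref{lemma:r_dom_density} and rewrite the result in the intensity domain.

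For the CDF, I would substitute $r_\mathtt{dom} = (\kappa/i_\mathtt{dom})^{1/\alpha}$ into equation (\ref{eq:dist_of_r_dom}). This replaces $r^2_\mathtt{dom}$, which appears inside $\tilde{r}^2_\mathtt{dom}(\theta)$, by $\kappa^{2/\alpha} i^{-2/\alpha}_\mathtt{dom}$, and replaces $r^2_\mathtt{exc}$ by $\kappa^{2/\alpha} I^{-2/\alpha}_\mathtt{exc}$ using the definition $I_\mathtt{exc} = \kappa r^{-\alpha}_\mathtt{exc}$. Factoring the common $\kappa^{2/\alpha}$ out of the $\max(\cdot,\cdot)$ integrand and of the additive $\pi r^2_\mathtt{exc}/2$ term collapses the expression to the form in (\ref{eq:I_dom_distrbn_final_expn}), where $\tilde{i}_\mathtt{dom}(\theta)$ is precisely the intensity-domain image of $\tilde{r}_\mathtt{dom}(\theta)$ under the bijection (with the exclusion-zone inequality flipped in direction).

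For the density, I would apply the one-to-one change-of-variables formula
\[
f_{I_\mathtt{dom}}(i_\mathtt{dom}) = f_{R_\mathtt{dom}}\!\bigl((\kappa/i_\mathtt{dom})^{1/\alpha}\bigr)\,\left|\tfrac{d}{di_\mathtt{dom}}(\kappa/i_\mathtt{dom})^{1/\alpha}\right| = \tfrac{\kappa^{1/\alpha}}{\alpha}\, i^{-(\alpha+1)/\alpha}_\mathtt{dom}\, f_{R_\mathtt{dom}}\!\bigl((\kappa/i_\mathtt{dom})^{1/\alpha}\bigr).
\]
Plugging in equation (\ref{eq:density_of_r_dom_lemma}) and combining the Jacobian with the explicit $r_\mathtt{dom} = (\kappa/i_\mathtt{dom})^{1/\alpha}$ appearing inside the pre-exponential angular integral yields the prefactor $\tfrac{\lambda_\mathtt{BS} \kappa^{2/\alpha}}{\alpha}\, i^{-(\alpha+2)/\alpha}_\mathtt{dom}$ of (\ref{eq:I_dom_density_final_expn}). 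The exponential factor is inherited verbatim from the CDF, and the indicator $\mathbbm{1}[\tilde{r}_\mathtt{dom}(\theta) \geq r_\mathtt{exc}]$ converts into $\mathbbm{1}[\tilde{i}_\mathtt{dom}(\theta) \leq I_\mathtt{exc}]$ under the same bijection.

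The main obstacle is purely algebraic bookkeeping: one must carefully track three coupled exponent scalings (the path-loss exponent $\alpha$, the $2/\alpha$ coming from the square of the contour radius $r(\theta) = r_\mathtt{dom}\,[\,\cdots]^{2/\alpha}$ in the area element, and the $1/\alpha$ in the Jacobian), and simultaneously invert all spatial inequalities so that ``inside the contour and outside the exclusion zone'' in the geometric picture corresponds to ``interference larger than $i_\mathtt{dom}$ but smaller than $I_\mathtt{exc}$'' in the intensity picture. No new probabilistic tools beyond Lemmas \ref{lemma:Contour_large_exc_zone} and \ref{lemma:r_dom_density} are required.
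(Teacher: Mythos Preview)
Your proposal is correct and follows essentially the same approach as the paper: both exploit the monotone bijection $I_\mathtt{dom}=\kappa r_\mathtt{dom}^{-\alpha}$ from Lemma~\ref{lemma:Contour_large_exc_zone} to write $F_{I_\mathtt{dom}}(i_\mathtt{dom})=1-F_{R_\mathtt{dom}}\bigl((\kappa/i_\mathtt{dom})^{1/\alpha}\bigr)=\exp\bigl(-\lambda_\mathtt{BS}A(r_\mathtt{dom})\bigr)$ and then substitute. The only cosmetic difference is that for the density you push forward $f_{R_\mathtt{dom}}$ via the Jacobian, whereas the paper differentiates the already-obtained $F_{I_\mathtt{dom}}$ directly in $i_\mathtt{dom}$; by the chain rule these are the same computation.
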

\begin{proof}
From equation (\ref{eq:Contour_large_exc_zone_radius}), we observe that the bijection between the dominant interference power $I_\mathtt{dom}$ and the corresponding farthest contour distance $r_\mathtt{dom}$ can be represented by $I_\mathtt{dom} = \kappa r^{-\alpha}_\mathtt{dom}$. Since $I$ monotonically decreases with increasing $r$, the CDF of $I_\mathtt{dom}$ is given by $\mathbb{P}[I_\mathtt{dom} \leq i_\mathtt{dom}] = \mathbb{P}[R_\mathtt{dom} \geq r_\mathtt{dom}]$. Using equation (\ref{eq:dist_of_r_dom}), we get $F_{I_\mathtt{dom}}(i_\mathtt{dom}) = \exp\big( - \lambda_\mathtt{BS} A(r_\mathtt{dom}) \big)$ for $r_\mathtt{dom} \geq r_\mathtt{exc}$. Using the bijection and simplifying, we get the desired CDF. The density is obtained in a similar manner as Lemma \ref{lemma:r_dom_density}, by differentiating equation (\ref{eq:I_dom_distrbn_final_expn}) w.r.t. $i_\mathtt{dom}$. 
\end{proof}
\vspace{-15pt}
\subsection{Total Interference Power at the Radar}
Since $r_\mathtt{dom}$ can equivalently represent the equal interference contour $\mathcal{C}(I_\mathtt{dom})$, we use Lemma \ref{lemma:r_dom_density} in the following result to approximately characterize the total interference power at the radar, using the dominant interferer method.
\begin{theorem}\label{theorem:I_tot_dom_int_method}
	The total interference power at the radar under the AAECC model and Assumption \ref{Assump_dom_int_approx} is given by
	\begin{align}
		\label{eq:domin_interf_approx_closed_form_rad_mMIMO}
		I_\mathtt{tot}(r_\mathtt{dom}) = & \kappa \Big[r^{-\alpha}_\mathtt{dom} + \tfrac{\lambda_\mathtt{BS}}{\alpha - 2} \int_{-\tfrac{\pi}{2}}^{\tfrac{\pi}{2}}  \big(\max\big(r_\mathtt{exc}, \tilde{r}_\mathtt{dom} (\theta) \big) \big)^{-\alpha + 2} \cdot \nonumber \\
		& \Big[ \tfrac{\sin \big(\tfrac{\pi}{2} N^{(\mathtt{rad})}_\mathtt{az} (\sin \theta_\mathtt{rad} \cos \phi_\mathtt{rad} - \sin \theta) \big)}{N^{(\mathtt{rad})}_\mathtt{az} \sin \big( \tfrac{\pi}{2} (\sin \theta_\mathtt{rad} \cos \phi_\mathtt{rad} - \sin \theta) \big)} \Big]^2 {\rm d}\theta \Big].
	\end{align}
\end{theorem}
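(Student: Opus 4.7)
The plan is to directly apply the dominant interferer approximation from Assumption \ref{Assump_dom_int_approx}, whereby $I_\mathtt{tot} \approx I_\mathtt{dom} + \mathbb{E}[I_\mathtt{rest}\mid I_\mathtt{dom}]$, and then evaluate each term using the geometric characterization of the equi-interference contour developed in Lemmas \ref{lemma:Contour_large_exc_zone}--\ref{lemma:r_dom_density}. The first term is immediate: the bijection $I_\mathtt{dom} = \kappa r_\mathtt{dom}^{-\alpha}$ identified in the proof of Lemma \ref{lemma:dist_and_density_of_I_dom} gives $I_\mathtt{dom} = \kappa r_\mathtt{dom}^{-\alpha}$ with no further work, accounting for the first summand inside the brackets of \eqref{eq:domin_interf_approx_closed_form_rad_mMIMO}.

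For the conditional expectation $\mathbb{E}[I_\mathtt{rest}\mid r_\mathtt{dom}]$, I would first write the per-BS worst-case interference from a transmitter at polar position $(r,\theta)$ in the large-exclusion-zone regime. Using Lemma \ref{Lemma_Monotonic_BF_Gain} together with $\phi(r)\approx 0$ and $\phi_\mathtt{m}\approx \phi_\mathtt{m}(1/\sqrt{\pi\lambda_\mathtt{BS}})$, the radar beamforming gain factors as the elevation factor (absorbed into $\kappa$) times the squared azimuth sinc-ratio $h(\theta)^2$, where $h(\theta) \triangleq \sin\!\bigl(\tfrac{\pi}{2}N^{(\mathtt{rad})}_\mathtt{az}(\sin\theta_\mathtt{rad}\cos\phi_\mathtt{rad}-\sin\theta)\bigr)/\bigl[N^{(\mathtt{rad})}_\mathtt{az}\sin\!\bigl(\tfrac{\pi}{2}(\sin\theta_\mathtt{rad}\cos\phi_\mathtt{rad}-\sin\theta)\bigr)\bigr]$. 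This gives the compact form $I_\mathtt{single}(r,\theta) = \kappa r^{-\alpha}h(\theta)^2$, which can be sanity-checked by noting that $(r,\theta)\in\mathcal{C}(I_\mathtt{dom})$ with $r=\tilde{r}_\mathtt{dom}(\theta)=r_\mathtt{dom}\,h(\theta)^{2/\alpha}$ yields exactly $\kappa r_\mathtt{dom}^{-\alpha}=I_\mathtt{dom}$.

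Next I would apply Campbell's theorem to the restriction of $\mathbf{\Phi_{int}}$ lying strictly outside the contour (so that the dominant interferer at $r_\mathtt{dom}$ is excluded). Conditioning on $r_\mathtt{dom}$, by the Slivnyak-type reasoning used throughout the paper the remaining points form a homogeneous PPP of intensity $\lambda_\mathtt{BS}$ outside $\mathcal{A}_\mathtt{exc}\cup B(r_\mathtt{dom})$, where $B(r_\mathtt{dom})$ is the region enclosed by $\mathcal{C}(I_\mathtt{dom})$ in the azimuth $[-\pi/2,\pi/2]$. Therefore
\begin{align*}
\mathbb{E}[I_\mathtt{rest}\mid r_\mathtt{dom}] = \lambda_\mathtt{BS}\!\int_{-\pi/2}^{\pi/2}\!\!\int_{\max(r_\mathtt{exc},\,\tilde{r}_\mathtt{dom}(\theta))}^{\infty}\!\! \kappa\, r^{-\alpha} h(\theta)^2\, r\,{\rm d}r\,{\rm d}\theta,
\end{align*}
and since $\alpha>2$, the radial integral collapses to $(\max(r_\mathtt{exc},\tilde{r}_\mathtt{dom}(\theta)))^{-\alpha+2}/(\alpha-2)$. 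Summing the two contributions yields precisely \eqref{eq:domin_interf_approx_closed_form_rad_mMIMO}.

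The main subtle point, and the step where I expect to spend the most care, is the correct specification of the integration domain for $I_\mathtt{rest}$: the contour $\tilde{r}_\mathtt{dom}(\theta)$ can dip inside $r_\mathtt{exc}$ for azimuths in the sidelobes/nulls of the radar pattern, and in those directions the effective inner boundary must revert to $r_\mathtt{exc}$ rather than $\tilde{r}_\mathtt{dom}(\theta)$. The $\max(r_\mathtt{exc},\tilde{r}_\mathtt{dom}(\theta))$ inside the integrand is exactly what enforces this, mirroring the area computation in Lemma \ref{lemma:area_B_r_closedform}. Once this is handled, the rest of the derivation is a direct Campbell-theorem calculation and requires no further approximation beyond those already in force (AAECC model, Assumption \ref{Assump_dom_int_approx}, and $r_\mathtt{exc}\gg h_\mathtt{BS},h_\mathtt{rad}$).
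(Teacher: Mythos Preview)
Your proposal is correct and follows essentially the same route as the paper: write $I_\mathtt{dom}=\kappa r_\mathtt{dom}^{-\alpha}$, then compute $\mathbb{E}[I_\mathtt{rest}\mid r_\mathtt{dom}]$ via a Campbell-theorem integral over the region outside both the exclusion zone and the equi-interference contour, with inner radial limit $\max(r_\mathtt{exc},\tilde{r}_\mathtt{dom}(\theta))$, and evaluate the radial integral to obtain the $(\alpha-2)^{-1}(\max(\cdot))^{-\alpha+2}$ factor. The subtlety you flag about the contour dipping inside $r_\mathtt{exc}$ at sidelobe/null azimuths is exactly the point handled by the $\max$ in the paper's proof as well.
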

\begin{proof}
	Please refer Appendix \ref{appendix:proof_of_I_tot_dom_int_apprx}.
\end{proof}
\begin{remark}\label{remark:finite_supp_I_tot_DI}
	It is worthwhile to note that $I_\mathtt{tot}(r_\mathtt{dom})$ has finite support, i.e. $I_\mathtt{tot} \in (0, I_\mathtt{exc} + \bar{I}_\mathtt{rad, a})$. This is because the maximum dominant interference power is upper bounded by $I_\mathtt{exc}$, and the corresponding conditional average interference power is $\bar{I}_\mathtt{rad,a}$ (equation \ref{Boresight_Cell_Edge_BF_AACC_mean}).
\end{remark}
In the following corollary, we prove that a bijection exists between $I_\mathtt{tot,DI}$ and $r_\mathtt{dom}$. 
\begin{corollary}\label{corollary:monoton_DI_I_tot}
	Under Theorem \ref{theorem:I_tot_dom_int_method}, $I_\mathtt{tot}$ monotonically decreases with $r_\mathtt{dom}$. 
\end{corollary}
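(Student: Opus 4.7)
The plan is to split $I_\mathtt{tot}(r_\mathtt{dom})$ into its two additive pieces from equation (\ref{eq:domin_interf_approx_closed_form_rad_mMIMO}) and show that each is a non-increasing function of $r_\mathtt{dom}$, with the first strictly decreasing. Concretely, write $I_\mathtt{tot}(r_\mathtt{dom}) = \kappa T_1(r_\mathtt{dom}) + \kappa T_2(r_\mathtt{dom})$, where $T_1(r_\mathtt{dom}) = r_\mathtt{dom}^{-\alpha}$ and $T_2(r_\mathtt{dom})$ is the integral term. Since $\alpha > 2$ by assumption, $T_1$ is manifestly strictly decreasing on $[r_\mathtt{exc}, \infty)$, with derivative $-\alpha r_\mathtt{dom}^{-\alpha - 1} < 0$.

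For $T_2$, I would first strip off the $r_\mathtt{dom}$-dependence inside $\tilde{r}_\mathtt{dom}(\theta)$. Define the $\theta$-dependent, $r_\mathtt{dom}$-independent factor
\begin{equation*}
h(\theta) \triangleq \Bigl[\tfrac{\sin\bigl(\tfrac{\pi}{2} N^{(\mathtt{rad})}_\mathtt{az}(\sin\theta_\mathtt{rad}\cos\phi_\mathtt{rad} - \sin\theta)\bigr)}{N^{(\mathtt{rad})}_\mathtt{az}\sin\bigl(\tfrac{\pi}{2}(\sin\theta_\mathtt{rad}\cos\phi_\mathtt{rad} - \sin\theta)\bigr)}\Bigr]^{2/\alpha},
\end{equation*}
so that $\tilde{r}_\mathtt{dom}(\theta) = r_\mathtt{dom}\, h(\theta)$ by the definition in Lemma \ref{lemma:r_dom_density}. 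For each fixed $\theta \in [-\pi/2, \pi/2]$, the map $r_\mathtt{dom} \mapsto \max(r_\mathtt{exc}, r_\mathtt{dom} h(\theta))$ is non-decreasing (constant on the regime $r_\mathtt{dom} h(\theta) \le r_\mathtt{exc}$ and strictly linearly increasing afterwards). Raising to the power $-(\alpha - 2) < 0$ yields a non-increasing, non-negative function of $r_\mathtt{dom}$ pointwise in $\theta$. Multiplying by the $r_\mathtt{dom}$-independent, non-negative weight $[h(\theta)]^\alpha$ (which is simply the bracketed squared kernel appearing in the integrand) preserves the non-increasing property.

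Integrating a pointwise non-increasing family of non-negative functions over $\theta \in [-\pi/2, \pi/2]$ yields a non-increasing function of $r_\mathtt{dom}$ by monotonicity of the Lebesgue integral, so $T_2$ is non-increasing. Summing with the strictly decreasing $T_1$ and multiplying by $\kappa > 0$ gives that $I_\mathtt{tot}(r_\mathtt{dom})$ is strictly decreasing on $[r_\mathtt{exc}, \infty)$, establishing the bijection between $I_\mathtt{tot}$ and $r_\mathtt{dom}$ noted just before the corollary.

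There is no real obstacle here: the only subtlety is recognizing that the $\max$ operator and the $r_\mathtt{dom}$-dependence factorize cleanly once $\tilde{r}_\mathtt{dom}(\theta)$ is written as $r_\mathtt{dom} h(\theta)$, after which strict decrease follows term-by-term. One might alternatively differentiate under the integral sign to obtain an explicit negative expression on the open set where $\tilde{r}_\mathtt{dom}(\theta) > r_\mathtt{exc}$ has positive $\theta$-measure, but the pointwise monotonicity argument is cleaner and avoids any appeal to differentiability at the kink $\tilde{r}_\mathtt{dom}(\theta) = r_\mathtt{exc}$.
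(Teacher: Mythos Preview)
Your proof is correct and follows essentially the same approach as the paper: split $I_\mathtt{tot}$ into the dominant term $\kappa r_\mathtt{dom}^{-\alpha}$ and the conditional-average integral, and show each is non-increasing in $r_\mathtt{dom}$. The only cosmetic difference is that the paper phrases the monotonicity of the second term geometrically (via the set inclusion $\mathcal{A}(r_\mathtt{dom}) \subset \mathcal{A}(k r_\mathtt{dom})$ for $k>1$, so the integration region for the residual interference shrinks), whereas you argue it analytically via pointwise monotonicity of the integrand in $\theta$; these are equivalent observations.
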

\begin{proof}
	The proof follows by showing that both terms in equation (\ref{eq:domin_interf_approx_closed_form_rad_mMIMO}) monotonically decrease with $r_\mathtt{dom}$. It is clear that $I_\mathtt{dom}$ monotonically decreases with increasing $r_\mathtt{dom}$. In addition, we note that $\mathcal{A}(r_\mathtt{dom}) \subset \mathcal{A}(kr_\mathtt{dom})\ \forall\ k \in \mathbb{R}, k > 1$. As a result, the integration region and hence, the average interference power in equation (\ref{equation:Avg_est_condn_on_I_dom}) shrinks as $r_\mathtt{dom}$ increases. Therefore, the sum of these terms decreases monotonically with $r_\mathtt{dom}$.
\end{proof}
Hence, a bijection exists between $r_\mathtt{dom}$ and $I_\mathtt{tot}$ under the dominant interferer approximation. Unfortunately, the mapping from $I_\mathtt{tot}$ to $r_\mathtt{dom}$ cannot be expressed in closed-form. Hence for tractability, we use the distribution of $r_\mathtt{dom}$ in place of $I_\mathtt{tot}$ to characterize the radar performance metrics in the following section. 
\vspace{-15pt}
\section{Characterization of Spatial Probability of Detection and False Alarm}
In this section, we use the distribution of $r_\mathtt{dom}$ to characterize the impact of cellular interference on the radar's detection and false alarm performance in a target tracking scenario. 
\vspace{-15pt}
\subsection{Radar Received Signal Model}
In the presence of cellular interference and noise, the aggregate received signal depends on the presence or absence of a target at $(\theta_\mathtt{rad}, \phi_\mathtt{rad})$, when the radar performs receive beamforming using the weights $\mathbf{w_{rad}} = \tfrac{1}{\sqrt{M_\mathtt{rad}}}\mathbf{a}(\theta_\mathtt{rad}, \phi_\mathtt{rad})$. Denoting the received signal post-beamforming at time index $n$ is $y_\mathtt{rad}[n]$, we assume that the radar calculates the test statistic $P_\mathtt{rad} = \frac{1}{N}\sum_{n=1}^N |y_\mathtt{rad}[n]|^2$ in an estimation window of $N$ samples. Let $\mathcal{H}_0$ denote the hypothesis that there is no target, and $\mathcal{H}_1$ denote the hypothesis that there is a target. We assume that each BS transmits i.i.d. complex Gaussian signals, and noise is i.i.d. circularly symmetric complex Gaussian. In near-LoS channel conditions, when BSs transmit i.i.d. Gaussian signals, the aggregate interference signal is Gaussian distributed \textit{when conditioned on the BS locations} $\mathbf{\Phi_{int}}$. Thus, the received signal under each hypothesis can be written as

\begin{align}
	\label{equation:hypoth_test_radar1}
	\mathcal{H}_0 & : y_{\mathtt{rad},0}[n] = \sqrt{(I_\mathtt{tot}(\mathbf{\Phi_{int}}) + \sigma^2_w)} w[n], \\
	\mathcal{H}_1 & : y_{\mathtt{rad},1}[n] = \sqrt{(I_\mathtt{tot}(\mathbf{\Phi_{int}}) + \sigma^2_w)} w[n] + \sqrt{P_\mathtt{tar}} e^{j \alpha [n]},
	\label{equation:hypoth_test_radar2}
\end{align} 
where $I_\mathtt{tot}(\mathbf{\Phi_{int}})$ is the aggregate interference power, $\sigma^2_n$ denotes the noise variance, $w[n] \sim \mathcal{CN}(0,1)$, $P_\mathtt{tar}$ is the received power due to target scatter, and $\alpha[n]$ is the phase of the target return at time $n$. Using this system model, we have the following lemma. 

\begin{lemma}\label{lemma:central_non_central_chi_sq_dist}
	The conditional distributions of $P_\mathtt{rad}$ under $\mathcal{H}_0$ and $\mathcal{H}_1$ are given by
	\begin{align}
		\label{equation:dist_test_stat_und_2_hypoth}
		\mathcal{H}_0 : & F_{P_{\mathtt{rad},0}}(p|I_\mathtt{tot} (\mathbf{\Phi_{int}})) = \tfrac{1}{ (N-1)! } \gamma_l \Big( N, \tfrac{Np}{I_\mathtt{tot}(\mathbf{\Phi_{int}}) + \sigma^2_w} \Big), \nonumber \\
		\mathcal{H}_1 : & F_{P_\mathtt{rad},1}(p|I_\mathtt{tot} (\mathbf{\Phi_{int}})) = 1 - Q_N \Big( \sqrt{\tfrac{2 N P_\mathtt{tar}}{I_\mathtt{tot}(\mathbf{\Phi_{int}}) + \sigma^2_w}}, \nonumber \\
		& \sqrt{\tfrac{2 N p}{I_\mathtt{tot}(\mathbf{\Phi_{int}}) + \sigma^2_w}} \Big),
	\end{align} 
	where $\gamma_l (a,x) = \int_0^x z^{a-1} e^{-z} {\rm d}z$ is the lower incomplete gamma function, $Q_N (a,b) = \int_b^\infty z^N/a^{N-1} \cdot \exp(-(z^2 + a^2)/2) I_{N-1} (az) {\rm d}z$ is the Marcum Q-function, and $I_{N-1}(z)$ is the modified Bessel function of order $(N-1)$.
\end{lemma}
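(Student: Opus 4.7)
The plan is to condition on the BS point process $\mathbf{\Phi_{int}}$ so that $I_\mathtt{tot}(\mathbf{\Phi_{int}})$ becomes a deterministic constant, and then collapse the two hypothesis models into textbook detection-in-Gaussian-noise problems. Concretely, I would define the effective per-sample noise-plus-interference power $\sigma^2_\mathtt{eff} \triangleq I_\mathtt{tot}(\mathbf{\Phi_{int}}) + \sigma^2_w$, after which the derivation reduces to identifying which (possibly non-central) chi-squared law $2 N P_\mathtt{rad}/\sigma^2_\mathtt{eff}$ obeys under each hypothesis and then invoking the standard CDF representations.

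Under $\mathcal{H}_0$ (first step), equation (\ref{equation:hypoth_test_radar1}) gives $y_{\mathtt{rad},0}[n] = \sqrt{\sigma^2_\mathtt{eff}}\, w[n]$, so the samples are i.i.d.\ $\mathcal{CN}(0, \sigma^2_\mathtt{eff})$ and each $2|y_{\mathtt{rad},0}[n]|^2/\sigma^2_\mathtt{eff}$ is a central $\chi^2_2$ random variable. Independence across the $N$ samples yields $2 N P_\mathtt{rad}/\sigma^2_\mathtt{eff} \sim \chi^2_{2N}$, or equivalently $N P_\mathtt{rad} \sim \mathrm{Gamma}(N, \sigma^2_\mathtt{eff})$. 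Writing the Gamma CDF at $N p$ via the lower incomplete gamma function and using $\Gamma(N) = (N-1)!$ directly produces the first line of (\ref{equation:dist_test_stat_und_2_hypoth}).

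Under $\mathcal{H}_1$ (second step), equation (\ref{equation:hypoth_test_radar2}) gives $y_{\mathtt{rad},1}[n] \sim \mathcal{CN}\bigl(\sqrt{P_\mathtt{tar}}\,e^{j\alpha[n]},\, \sigma^2_\mathtt{eff}\bigr)$, still independent across $n$. Hence $2|y_{\mathtt{rad},1}[n]|^2/\sigma^2_\mathtt{eff}$ is a non-central $\chi^2_2$ with non-centrality $2 P_\mathtt{tar}/\sigma^2_\mathtt{eff}$; the phase $\alpha[n]$ drops out because the non-centrality depends on the mean only through $|\sqrt{P_\mathtt{tar}}\,e^{j\alpha[n]}|^2 = P_\mathtt{tar}$. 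By the additivity of independent non-central chi-squared laws,
\begin{equation*}
\tfrac{2 N P_\mathtt{rad}}{\sigma^2_\mathtt{eff}} \sim \chi^2_{2N}\!\left(\tfrac{2 N P_\mathtt{tar}}{\sigma^2_\mathtt{eff}}\right).
\end{equation*}
Substituting the standard identity $\mathbb{P}\!\left[\chi^2_{2N}(\lambda) \leq z\right] = 1 - Q_N\!\left(\sqrt{\lambda}, \sqrt{z}\right)$ with $z = 2 N p/\sigma^2_\mathtt{eff}$ and $\lambda = 2 N P_\mathtt{tar}/\sigma^2_\mathtt{eff}$ reproduces the second line of (\ref{equation:dist_test_stat_und_2_hypoth}).

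There is no deep analytical obstacle: the lemma is essentially a repackaging of classical radar detection results once the interference has been converted into an effective Gaussian noise term by the conditioning argument. The only subtleties worth pausing on are (i) verifying that the phase sequence $\alpha[n]$ does not alter the distribution, which follows from rotational symmetry of $\mathcal{CN}$ noise and the unit modulus of $e^{j\alpha[n]}$, and (ii) emphasizing that the entire statement is \emph{conditional} on $\mathbf{\Phi_{int}}$; the un-conditioning step that averages over the distribution of $r_\mathtt{dom}$ from Lemma \ref{lemma:r_dom_density} is deferred to the detection and false-alarm analysis that follows.
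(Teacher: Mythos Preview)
Your proposal is correct and follows essentially the same approach as the paper: condition on $\mathbf{\Phi_{int}}$ so the interference becomes deterministic, then recognize $2NP_\mathtt{rad}/(I_\mathtt{tot}+\sigma^2_w)$ as a central (resp.\ non-central) $\chi^2_{2N}$ variable under $\mathcal{H}_0$ (resp.\ $\mathcal{H}_1$) and read off the CDFs via the lower incomplete gamma function and the Marcum $Q$-function. The paper makes the real/imaginary decomposition explicit where you invoke additivity of non-central chi-squared laws, but the argument is the same in substance.
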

\begin{proof}
	We observe from equation (\ref{equation:hypoth_test_radar1}) that under hypothesis $\mathcal{H}_0$, each sample in the estimation window is i.i.d. complex Gaussian distributed such that
	\begin{align*}  \Re\Big(\frac{y_{\mathtt{rad},0}[n]\sqrt{2N}}{\sqrt{I_\mathtt{tot}(\mathbf{\Phi_{int}}) + \sigma^2_w}} \Big) & \sim \mathcal{N}(0, 1),  \text{ and } \\ \Im\Big(\frac{y_{\mathtt{rad},0}[n]\sqrt{2N}}{\sqrt{I_\mathtt{tot}(\mathbf{\Phi_{int}}) + \sigma^2_w}} \Big) & \sim \mathcal{N}(0, 1),
	\end{align*} 
	for $n=1,2,\cdots,N$, where $\Re(\cdot)$ and $\Im(\cdot)$ denote the real and imaginary parts. Taking the squared sum of these terms, we observe that $\frac{2N P_{\mathtt{rad},0}}{I_\mathtt{tot}(\mathbf{\Phi_{int}}) + \sigma^2_w}$ is chi-squared distributed with $2N$ degrees of freedom, and the CDF follows accordingly.
	
	Similarly, the received signal samples under $\mathcal{H}_1$ are independent such that
	\begin{align*} 
	\Re\Big(\frac{y_{\mathtt{rad},1}[n]\sqrt{2N}}{\sqrt{I_\mathtt{tot}(\mathbf{\Phi_{int}}) + \sigma^2_w}} \Big) & \sim \mathcal{N}\Big(\frac{\sqrt{2NP_\mathtt{tar}} \cos( \alpha [n]) }{\sqrt{I_\mathtt{tot}(\mathbf{\Phi_{int}}) + \sigma^2_w}}, 1 \Big), \nonumber \text{ and } \\
	\Im\Big(\frac{y_{\mathtt{rad},1}[n]\sqrt{2N}}{\sqrt{I_\mathtt{tot}(\mathbf{\Phi_{int}}) + \sigma^2_w}} \Big) & \sim \mathcal{N}\Big(\frac{\sqrt{2NP_\mathtt{tar}} \sin( \alpha [n]) }{\sqrt{I_\mathtt{tot}(\mathbf{\Phi_{int}}) + \sigma^2_w}}, 1 \Big), 
	\end{align*}
	for $n=1,2,\cdots,N$. Taking the squared sum of these terms, we see that $\frac{2NP_{\mathtt{rad},1}}{I_\mathtt{tot}(\mathbf{\Phi_{int}}) + \sigma^2_w}$ has a non-central chi-squared distribution with $2N$ degrees of freedom and non-central parameter $\lambda = \frac{2NP_\mathtt{tar}}{I_\mathtt{tot}(\mathbf{\Phi_{int}}) + \sigma^2_w}$. The CDF follows accordingly. 
\end{proof}
% CLT approximation
\begin{corollary} \label{corollary:CLT_approx_N_infty}
	When $N \rightarrow \infty$, the conditional distributions of $P_\mathtt{rad}$ under $\mathcal{H}_0$ and $\mathcal{H}_1$ become
	\begin{align}
		\label{equation:dist_test_stat_CLT_approx}
		\mathcal{H}_0 : & F_{P_{\mathtt{rad},0}} (p|I_\mathtt{tot} (\mathbf{\Phi_{int}})) = 1 - Q \Big( \tfrac{\sqrt{N} (p - I_\mathtt{tot}(\mathbf{\Phi_{int}}) - \sigma^2_w)}{I_\mathtt{tot}(\mathbf{\Phi_{int}}) + \sigma^2_w}\Big), \nonumber \\
		\mathcal{H}_1 : & F_{P_{\mathtt{rad},1}} (p|I_\mathtt{tot} (\mathbf{\Phi_{int}})) = \nonumber \\
		&  \quad 1 - Q \Big( \tfrac{\sqrt{N} (p - P_\mathtt{tar} - I_\mathtt{tot}(\mathbf{\Phi_{int}}) - \sigma^2_w)}{\sqrt{(P_\mathtt{tar} + I_\mathtt{tot}(\mathbf{\Phi_{int}}) + \sigma^2_w)^2 - P^2_\mathtt{tar}}} \Big),
	\end{align}
	where $Q(x) = 1/\sqrt{2 \pi} \int_x^\infty \exp(-u^2/2) {\rm d}u$ is the Q-function.
\end{corollary}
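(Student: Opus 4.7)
The plan is to obtain both limits directly from the exact finite-$N$ distributions established in Lemma \ref{lemma:central_non_central_chi_sq_dist} by invoking the central limit theorem for (non-central) chi-squared random variables. Concretely, conditioned on $\mathbf{\Phi_{int}}$, $N P_\mathtt{rad}$ is a sum of $N$ i.i.d.\ (resp.\ independent) nonnegative contributions $|y_\mathtt{rad}[n]|^2$, so it suffices to compute the first two moments under each hypothesis and apply a standard Lyapunov/Lindeberg CLT; the scaling $1/N$ then gives a Gaussian limit whose CDF immediately rearranges into the stated $Q$-function form.

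Under $\mathcal{H}_0$, Lemma \ref{lemma:central_non_central_chi_sq_dist} asserts that $\frac{2N P_{\mathtt{rad},0}}{I_\mathtt{tot}(\mathbf{\Phi_{int}})+\sigma_w^2}\sim\chi^2_{2N}$, which has mean $2N$ and variance $4N$. As $N\to\infty$, standardizing and using CLT yields
\[
P_{\mathtt{rad},0}\big|\mathbf{\Phi_{int}}\ \xrightarrow{d}\ \mathcal{N}\!\left(I_\mathtt{tot}(\mathbf{\Phi_{int}})+\sigma_w^2,\ \tfrac{(I_\mathtt{tot}(\mathbf{\Phi_{int}})+\sigma_w^2)^2}{N}\right),
\]
whose CDF is $1-Q\big(\sqrt{N}(p-I_\mathtt{tot}-\sigma_w^2)/(I_\mathtt{tot}+\sigma_w^2)\big)$, matching the first line of \eqref{equation:dist_test_stat_CLT_approx}.

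Under $\mathcal{H}_1$, $\frac{2N P_{\mathtt{rad},1}}{I_\mathtt{tot}+\sigma_w^2}$ is non-central chi-squared with $2N$ degrees of freedom and non-centrality parameter $\lambda=\frac{2N P_\mathtt{tar}}{I_\mathtt{tot}+\sigma_w^2}$, giving mean $2N+\lambda$ and variance $4N+4\lambda$. Applying CLT and unscaling,
\[
\mathbb{E}[P_{\mathtt{rad},1}|\mathbf{\Phi_{int}}]=I_\mathtt{tot}+\sigma_w^2+P_\mathtt{tar},\qquad \mathrm{Var}[P_{\mathtt{rad},1}|\mathbf{\Phi_{int}}]=\tfrac{(I_\mathtt{tot}+\sigma_w^2)\big[(I_\mathtt{tot}+\sigma_w^2)+2P_\mathtt{tar}\big]}{N}.
\]
A one-line algebraic identity, $(I_\mathtt{tot}+\sigma_w^2)[(I_\mathtt{tot}+\sigma_w^2)+2P_\mathtt{tar}]=(P_\mathtt{tar}+I_\mathtt{tot}+\sigma_w^2)^2-P_\mathtt{tar}^2$, rewrites the variance in exactly the form appearing inside the square root of \eqref{equation:dist_test_stat_CLT_approx}, and the Gaussian CDF is rewritten as $1-Q(\cdot)$.

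There is no serious obstacle here; the result is essentially a CLT invocation on a (non-central) chi-squared limit that is already handed to us by Lemma \ref{lemma:central_non_central_chi_sq_dist}. The only care needed is the algebraic verification that the CLT variance $4N+4\lambda$, once rescaled by $\big(\frac{I_\mathtt{tot}+\sigma_w^2}{2N}\big)^2$, collapses to $\big((P_\mathtt{tar}+I_\mathtt{tot}+\sigma_w^2)^2-P_\mathtt{tar}^2\big)/N$, and noting that the non-centrality $\lambda$ scales linearly with $N$ (for fixed $P_\mathtt{tar}$, $I_\mathtt{tot}$, $\sigma_w^2$) so that the standard CLT for non-central chi-squared applies without modification. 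An alternative route is to bypass Lemma \ref{lemma:central_non_central_chi_sq_dist} and compute $\mathbb{E}[|y_\mathtt{rad}[n]|^2]$ and $\mathbb{E}[|y_\mathtt{rad}[n]|^4]$ directly from \eqref{equation:hypoth_test_radar1}--\eqref{equation:hypoth_test_radar2} using Gaussian moments of $w[n]$; this gives the same moments (and, conveniently, the phases $\alpha[n]$ average out of the variance expression), after which CLT yields the same conclusion.
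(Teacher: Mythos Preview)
Your proposal is correct and follows essentially the same approach as the paper: compute the first two moments of $P_\mathtt{rad}$ under each hypothesis and invoke the CLT to obtain the Gaussian limit in $Q$-function form. The only cosmetic difference is that you read off the moments from the (non-central) $\chi^2$ parameters of Lemma~\ref{lemma:central_non_central_chi_sq_dist}, whereas the paper computes $\mathbb{E}[|y_{\mathtt{rad},i}[n]|^2]$ and $\mathrm{Var}(|y_{\mathtt{rad},i}[n]|^2)$ directly from \eqref{equation:hypoth_test_radar1}--\eqref{equation:hypoth_test_radar2}; you explicitly mention this as your alternative route, and both yield identical moments and the same algebraic simplification of the $\mathcal{H}_1$ variance.
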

\begin{proof}
Observe that when $y_i \stackrel{\text{i.i.d.}}{\sim} \mathcal{CN}(0, \sigma^2), i=1,2,\cdots,N$ and $N \rightarrow \infty$, we have $\frac{1}{N} \sum_{i=1}^N |y_i|^2 \sim \mathcal{N}(\sigma^2, N^{-1} \sigma^4)$ \cite{Chen_Dhillon_Liu_QoS_D2D_StochGeom_TCOM_2019}. Hence, the CDF of $P_{\mathtt{rad},0}$ follows by replacing $\sigma^2$ by $\text{Var}(y_{\mathtt{rad},0}[n]) = I_\mathtt{tot}(\mathbf{\Phi_{int}}) + \sigma^2_w$.

On the other hand, the mean and variance of $|y_{\mathtt{rad},1}[n]|^2$ is finite and is given by $\mathbb{E}[|y_{\mathtt{rad},1}[n]|^2] = P_\mathtt{tar} + I_\mathtt{tot} + \sigma^2_w$ and  $\text{Var}(|y_{\mathtt{rad},1}[n]|^2) = (I_\mathtt{tot} + \sigma^2_w)^2 + 2 P_\mathtt{tar}(I_\mathtt{tot} + \sigma^2_w)$ respectively, for $n=1,2,\cdots,N$. Using the central limit theorem, the distribution of $P_{\mathtt{rad},1}$ approaches a Gaussian distribution with mean $\mathbb{E}[P_{\mathtt{rad},1}] = P_\mathtt{tar} + I_\mathtt{tot} + \sigma^2_w$ and variance $\text{Var}(P_{\mathtt{rad},1}) = N^{-1} [(I_\mathtt{tot} + \sigma^2_w)^2 + 2 P_\mathtt{tar}(I_\mathtt{tot} + \sigma^2_w)]$, when $N \rightarrow \infty$. The CDF follows accordingly.
\end{proof}

\subsection{Radar Performance Metrics}
When conditioned on the interference $I_\mathtt{tot}(\mathbf{\Phi_{int}})$, noise power $\sigma^2_n$, and the detection threshold $P_\mathtt{th}$, the probability of detection ($P_\mathtt{d}$) and false alarm ($P_\mathtt{fa}$) are calculated using
\begin{align}
 	\label{equation:P_fa_det_conditioned}
 	P_\mathtt{d} = \mathbb{P}[P_\mathtt{rad} > P_\mathtt{th}|\mathcal{H}_1, I_\mathtt{tot}(\mathbf{\Phi_{int}}), \sigma^2_w], \\
 	P_\mathtt{fa} = \mathbb{P}[P_\mathtt{rad} > P_\mathtt{th}|\mathcal{H}_0, I_\mathtt{tot}(\mathbf{\Phi_{int}}), \sigma^2_w].
\end{align}
We assume that the noise variance is constant. However, since the cellular downlink network is a PPP, we are interested in a spatially averaged variant of these probabilities. These are termed as the spatial detection probability ($\bar{P}_\mathtt{d}$), and the probability of spatial false alarm ($\bar{P}_\mathtt{fa}$), which are defined as \cite{Chen_Dhillon_Liu_QoS_D2D_StochGeom_TCOM_2019} 
\begin{align}
	\label{equation:prob_spatial_det_fa_defn}
	\bar{P}_\mathtt{d} = \int_{0}^\infty \mathbb{P}[P_\mathtt{rad} > P_\mathtt{th}|\mathcal{H}_1, I_\mathtt{tot} ] f_{I_{\mathtt{tot}}}(x) {\rm d}x, \\ 
	\bar{P}_\mathtt{fa} = \int_{0}^\infty \mathbb{P}[P_\mathtt{rad} > P_\mathtt{th}|\mathcal{H}_0, I_\mathtt{tot}] f_{I_{\mathtt{tot}}}(x) {\rm d}x.	
\end{align}
where $P_\mathtt{rad}$ is the test statistic, and $f_{I_\mathtt{tot}}(\cdot)$ is the density functions of the cellular interference power. For notational simplicity, the dependence of $I_\mathtt{tot}$ on the random BS locations ($\mathbf{\Phi_{int}}$) is omitted. In the following key result, we provided a tractable approximation to the spatial detection and false alarm probabilities. 

\begin{theorem}\label{theorem:P_fa_P_d_first_princip_DIA}
	Under Assumption \ref{Assump_dom_int_approx}, $\bar{P}_\mathtt{fa}$ and $\bar{P}_\mathtt{d}$ are given by
	\begin{align}
		\label{equation:prob_spatial_det_fa_approx_first_princip}
		\bar{P}_{\mathtt{fa}, \chi^2} = & 1 - \tfrac{1}{(N-1)!}\int_{r_\mathtt{exc}}^\infty  \gamma_l \Big(N, \tfrac{N P_\mathtt{th}}{I_\mathtt{tot,DI}(r_\mathtt{dom}) + \sigma^2_w} \Big) \cdot \nonumber \\
		& f_{R_\mathtt{dom}}(r_\mathtt{dom}) {\rm d}r_\mathtt{dom}, \nonumber \\
		\bar{P}_{\mathtt{d}, \chi^2} = & \int_{r_\mathtt{exc}}^\infty  Q_N \Big( \sqrt{\tfrac{2NP_\mathtt{tar}}{I_\mathtt{tot,DI}(r_\mathtt{dom}) + \sigma^2_w}}, \sqrt{\tfrac{2NP_\mathtt{th}}{I_\mathtt{tot,DI}(r_\mathtt{dom}) + \sigma^2_w}} \Big)\cdot \nonumber \\
		& f_{R_\mathtt{dom}}(r_\mathtt{dom}) {\rm d}r_\mathtt{dom},
	\end{align}
	where $f_{R_\mathtt{dom}}(\cdot)$ is the PDF of $r_\mathtt{dom}$ (equation (\ref{eq:density_of_r_dom_lemma})), and $I_\mathtt{tot,DI}$ is the total interference power under the dominant interferer approximation (equation (\ref{eq:domin_interf_approx_closed_form_rad_mMIMO})).
\end{theorem}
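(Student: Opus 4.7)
The plan is to start from the definition of spatial detection and false alarm probabilities as expectations of the conditional probabilities over the randomness in the BS point process, then apply the dominant interferer approximation and change variables to $r_\mathtt{dom}$. Concretely, by equation (\ref{equation:prob_spatial_det_fa_defn}) we have $\bar{P}_\mathtt{fa} = \mathbb{E}_{I_\mathtt{tot}}\!\left[\mathbb{P}[P_\mathtt{rad} > P_\mathtt{th}\mid \mathcal{H}_0, I_\mathtt{tot}]\right]$ and likewise for $\bar{P}_\mathtt{d}$ under $\mathcal{H}_1$. Substituting the conditional CDFs from Lemma \ref{lemma:central_non_central_chi_sq_dist} gives
\begin{align*}
\mathbb{P}[P_\mathtt{rad} > P_\mathtt{th}\mid \mathcal{H}_0, I_\mathtt{tot}] &= 1 - \tfrac{1}{(N-1)!}\gamma_l\!\left(N, \tfrac{N P_\mathtt{th}}{I_\mathtt{tot}+\sigma_w^2}\right), \\
\mathbb{P}[P_\mathtt{rad} > P_\mathtt{th}\mid \mathcal{H}_1, I_\mathtt{tot}] &= Q_N\!\left(\sqrt{\tfrac{2 N P_\mathtt{tar}}{I_\mathtt{tot}+\sigma_w^2}},\, \sqrt{\tfrac{2 N P_\mathtt{th}}{I_\mathtt{tot}+\sigma_w^2}}\right).
\end{align*}

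Next I would invoke Assumption \ref{Assump_dom_int_approx} to replace the exact aggregate power $I_\mathtt{tot}(\mathbf{\Phi_{int}})$ with $I_\mathtt{tot,DI}(r_\mathtt{dom})$ from Theorem \ref{theorem:I_tot_dom_int_method}, so that the only remaining randomness is encoded by $r_\mathtt{dom}$. By Corollary \ref{corollary:monoton_DI_I_tot}, $I_\mathtt{tot,DI}$ is a strictly decreasing (hence bijective) function of $r_\mathtt{dom}$ on $[r_\mathtt{exc}, \infty)$, so the law of $I_\mathtt{tot,DI}$ is the pushforward of the law of $r_\mathtt{dom}$ via this bijection. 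Applying the change-of-variable (equivalently, using the law of the unconscious statistician), the expectation over $I_\mathtt{tot,DI}$ becomes an integral in $r_\mathtt{dom}$ against the density $f_{R_\mathtt{dom}}(\cdot)$ from equation (\ref{eq:density_of_r_dom_lemma}).

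Carrying out this substitution yields the two expressions in (\ref{equation:prob_spatial_det_fa_approx_first_princip}): the false alarm probability is one minus the integral of the lower incomplete gamma term weighted by $f_{R_\mathtt{dom}}$, while the detection probability is the integral of the Marcum $Q$-function weighted by $f_{R_\mathtt{dom}}$, both with integration support $[r_\mathtt{exc}, \infty)$ inherited from the exclusion zone.

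The only real subtlety will be justifying that the change of variables to $r_\mathtt{dom}$ is legitimate despite the fact that the map $r_\mathtt{dom}\mapsto I_\mathtt{tot,DI}(r_\mathtt{dom})$ is not available in closed form; this is handled by Corollary \ref{corollary:monoton_DI_I_tot} plus Remark \ref{remark:finite_supp_I_tot_DI}, which together guarantee monotonicity and the proper support. Because we never need to invert the mapping explicitly—the integrands are simply composed with $I_\mathtt{tot,DI}(r_\mathtt{dom})$—no further computation is required beyond noting that the bijection makes the two formulations equivalent.
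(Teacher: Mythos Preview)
Your proposal is correct and mirrors the paper's own argument: both start from the definition (\ref{equation:prob_spatial_det_fa_defn}), plug in the conditional CDFs of Lemma \ref{lemma:central_non_central_chi_sq_dist}, replace $I_\mathtt{tot}$ by $I_\mathtt{tot,DI}$ via Assumption \ref{Assump_dom_int_approx} and Theorem \ref{theorem:I_tot_dom_int_method} (with the finite support of Remark \ref{remark:finite_supp_I_tot_DI}), and then invoke the bijection of Corollary \ref{corollary:monoton_DI_I_tot} to change variables to $r_\mathtt{dom}$ against $f_{R_\mathtt{dom}}$. The paper phrases the last step as ``applying the chain rule,'' which is exactly your law-of-the-unconscious-statistician argument.
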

\begin{proof}
	Please refer Appendix \ref{appendix:proof_of_spat_Pd_Pfa_spat_chisq}. 
 %Sketch: (a) CDF of central and non-central chi-squared distributions with $2N$ degrees of freedom, and (b) dominant interferer approximation: bijection between $I_\mathtt{tot}$ and $r_\mathtt{dom}$.
\end{proof}

\begin{corollary}\label{corollary:P_fa_P_d_first_princip_DIA}
	When $N \rightarrow \infty$, the probability of spatial detection and spatial false alarm under Assumption \ref{Assump_dom_int_approx} can be written as
	\begin{align}
		\label{equation:prob_spatial_fa_approx_CLT}
		\bar{P}_\mathtt{fa, CLT} = & \int_{r_\mathtt{exc}}^\infty  Q \Big( \tfrac{\sqrt{N}(P_\mathtt{th} - I_\mathtt{tot,DI}(r_\mathtt{dom}) - \sigma^2_w)}{I_\mathtt{tot,DI}(r_\mathtt{dom}) + \sigma^2_w} \Big)f_{R_\mathtt{dom}}(r_\mathtt{dom}) {\rm d}r_\mathtt{dom}, \\
		\label{equation:prob_spatial_d_approx_CLT}
		\bar{P}_\mathtt{d,CLT} = & \int_{r_\mathtt{exc}}^\infty  Q \Big( \tfrac{\sqrt{N}(P_\mathtt{th} - P_\mathtt{tar} - I_\mathtt{tot,DI}(r_\mathtt{dom}) - \sigma^2_w)}{\sqrt{(P_\mathtt{tar} + I_\mathtt{tot,DI}(r_\mathtt{dom}) + \sigma^2_w)^2 - P^2_\mathtt{tar}}} \Big) \cdot \nonumber \\
		& f_{R_\mathtt{dom}}(r_\mathtt{dom}) {\rm d}r_\mathtt{dom}.
	\end{align}
\end{corollary}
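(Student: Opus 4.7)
The plan is to derive this corollary as a straightforward composition of Theorem \ref{theorem:P_fa_P_d_first_princip_DIA} with the asymptotic ($N \to \infty$) conditional distributions from Corollary \ref{corollary:CLT_approx_N_infty}. In other words, I would keep the dominant-interferer change of variables from $I_\mathtt{tot}$ to $r_\mathtt{dom}$ exactly as in Theorem \ref{theorem:P_fa_P_d_first_princip_DIA}, and only replace the chi-squared conditional CDFs $F_{P_{\mathtt{rad},0}}$, $F_{P_{\mathtt{rad},1}}$ by their Gaussian counterparts from Corollary \ref{corollary:CLT_approx_N_infty}. The net effect is that the $\gamma_l(N,\cdot)$ and Marcum $Q_N(\cdot,\cdot)$ terms in Theorem \ref{theorem:P_fa_P_d_first_princip_DIA} get swapped out for the appropriate $Q(\cdot)$ expressions, while $f_{R_\mathtt{dom}}(\cdot)$ and the limits of integration $(r_\mathtt{exc}, \infty)$ are unchanged.

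Concretely, I would start from the definitions in (\ref{equation:prob_spatial_det_fa_defn}), condition on the BS point process $\mathbf{\Phi_{int}}$, and invoke Assumption \ref{Assump_dom_int_approx} so that $I_\mathtt{tot}(\mathbf{\Phi_{int}})$ is replaced by $I_\mathtt{tot,DI}(r_\mathtt{dom})$ from (\ref{eq:domin_interf_approx_closed_form_rad_mMIMO}). By Corollary \ref{corollary:monoton_DI_I_tot}, $I_\mathtt{tot,DI}$ is a strictly monotone function of $r_\mathtt{dom}$, so the pushforward of $f_{R_\mathtt{dom}}$ gives a well-defined density on $I_\mathtt{tot,DI}$ and we may equivalently integrate in the variable $r_\mathtt{dom}$ over $[r_\mathtt{exc}, \infty)$ with weight $f_{R_\mathtt{dom}}(r_\mathtt{dom})$ (this is exactly the reduction used in the proof of Theorem \ref{theorem:P_fa_P_d_first_princip_DIA}).

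Next, I would take the $N \to \infty$ limit inside the integral. For each fixed $r_\mathtt{dom}$, $I_\mathtt{tot,DI}(r_\mathtt{dom})$ is deterministic and finite, so Corollary \ref{corollary:CLT_approx_N_infty} applies pointwise and gives
\begin{align*}
\mathbb{P}[P_\mathtt{rad} > P_\mathtt{th} \mid \mathcal{H}_0, I_\mathtt{tot,DI}(r_\mathtt{dom})] & \to Q\!\Big(\tfrac{\sqrt{N}(P_\mathtt{th} - I_\mathtt{tot,DI}(r_\mathtt{dom}) - \sigma^2_w)}{I_\mathtt{tot,DI}(r_\mathtt{dom}) + \sigma^2_w}\Big), \\
\mathbb{P}[P_\mathtt{rad} > P_\mathtt{th} \mid \mathcal{H}_1, I_\mathtt{tot,DI}(r_\mathtt{dom})] & \to Q\!\Big(\tfrac{\sqrt{N}(P_\mathtt{th} - P_\mathtt{tar} - I_\mathtt{tot,DI}(r_\mathtt{dom}) - \sigma^2_w)}{\sqrt{(P_\mathtt{tar} + I_\mathtt{tot,DI}(r_\mathtt{dom}) + \sigma^2_w)^2 - P^2_\mathtt{tar}}}\Big).
\end{align*}
Substituting these into the integrals yields (\ref{equation:prob_spatial_fa_approx_CLT}) and (\ref{equation:prob_spatial_d_approx_CLT}).

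I do not expect any serious obstacle: the heavy lifting (the dominant-interferer reduction, the bijection $r_\mathtt{dom} \leftrightarrow I_\mathtt{tot,DI}$, and the derivation of $f_{R_\mathtt{dom}}$) has already been done in Theorem \ref{theorem:P_fa_P_d_first_princip_DIA}, Corollary \ref{corollary:monoton_DI_I_tot}, and Lemma \ref{lemma:r_dom_density}, respectively. The only mildly delicate point is justifying the exchange of limit and integral when passing to $N \to \infty$ inside the $r_\mathtt{dom}$-integral; since each integrand is a $Q$-function bounded in $[0,1]$ and $f_{R_\mathtt{dom}}$ integrates to one on $[r_\mathtt{exc}, \infty)$, dominated convergence settles this immediately. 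Thus the corollary essentially amounts to plugging Corollary \ref{corollary:CLT_approx_N_infty} into the integral form already established in Theorem \ref{theorem:P_fa_P_d_first_princip_DIA}.
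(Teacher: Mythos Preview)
Your proposal is correct and is essentially the paper's own argument: the paper simply states that the proof is the same as that of Theorem \ref{theorem:P_fa_P_d_first_princip_DIA}, with the chi-squared conditional CDFs replaced by the Gaussian complementary CDFs from Corollary \ref{corollary:CLT_approx_N_infty}. Your additional remark about dominated convergence to justify passing the $N\to\infty$ limit inside the $r_\mathtt{dom}$-integral is a nice bit of rigor that the paper does not spell out.
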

\begin{proof}
	The proof is similar to Theorem \ref{theorem:P_fa_P_d_first_princip_DIA}, and follows from the complementary CDF of the Gaussian distribution in Corollary \ref{corollary:CLT_approx_N_infty}. 
\end{proof}

\begin{figure*}[t]
	\centering
	\begin{subfigure}[t]{0.48\textwidth}
		\centering
		\includegraphics[width=3.2in]{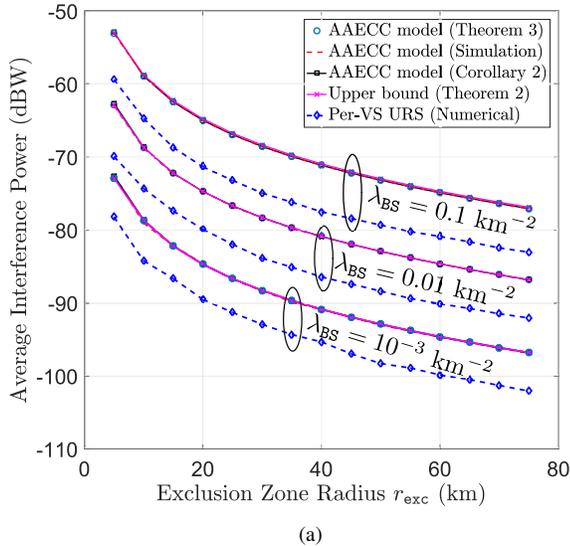}
		\caption{}
		\label{subfig:Worst_case_avg_all}
	\end{subfigure} 
	~
	\begin{subfigure}[t]{0.48\textwidth}
		\centering
		\includegraphics[width=3.2in]{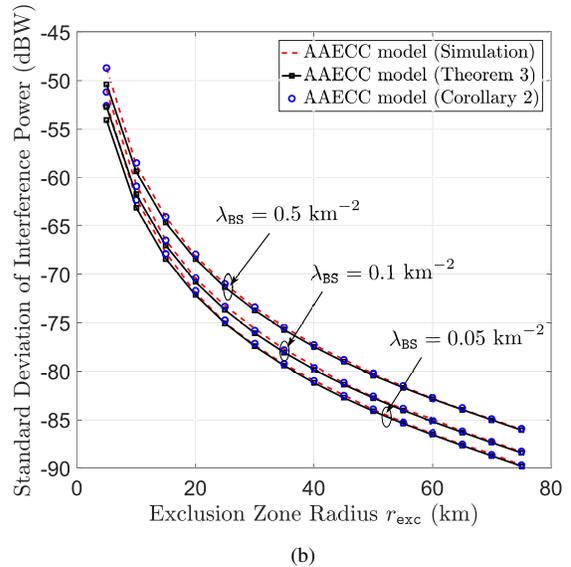}
		\caption{}
		\label{subfig:AAECC_std_dev}
	\end{subfigure} 
	\caption{(a) Worst-case average interference power at the radar under the AAECC and CBC models, as a function of $r_\mathtt{exc}$. (b) Standard deviation of worst-case interference power at the radar under the AAECC model. Base station densities $\lambda_\mathtt{BS}=0.05, 0.1,0.5$ (km$^{-2}$), $h_\mathtt{rad} = 20$ m, $h_\mathtt{BS} = 50$ m, $N^{(\mathtt{BS})}_\mathtt{az}= N^{(\mathtt{BS})}_\mathtt{el} = 10, N^{(\mathtt{rad})}_\mathtt{az} = N^{(\mathtt{rad})}_\mathtt{el} = 40, \theta_\mathtt{rad} = 60^\circ, \phi_\mathtt{rad} = -10^\circ.$} 
	\label{Fig_WorstCaseInt_all_Globecom}
\end{figure*}
\vspace{-8pt}
\section{Numerical Results and Discussion}\label{Sec_Numerical_Results}
In this section, we validate our theoretical results using Monte-Carlo simulations. We consider a typical radar operating at $f_c = 5$ GHz, located at the origin equipped with a $N^{(\mathtt{rad})}_\mathtt{az}\times N^{(\mathtt{rad})}_\mathtt{el}$ URA, mounted at a height of $h_\mathtt{rad} = 20$ m. The radar is scanning a region above the horizon at $(\theta_\mathtt{rad}, \phi_\mathtt{rad}) = (60^\circ, -10^\circ)$. %We consider a finite massive MIMO network in a circular region around the origin with a radius of $100$ km. 
The BSs are distributed as a PPP, with varying intensities. Each massive MIMO BS is co-channel with the radar, and is equipped with a $N^{(\mathtt{BS})}_\mathtt{az}\times N^{(\mathtt{BS})}_\mathtt{el}$ URA deployed at a height of $h_\mathtt{BS} = 50$ m. The circular exclusion zone around the radar has a \textit{minimum radius} of $r^{(\mathtt{min})}_\mathtt{exc} = 5$ km. The boresight of each massive MIMO BS URA is aligned along the direction of the radar ($\theta_k=0$ in the LCS). %Without loss of generality, we assume $h_{UE} = 0$. 
In each cell, the massive MIMO BS transmits a total power of $P_{BS} = 1$ W, equally allocated among co-scheduled UEs from $K=4$ clusters with mutually disjoint angular support. To model the pathloss in the downlink and the BS to radar channels, we assume the 3GPP 3D Urban Macro (3D UMa) LoS pathloss model \cite{3GPP5GNR_ChanModels},
\begin{align*}
PL(d) = P(h_\mathtt{BS}, h_\mathtt{rad}) + 20 \log_{10}(f_c) + 40 \log_{10}(d) \quad \text{(dB)}, \nonumber \\
P(h_\mathtt{BS}, h_\mathtt{rad}) = 28 - 9 \log_{10}((h_\mathtt{BS} - h_\mathtt{rad})^2) \quad \text{(dB)},
\end{align*}
where $f_c$ (GHz), and $d$ (m).% is the distance between BS and radar. 

\begin{table}[t]
	\renewcommand{\arraystretch}{1.0}
	\caption{Approximate Values of $\eta_\mathtt{ca}$\\
	[-5ex]}
	\label{Tab_Ratio_Vals}
	\centering
	\footnotesize
	\begin{tabular}{|c|c|c|c|c|c|c|}
		\hline
		%$\lambda_{BS}$ & 0.01 & 0.05 & 0.1 & 0.5 & 1\\
		$h_\mathtt{BS}\sqrt{\pi \lambda_\mathtt{BS}}$ & 0.0089 & 0.0198 & 0.028 & 0.044 & 0.0886 & 0.1253\\
		\hline
		$\eta_\mathtt{ca}$ & 1.004 & 1.022 & 1.045 & 1.254 & 1.608 & 2.905 \\
		\hline
	\end{tabular} \\
[-2ex]
\end{table}
%\vspace{-10pt}
%\subsection{Comparison of Worst-Case Average Interference under CBC and AAECC Models}
\subsection{Comparison of Worst-Case Interference under CBC and AAECC Models}
Fig. \ref{subfig:Worst_case_avg_all} shows the average interference power derived in Section \ref{Sec_Int_at_Radar_mMIMO_Network} under different cell models, as a function of exclusion zone radius for different BS intensities. For comparing the wost-case interference models with conventional cellular downlink schedulers, we also consider the \textit{Per-Virtual Sector Uniformly Random Scheduler (Per-VS URS)} scheme. We consider a PPP of user equipments (UEs) with intensity $\lambda_\mathtt{UE} = 100 \lambda_\mathtt{BS}$, served by the nearest massive MIMO BS. In the Per-VS URS scheme, the BS co-scheduled one UE uniformly at random from each of the $K=4$ user clusters. We use the one-ring model \cite{JSDM_Adhikary_Caire_TIT_2013} to generate the spatial downlink channel, with a scatterer ring radius of $30$ m around each UE, and Assumptions \ref{BoresightAssumption} and \ref{Scheduler_support_assumption} are relaxed. 

We observe that the average interference power under the Per-VS URS model differs from the upper bound by $\sim 6$ dB, and shows the same trends as the AAECC model and the upper bound, despite the differences in the scheduling scheme (beamforming towards cell-edge user beamforming vs. users in each VS chosen uniformly at random). We observe that the upper bound is remarkably tight w.r.t. the AAECC model, especially for $\lambda_\mathtt{BS} \leq 0.1$. For reference, we also plot the approximate average interference power from Corollary \ref{Corollary_AACC}. The approximately linear scaling of average interference power with $\lambda_\mathtt{BS}$ can also be observed, since the average interference power drops by $\approx 10$ dB when $\lambda_\mathtt{BS}$ is decreased by an order of magnitude.%, especially for larger $r_\mathtt{exc}$.

%\subsection{Constant Gap in Average Interference Power between AAECC and CBC Models} 
We also observe that the ratio of average interference powers $\eta_\mathtt{ca}$ is approximately constant, and is tabulated for the \textit{elevation parameter} $h_\mathtt{BS} \sqrt{\pi \lambda_\mathtt{BS}}$ in Table \ref{Tab_Ratio_Vals}. For 3GPP UMa deployments with inter-site distance $r_\mathtt{ISD}$, the typical $h_\mathtt{BS}/r_\mathtt{ISD}=0.05$ \cite{3GPP5GNR_ChanModels}. The corresponding $h_\mathtt{BS}\sqrt{\pi \lambda_\mathtt{BS}}=0.095$, for which $2 \text{ dB} < \eta_\mathtt{ca} < 4.6 \text{ dB}$ as seen in Table \ref{Tab_Ratio_Vals}. Thus the bound is remarkably tight, which makes it useful for worst-case analysis of practical radar-5G NR spectrum sharing deployments. It is worthwhile to mention that in order to analyze the interference under the per-VS URS scheme, a key intermediate step is to derive the joint azimuth-elevation distribution of associated UEs in a cell. However due of the radial asymmetry of the PV cell, this problem is intractable (please refer \cite{mankar2020distance} for an example). This is the motivation for analyzing the worst-case scenario under a cell-edge beamforming model, which avoids the need for modeling the users' spatial distribution.

Fig. \ref{subfig:AAECC_std_dev} shows the standard deviation of the interference power under the AAECC model as a function of $r_\mathtt{exc}$ and $\lambda_\mathtt{BS}$. We observe that the accuracy of our theoretical expressions follow similar trends as Fig. \ref{subfig:Worst_case_avg_all}. However, we observe that the standard deviation decays faster than the average. In particular, comparing the slopes of the curves in Fig. \ref{subfig:Worst_case_avg_all} and \ref{subfig:AAECC_std_dev}, we observe that the standard deviation in Fig. \ref{subfig:AAECC_std_dev} decreases by $\approx 9$ dB per octave along $r_\mathtt{exc}$, in contrast to the average in Fig. \ref{subfig:Worst_case_avg_all} that decreases by $\approx 6$ dB per octave. This matches the scaling behavior predicted by Corollary \ref{Corollary_AACC}, where $\bar{I}_\mathtt{rad, a} \propto r^{-\alpha + 2}_\mathtt{exc}$ whereas $\sigma_\mathtt{rad, a} \propto r^{-\alpha + 1}_\mathtt{exc}$.

\begin{figure}[!t]
	\centering
	\begin{subfigure}[t]{0.48\textwidth}
		\centering
		\includegraphics[width=3.2in]{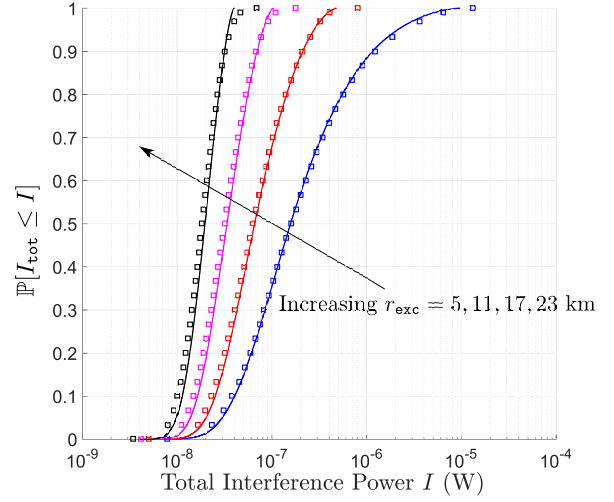}
		\caption{}
		\label{subfig:CDF_I_tot_dom_vs_num}
	\end{subfigure} 
	~
	\begin{subfigure}[t]{0.48\textwidth}
		\centering
		\includegraphics[width=3.2in]{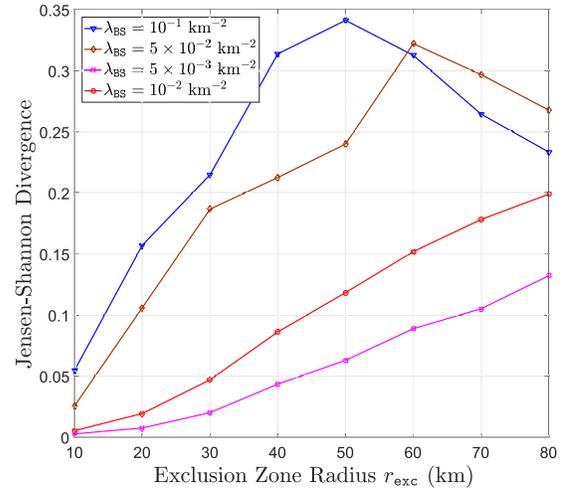}
		\caption{}
		\label{subfig:JS_div_I_tot_dom_vs_num}
	\end{subfigure} 
	\caption{(a) Distribution of total interference power ($I_\mathtt{tot}$) for different exclusion zone radii. Markers and solid lines represent the simulation and theoretical (Theorem \ref{theorem:I_tot_dom_int_method}) results respectively. (b) Jensen-Shannon divergence \cite{Lin_Div_measures_Shannon_TIT_1991} between the theoretical (Theorem \ref{theorem:I_tot_dom_int_method}) and numerical interference distributions, as a function of $\lambda_\mathtt{BS}$ and $r_\mathtt{exc}$. $\lambda_\mathtt{BS} = 0.01$ km$^{-2}$, $h_\mathtt{rad} = 20$ m, $h_\mathtt{BS} = 50$ m, $N^{(\mathtt{BS})}_\mathtt{az}= N^{(\mathtt{BS})}_\mathtt{el} = 10, N^{(\mathtt{rad})}_\mathtt{az} = N^{(\mathtt{rad})}_\mathtt{el} = 10, \theta_\mathtt{rad} = 60^\circ, \phi_\mathtt{rad} = -10^\circ.$} 
	\label{figure:dom_int_comparison}
\end{figure}
%\vspace{-15pt}
\subsection{Distribution of Total Interference Power}
Fig. \ref{subfig:CDF_I_tot_dom_vs_num} shows the distribution of total interference power for different $r_\mathtt{exc}$. Similar to the observations in Fig. \ref{subfig:AAECC_std_dev}, we observe that the distribution concentrates in narrower intervals around the average with increasing $r_\mathtt{exc}$. Overall, the analytical expression in Theorem \ref{theorem:I_tot_dom_int_method} obtained using the dominant interferer approximation matches well with the numerical results. However, we observe that as $r_\mathtt{exc}$ increases, there is a slight deviation in the CDF's upper tail. This is because the support of the actual interference is $[0, \infty)$, whereas that under the dominant interferer method is finite, as discussed in Remark \ref{remark:finite_supp_I_tot_DI}. 
Fig. \ref{subfig:JS_div_I_tot_dom_vs_num} shows the Jensen-Shannon divergence\footnote{Jensen-Shannon divergence (JSD) is a symmetrized version of the Kullback-Leibler divergence (KLD), and measures the similarity between two or more distributions. Unlike the KLD, the JSD guaranteed to be finite, and lies between 0 and 1 \cite{Lin_Div_measures_Shannon_TIT_1991}.} \cite{Lin_Div_measures_Shannon_TIT_1991}, which compares the similarity between the theoretical and simulation results in Fig. \ref{subfig:CDF_I_tot_dom_vs_num}. Lower JSD values imply that the distributions are similar. We observe that for a fixed $\lambda_\mathtt{BS}$, the JSD initially increases with $r_\mathtt{exc}$, and then decreases. This behavior can be explained as follows. 
\begin{enumerate} 
\item For very low $r_\mathtt{exc}$, Theorem \ref{theorem:I_tot_dom_int_method} is accurate since $I_\mathtt{dom} \gg \mathbb{E}[I_\mathtt{rest}|I_\mathtt{dom}]$, resulting in a close match and hence, a low JSD. 
\item For intermediate $r_\mathtt{exc}$, the JSD increases since $I_\mathtt{dom}$ and $\mathbb{E}[I_\mathtt{rest}|I_\mathtt{dom}]$ are comparable, thus degrading the accuracy of Theorem \ref{theorem:I_tot_dom_int_method}. 
\item For very high $r_\mathtt{exc}$, $I_\mathtt{dom} \ll \mathbb{E}[I_\mathtt{rest}|I_\mathtt{dom}]$ and $F_{R_\mathtt{dom}}(r) \rightarrow U(r - r_\mathtt{exc})$, as seen in (\ref{eq:dist_of_r_dom}). As a result, $I_\mathtt{dom} \rightarrow I_\mathtt{exc}$ for very large $r_\mathtt{exc}$, for which $I_\mathtt{tot} \rightarrow \mathbb{E}[I_\mathtt{rest}|I_\mathtt{exc}] = \bar{I}_\mathtt{rad,a}$ as discussed in Remark \ref{remark:finite_supp_I_tot_DI}. On the other hand, the true distribution converges to the average as well, since $\frac{\sigma^{(\mathtt{app})}_\mathtt{rad, a}}{\bar{I}^{(\mathtt{app})}_\mathtt{rad, a}} \propto \frac{1}{\sqrt{\lambda_\mathtt{BS} r^2_\mathtt{exc}}} \rightarrow 0$ (using (\ref{Approx_AvgInt_AACC})-(\ref{Approx_std_dev_AACC})) when $r_\mathtt{exc} \rightarrow \infty$. Since both distributions converge to the average interference power, the JSD tends to 0 when $r_\mathtt{exc}\rightarrow \infty$.  
\end{enumerate}
The same trends hold when $\lambda_\mathtt{BS}$ increases, which can be understood by similar arguments.
%\vspace{-10pt}

\begin{figure*}[t]
	\centering
	\begin{subfigure}[t]{0.48\textwidth}
		\centering
		\includegraphics[width=2.8in]{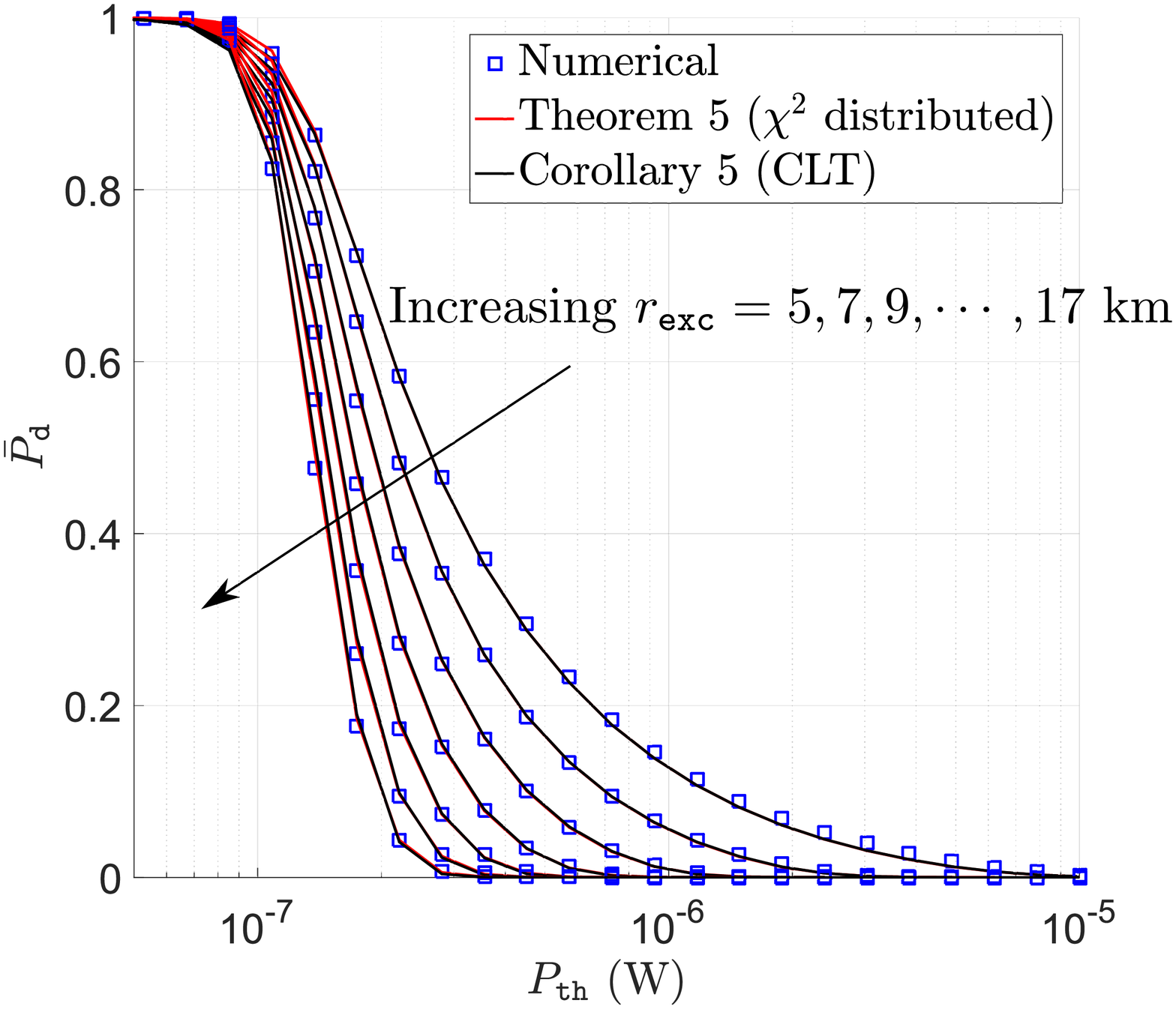}
		\caption{}
		\label{subfig:Spat_d_10milli}
	\end{subfigure}
	~
	\begin{subfigure}[t]{0.48\textwidth}
		\centering
		\includegraphics[width=2.8in]{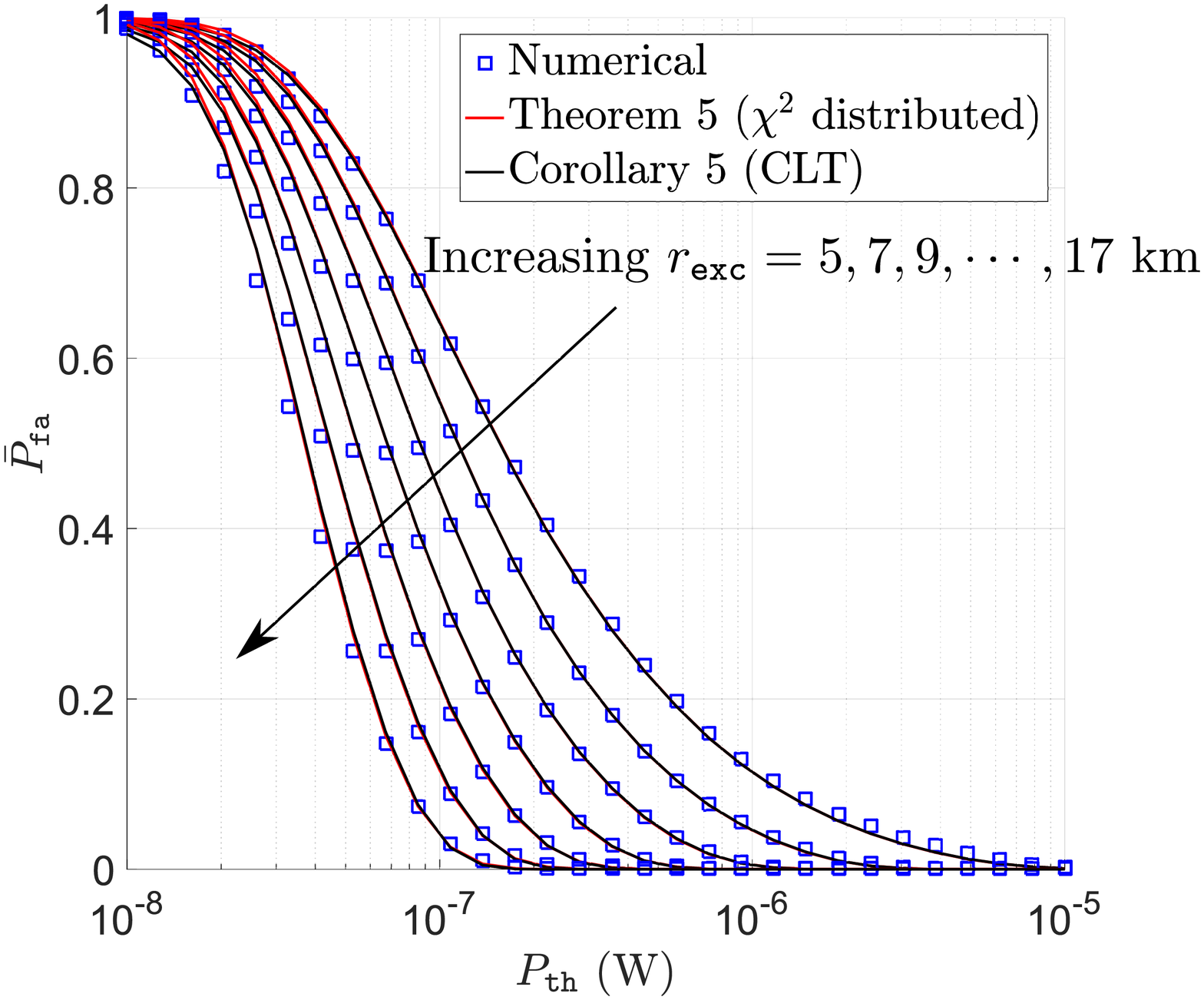}
		\caption{}
		\label{subfig:Spat_fa_10milli}		
	\end{subfigure}
	~
	\begin{subfigure}[t]{0.48\textwidth}
		\centering
		\includegraphics[width=2.8in]{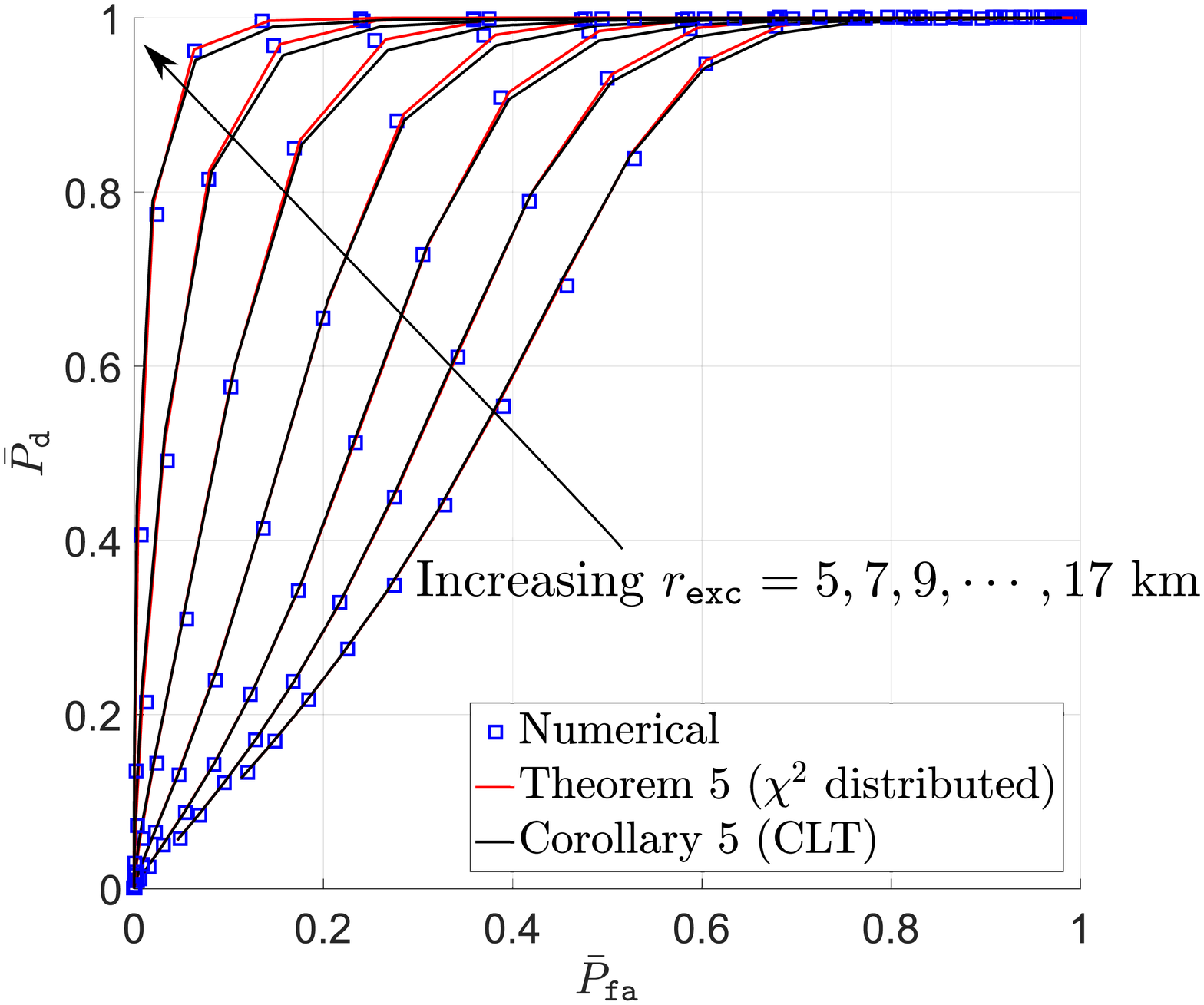}
		\caption{}
		\label{subfig:ROC_fa_d_10milli}		
	\end{subfigure}
	~
	\begin{subfigure}[t]{0.48\textwidth}
		\centering
		\includegraphics[width=2.8in]{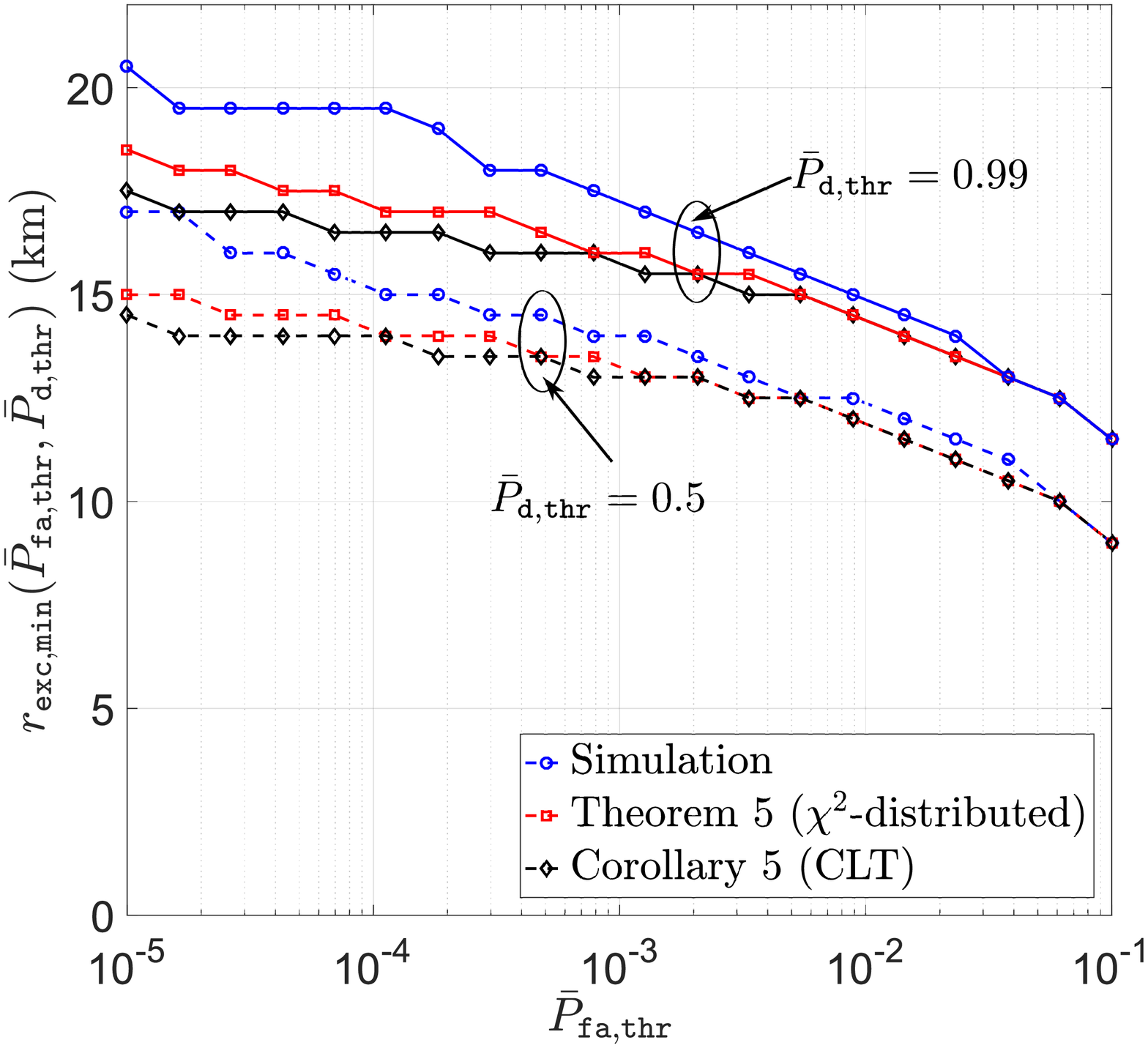}
		\caption{}
		\label{subfig:Min_r_exc_so_radar_can_operate}
	\end{subfigure}
	\caption{Variation of (a) spatial probability of detection ($\bar{P}_\mathtt{d}$), and (b) spatial probability of false alarm ($\bar{P}_\mathtt{fa}$) as a function of the detection threshold ($P_\mathtt{th}$) for different $r_\mathtt{exc}$ values. (c) ROC curve for different $r_\mathtt{exc}$ values. (d) Minimum exclusion zone radius that ensures that radar can satisfy the detection and false alarm performance requirements. $\lambda_\mathtt{BS} = 0.01 \text{ km}^{-2}$, $P_\mathtt{tar} = 10^{-7}$ W, $\sigma^2_w = 10^{-9}$ W, $h_\mathtt{rad} = 20$ m, $h_\mathtt{BS} = 50$ m, $N^{(\mathtt{BS})}_\mathtt{az}= N^{(\mathtt{BS})}_\mathtt{el} = 10, N^{(\mathtt{rad})}_\mathtt{az} = N^{(\mathtt{rad})}_\mathtt{el} = 10, \theta_\mathtt{rad} = 60^\circ$, and $\phi_\mathtt{rad} = -10^\circ, N = 10.$}
	\label{fig:Spat_Pd_Pfa_and_ROCs_10milli}
\end{figure*}
\vspace{-5pt}
\subsection{Radar Performance Metrics}
To make spectrum sharing feasible for a radar system on average, it needs to have a low spatial probability of false alarm ($\bar{P}_\mathtt{fa}$) and a high spatial probability of detection ($\bar{P}_\mathtt{d}$) in the presence of interference and noise. If the corresponding probability thresholds are $\bar{P}_\mathtt{fa,thr}$ and $\bar{P}_\mathtt{d, thr}$, then it is feasible for the radar to allow the cellular network to coexist if $\bar{P}_\mathtt{fa} \leq \bar{P}_\mathtt{fa, thr}$ and $\bar{P}_\mathtt{d} \geq \bar{P}_\mathtt{d, thr}$. Fig. \ref{subfig:Spat_d_10milli} and  Fig. \ref{subfig:Spat_fa_10milli} shows $\bar{P}_\mathtt{d}$ and $\bar{P}_\mathtt{fa}$ as a function of the detection threshold ($P_\mathtt{th}$) and $r_\mathtt{exc}$. We observe that both $\bar{P}_\mathtt{d}$ and $\bar{P}_\mathtt{fa}$ monotonically decrease with $P_\mathtt{th}$ for a fixed $r_\mathtt{exc}$. This can be explained by Corollary \ref{corollary:CLT_approx_N_infty}: since $Q(x)$ monotonically decreases with increasing $x$, increasing $P_\mathtt{th}$ reduces the integrand, thus reducing $\bar{P}_\mathtt{d}$ and $\bar{P}_\mathtt{fa}$. On the other hand, we observe the same trends when $r_\mathtt{exc}$ increases while holding the other parameters constant, which can be explained as follows. From Theorem \ref{theorem:I_tot_dom_int_method}, we observe that increasing $r_\mathtt{exc}$ reduces $I_\mathtt{tot}$. Using this insight in (\ref{equation:prob_spatial_fa_approx_CLT}), it can be seen that the term $\frac{P_\mathtt{th}}{I_\mathtt{tot} + \sigma^2_w}$ monotonically increases with increasing $r_\mathtt{exc}$, which explains the trends observed in Fig. \ref{subfig:Spat_fa_10milli}. On the other hand, by deriving the slope of $\frac{(P_\mathtt{th} - P_\mathtt{tar} - I_\mathtt{tot} - \sigma^2_w)}{\sqrt{(P_\mathtt{tar} + I_\mathtt{tot} + \sigma^2_w)^2 - P^2_\mathtt{tar}}}$ w.r.t. $I_\mathtt{tot}$ in (\ref{equation:prob_spatial_d_approx_CLT}), it can be proved that $\bar{P}_\mathtt{d}$ decreases with increasing $r_\mathtt{exc}$ when $P^2_\mathtt{tar} < P_\mathtt{th}(P_\mathtt{tar} + I_\mathtt{tot} + \sigma^2_w)$. Since $P_\mathtt{th} > P_\mathtt{tar}$ for the most part in Fig. \ref{subfig:Spat_d_10milli}, this insight is consistent with our observations\footnote{It is worthwhile to mention that the $\bar{P}_\mathtt{d}$ vs $P_\mathtt{th}$ curves corresponding to different $r_\mathtt{exc}$ can intersect and cross-over. For example, this occurs if $\{P^2_\mathtt{tar} < P_\mathtt{th}(P_\mathtt{tar} + I_\mathtt{tot} + \sigma^2_w)\}$ is true for some $P_\mathtt{th}$ values, and false for others.}. Fig. \ref{subfig:ROC_fa_d_10milli} shows the ROC curve for different $r_\mathtt{exc}$ values. We observe that the trends follow Figs. \ref{subfig:Spat_d_10milli}-\ref{subfig:Spat_fa_10milli}, and that the analytical and simulation results match. However, the inaccuracy due to the CLT approximation can be observed in the high $\bar{P}_\mathtt{d}$-low $\bar{P}_\mathtt{fa}$ regime, which is likely due to the difference in tail behavior of the Gaussian and $\chi^2$-distributions. %Below, we discuss a practical application of our theoretical results to design the exclusion zone radius for a given set of deployment parameters.}
\vspace{-5pt}

\subsection{Designing the Minimum Exclusion Zone Radius for Radar-Cellular Coexistence}
From Theorem \ref{theorem:P_fa_P_d_first_princip_DIA} and Corollary \ref{corollary:P_fa_P_d_first_princip_DIA}, we observe that $\bar{P}_\mathtt{d}$ and $\bar{P}_\mathtt{fa}$ are dependent on $r_\mathtt{exc}$. Conditioned on the other parameters, the minimum $r_\mathtt{exc}$ ($r_\mathtt{exc,min}$) for which the radar can coexist with the cellular system without significant performance degradation is given by 
\begin{align}
\label{eq:Min_r_exc_radar_operate}
r_\mathtt{exc, min}(\bar{P}_\mathtt{d, thr}, \bar{P}_\mathtt{fa, thr}) = & \underset{r_\mathtt{exc} \in \mathbb{R}^+}{\min} \{r_\mathtt{exc}|\bar{P}_\mathtt{d}(r_\mathtt{exc}) \geq \bar{P}_\mathtt{d, thr}, \nonumber \\
& \bar{P}_\mathtt{fa}(r_\mathtt{exc}) \leq \bar{P}_\mathtt{fa, thr} \},
\end{align}
where $\bar{P}_\mathtt{d, thr}$ and $\bar{P}_\mathtt{fa, thr}$ are the spatial probability of detection/false alarm thresholds. Unfortunately, the above optimization problem is intractable since $r_\mathtt{exc}$ lies in the lower limit of the integral. However, this can be solved using numerical methods since $\bar{P}_\mathtt{d}$ monotonically increases with $r_\mathtt{exc}$ (Fig. \ref{subfig:Spat_d_10milli}), and $\bar{P}_\mathtt{fa}$ decreases monotonically with $r_\mathtt{exc}$ (Fig. \ref{subfig:Spat_fa_10milli}). Therefore, we numerically solve (\ref{eq:Min_r_exc_radar_operate}) by restricting it to a finite set $\mathcal{R}_{\rm exc}$, i.e. 
\begin{align}
\label{eq:Min_r_exc_radar_operate_discretized}
r_\mathtt{exc, min}(\bar{P}_\mathtt{d, thr}, \bar{P}_\mathtt{fa, thr}) = & \underset{r_\mathtt{exc} \in \mathcal{R}_{\rm exc}}{\min} \{r_\mathtt{exc}|\bar{P}_\mathtt{d}(r_\mathtt{exc}) \geq \bar{P}_\mathtt{d, thr}, \nonumber \\
& \bar{P}_\mathtt{fa}(r_\mathtt{exc}) \leq \bar{P}_\mathtt{fa, thr} \}.
\end{align}
Fig. \ref{subfig:Min_r_exc_so_radar_can_operate} shows the results for the above minimization problem as a function of $\bar{P}_\mathtt{d, thr}$ and $\bar{P}_\mathtt{fa, thr}$, where $\mathcal{R}_{\rm exc} = \{0 \text{ km}, 0.5\text{ km}, 1\text{ km} \cdots, 35\text{ km} \}$. We observe that the $r_\mathtt{exc, min}$ obtained using our theoretical expressions lie within $6 \%$ of the simulation result on average. We would like to emphasize that system-level simulations of radar-cellular spectrum sharing scenarios is time consuming, especially in conditions of high $\lambda_\mathtt{BS}$ and $r_\mathtt{exc}$. Our analytical expressions are valuable in preliminary feasibility studies to quickly obtain estimates of system parameters for which spectrum sharing is feasible, to aid system-level simulations.
\vspace{-5pt}
\section{Conclusion and Proposed Work}\label{Sec_Conc_Prop_Work}
In this paper, we presented an analytical framework to evaluate radar performance metrics in underlay radar-massive MIMO cellular spectrum sharing scenarios, where both systems are equipped with 3D beamforming capabilities. We devised a novel construction based on bounding a PV cell by its circumcircle, to upper bound the worst-case average interference at the radar due to a co-channel massive MIMO downlink in near LoS channel conditions. We also proposed and analyzed the nominal average and variance of the interference power using a more tractable model, where each cell is replaced by a circular disk of area equal to the average area of a typical cell. We provided useful insights regarding the worst-case exclusion zone radius, scaling of interference power with BS density, and the approximate gap between the worst-case and nominal average interference power. We then derived the \textit{equi-interference contour} under the nominal interference model, and used it to characterize the interference distribution, using the \textit{dominant interference approximation}. Under a quasi-static target detection scenario based on coherent integration across multiple radar pulses and threshold detection, we used the interference distribution to characterize the spatial probability of detection and false alarm. %We also presented numerically efficient approximations using the central limit theorem, to evaluate these performance metrics. 

We showed that the upper bound using the circumcircle-based model is remarkably tight for 3GPP deployment parameters \cite{3GPP5GNR_ChanModels}, and then demonstrated the usefulness of our proposed approach by applying it for evaluation of radar performance metrics, especially ROC curves. We also provided intuitive system design insights to explain the accuracy of the dominant interferer method, and the trends in the radar's detection and false alarm performance. Finally, we applied our analytical results to design the minimum exclusion zone radius to enable radar-cellular coexistence. The analytical framework presented in this paper (a) enables network designers to systematically isolate and evaluate the impact of each deployment parameter (BS density, antenna height, transmit power, exclusion zone radius etc.) on the worst-case radar performance, and (b) complements industry-standard simulation methodologies, by establishing a baseline for each set of deployment parameters in practical spectrum sharing scenarios. 

This work focused on studying the impact of worst-case cellular interference on radar performance. Hence, characterizing the impact of cellular uplink interference on radar performance and analyzing the throughput and spectral efficiency performance of the cellular downlink/uplink in the presence of radar interference is an important extension to this work. From a harmonious coexistence perspective, using this work to progress towards system-level optimization frameworks that seek to maximize the radar performance under cellular quality of service (QoS) constraints, and vice-versa, is an important research direction. %Incorporation of radar-specific parameters into the analysis, and characterization of the interference distribution are natural extensions of this work, and are currently being investigated by the authors. %, and are being investigated by the authors. 
\vspace{-12pt}
\appendix
\subsection{Proof of Lemma \ref{Lemma_Monotonic_BF_Gain}}\label{App1_Proof_BFGain_UpBound}
The steering vector of a $N_\mathtt{az} \times N_\mathtt{el}$ URA is $\mathbf{a}(\theta,\phi)=\mathbf{a}_\mathtt{az}(\theta, \phi) \otimes \mathbf{a}_\mathtt{el}(\phi)$, where $\otimes$ is the Kronecker product. For $\tfrac{\lambda}{2}$-spacing, $\mathbf{a}_\mathtt{az}(\theta,\phi) = [1\ e^{-j \pi \sin \theta \cos \phi}\cdots e^{-j \pi (N_\mathtt{az} - 1)\sin \theta \cos \phi}] \in \mathbb{C}^{N_\mathtt{az} \times 1}$, 
$\mathbf{a}_{el}(\phi) = [1\ e^{-j\pi \sin \phi}\cdots e^{-j\pi (N_\mathtt{el} - 1)\sin \phi}] \in \mathbb{C}^{N_\mathtt{el} \times 1}$.
Using the properties of the Kronecker product, expanding and simplifying, we get
\begin{align}
\label{BFGain_Geom_Series_Expansion}
G_\mathtt{BS}(\theta,\phi,\theta_k, \phi_k) = & \tfrac{\sin^2 \big(\tfrac{\pi}{2} N_\mathtt{az}(\sin\theta \cos \phi - \sin \theta_k \cos \phi_k) \big)}{N_\mathtt{az}\sin^2 \big(\tfrac{\pi}{2} (\sin\theta \cos \phi - \sin \theta_k \cos \phi_k) \big)} \cdot \nonumber \\
& \tfrac{\sin^2 \big(\tfrac{\pi}{2} N_\mathtt{el}(\sin \phi - \sin \phi_k) \big)}{N_\mathtt{el}\sin^2 \big(\tfrac{\pi}{2} (\sin \phi - \sin \phi_k) \big)} \leq N_\mathtt{az} N_\mathtt{al}.
\end{align}
Since $\tfrac{\sin^2(Na)}{\sin^2a} \leq N^2 \text{ for } a \in \mathbb{R}$, the universal upper bound is obtained above, and is achieved when $a = 0$. To obtain a tighter bound $G^{(\mathtt{max})}_\mathtt{BS}$ defined in (\ref{BFGain_Tight_UpperBound}), we consider the following cases.
\paragraph*{Case 1} If $\phi_\mathtt{m} \leq \phi \leq \tfrac{\pi}{2}$, $G_\mathtt{BS}(\theta, \phi,\theta_k, \phi_k)$ is maximized by $\phi_k = \phi$, $\theta_k = \theta$, yielding $G^{(\mathtt{max})}_\mathtt{BS}(\phi, \phi_\mathtt{m}) = N_\mathtt{az} N_\mathtt{el}$.

\paragraph*{Case 2}
By upper bounding the \textit{azimuth beamforming gain} in (\ref{BFGain_Geom_Series_Expansion}), we get $G_\mathtt{BS}(\theta,\phi,\theta_k, \phi_k) \leq N_\mathtt{az} \tfrac{\sin^2 \big(\tfrac{\pi}{2} N_\mathtt{el}(\sin \phi - \sin \phi_k) \big)}{N_\mathtt{el}\sin^2 \big(\tfrac{\pi}{2} (\sin \phi - \sin \phi_k) \big)}$. The RHS monotonically decreases w.r.t. $\phi_k$ when $0 \leq \sin \phi_\mathtt{m} \leq \tfrac{1 + N_\mathtt{el} \sin \phi}{N_\mathtt{el}} \leq \tfrac{\pi}{2}$ and hence, the upper bound is $G^{(\mathtt{max})}_\mathtt{BS}(\phi, \phi_\mathtt{m}) = \tfrac{N_\mathtt{az}  \sin^2 \big(\tfrac{\pi}{2} N_\mathtt{el}(\sin \phi - \sin \phi_\mathtt{m}) \big)}{N_\mathtt{el}\sin^2 \big(\tfrac{\pi}{2} (\sin \phi - \sin \phi_\mathtt{m}) \big)}$.

\paragraph*{Case 3}
If $\tfrac{1 + N_\mathtt{el} \sin \phi}{N_\mathtt{el}} \leq \sin \phi_\mathtt{m}$, the numerator of $G^{(\mathtt{max})}_\mathtt{BS}(\cdot)$ in case 2 can be upper bounded as $\sin^2(b) \leq 1\  \forall \ b \in \mathbb{R}$, resulting in a monotonically decreasing function of $\phi_\mathtt{m}$. Hence, $G^{(\mathtt{max})}_\mathtt{BS}(\phi, \phi_\mathtt{m}) = \tfrac{N_\mathtt{az}}{N_\mathtt{el}\sin^2 \big(\tfrac{\pi}{2} (\sin \phi - \sin \phi_\mathtt{m}) \big)}$.
\begin{comment} 
{\color{blue}We note that the upper bound on the beamforming gain is independent of the azimuth angle, since the maximum azimuth beamforming gain is universally upper bounded by $N_\mathtt{az}$, a constant. Therefore for ease of exposition, we have assumed that the boresight of each BS is aligned along the direction of the radar, which corresponds to $\theta= 0^\circ$ as discussed in Assumption \ref{BoresightAssumption}.}
\end{comment}
\vspace{-15pt}
\subsection{Proof of Theorem \ref{Circum_rad_Model}}\label{App2_Proof_Worst_Case_Int}
Since the massive MIMO BS locations are modeled as an independent PPPs $\mathbf{\Phi_{BS}}$ with intensity $\lambda_\mathtt{BS}$, the worst-case average interference at the radar is given by Campbell's theorem using
\begin{align*}
%\label{Campbell_Thm_Worst_Case_Avg}
\bar{I}_\mathtt{rad, c} & = \mathbb{E} \Big[ \mathbb{E} \Big[\sum_{\mathbf{X} \in \mathbf{\Phi_{int}}} \{ I^{(\mathtt{w})}_\mathtt{rad} (\mathbf{X}, h_\mathtt{BS}, h_\mathtt{rad})|r_c \}\Big]\Big| r_c \Big] \nonumber \\
& = \mathbb{E} \Big[ \int_{\mathbf{x} \in \mathbf{\Phi_{int}} } \lambda_\mathtt{BS}\{ I^{(\mathtt{w})}_\mathtt{rad} (\mathbf{x}, h_\mathtt{BS}, h_\mathtt{rad})|r_c \} {\rm d} \mathbf{x} \Big| r_c \Big],
\end{align*}
where $\mathbf{x} = [r\cos \theta_{r,L}\ r\sin \theta_{r,L}]$, $\mathbf{\Phi_{int}} = \mathbf{\Phi_{BS}} \setminus  \{(x,y)|(x^2 + y^2) \leq r^2_\mathtt{exc}\}$, and $r_c$ is the cell radius that determines $G^{(\mathtt{max})}_\mathtt{BS}(\phi, \phi_\mathtt{m})$ in equation (\ref{BFGain_Tight_UpperBound}). % using $\phi_\mathtt{m} = \tan^{-1} \big(\tfrac{h_{BS}}{r_c} \big)$. 
Substituting (\ref{WorstCaseAvgIntPow_SingleBS}) above, noting that  
$\phi_{r,L}(r)=-\phi_{t,L}(r)=\tan^{-1} \big( \tfrac{h_\mathtt{rad} - h_\mathtt{BS}}{r} \big)$, and converting to polar coordinates we get 
\begin{align*}
\bar{I}_\mathtt{rad,c} = & \mathbb{E} \Big[ \int_{r_\mathtt{exc}}^{\infty} \int_{-\tfrac{\pi}{2}}^{\tfrac{\pi}{2}} \lambda_\mathtt{BS} \beta(d) G_\mathtt{rad}(\theta_\mathtt{rad}, \phi_\mathtt{rad}, \theta_{r,L}, \phi_{r,L}(r)) \cdot \nonumber \\
& G^{(\mathtt{max})}_\mathtt{BS}(\phi_{t,L}(r), \phi_\mathtt{m}(r_c)) \tfrac{ P_{BS}}{K}  r {\rm d}r {\rm d}\theta_{r,L} \Big| r_c \Big],
\end{align*}
where $d = \sqrt{r^2 + (h_\mathtt{BS} - h_\mathtt{rad})^2}$, and $\beta(d) = PL(r_0) d^{-\alpha}$ is the pathloss model. Using these and integrating over $r_c \sim f_{R_c} (r_c)$, we get the desired result. 
\vspace{-10pt}
\subsection{Proof of Lemma \ref{lemma:Contour_large_exc_zone}}\label{appendix:proof_of_contour}
Since $r_\mathtt{exc}$ is much larger than the antenna heights, we have $\phi(r)\rightarrow 0$ for $r \geq r_\mathtt{exc}$ in equation  (\ref{eq:contour_line_general_exp}) and (\ref{eq:contour_line_general_exp_proof}). Using this, the radar beamforming gain can be upper bounded similar to (\ref{BFGain_Geom_Series_Expansion}) as
\begin{align}
	%\label{eq:radar_BF_gain_azim_el_large_rexc}
	G_\mathtt{rad}(\theta_\mathtt{rad},\phi_\mathtt{rad},\theta,0) = & \tfrac{\sin^2 \big(\tfrac{\pi}{2} N^{(\mathtt{rad})}_\mathtt{az}(\sin\theta_\mathtt{rad} \cos \phi_\mathtt{rad} - \sin \theta) \big)}{N^{(\mathtt{rad})}_\mathtt{az} \sin^2 \big(\tfrac{\pi}{2} (\sin\theta_\mathtt{rad} \cos \phi_\mathtt{rad} - \sin \theta) \big) } \cdot \nonumber \\
	& \tfrac{\sin^2 \big(\tfrac{\pi}{2} N^{(\mathtt{rad})}_\mathtt{el}\sin \phi_\mathtt{rad} \big)}{N^{(\mathtt{rad})}_\mathtt{el} \sin^2 \big(\tfrac{\pi}{2} \sin \phi_\mathtt{rad} \big)} \nonumber \\
	\leq & \tfrac{N^{(\mathtt{rad})}_\mathtt{az} \sin^2 \big(\tfrac{\pi}{2} N^{(\mathtt{rad})}_\mathtt{el}\sin \phi_\mathtt{rad} \big)}{N^{(\mathtt{rad})}_\mathtt{el}\sin^2 \big(\tfrac{\pi}{2} \sin \phi_\mathtt{rad} \big)}.
\end{align}
We note that the maximum azimuth beamforming gain of $N_\mathtt{az}$ is always achieved at $\theta_\mathtt{max} = \sin^{-1} (\sin \theta_\mathtt{rad} \cos \phi_\mathtt{rad})$. Therefore, the maximum radar beamforming gain is only a function of $\phi_\mathtt{rad}$. For similar reasons, when $\phi(r) \rightarrow 0$, $G^{(\mathtt{max})}_{BS}(\cdot)$ is only a function of the minimum elevation angle, which in turn is a function of $h_\mathtt{BS}\sqrt{\lambda_\mathtt{BS}}$. Defining $I_\mathtt{dom}$ to be the interference power due to the BS at $(r_\mathtt{dom},\theta_\mathtt{max})$, given by $I_\mathtt{dom} = \tfrac{P_{BS}PL(r_0)G^{(\mathtt{max})}_\mathtt{BS}(0, \phi_\mathtt{m}(1/\sqrt{\pi \lambda_\mathtt{BS}}))}{Kr^\alpha_\mathtt{dom}}\cdot \tfrac{N^{(\mathtt{rad})}_\mathtt{az} \sin^2 \big( \tfrac{\pi}{2} N^{(\mathtt{rad})}_\mathtt{el} \sin \phi_\mathtt{rad} \big)}{N^{(\mathtt{rad})}_\mathtt{el} \sin^2 \big( \tfrac{\pi}{2} \sin \phi_\mathtt{rad}  \big)}, r_\mathtt{dom} \geq r_\mathtt{exc}$. Substituting this into  (\ref{eq:contour_line_general_exp_proof}) and simplifying, we get the analytical expression of $\mathcal{C}(I_\mathtt{dom})$.
\vspace{-10pt}
\subsection{Proof of Lemma \ref{lemma:area_B_r_closedform}} \label{appendix:proof_of_B_r_area}
Let $\mathcal{A}(r_\mathtt{dom})$ denote the region outside the exclusion zone enclosed by $\mathcal{C}(r_\mathtt{dom})$, and $A(r_\mathtt{dom})$ denote the corresponding area. Using equation (\ref{eq:contour_line_general_exp}), this region can be written as
\begin{align*}
	%\label{eq:derive_B_r_closedform1}
	\mathcal{A}(r_\mathtt{dom}) = & \Big \{  (r, \theta) \Big| \tfrac{-\pi}{2} \leq \theta \leq \tfrac{\pi}{2}, r_\mathtt{exc} \leq r \leq \max\Big( r_\mathtt{exc}, r_\mathtt{dom} \times \nonumber \\
	& \Big[ \tfrac{\sin \big(\tfrac{\pi}{2} N^{(\mathtt{rad})}_\mathtt{az} (\sin \theta_\mathtt{rad} \cos \phi_\mathtt{rad} - \sin \theta) \big)}{N^{(\mathtt{rad})}_\mathtt{az} \sin^2 \big( \tfrac{\pi}{2} (\sin \theta_\mathtt{rad} \cos \phi_\mathtt{rad} - \sin \theta) \big)} \Big]^{\frac{2}{\alpha}} \Big)\Big \}.
\end{align*}
Defining $\tilde{r}_\mathtt{dom}(\theta) \triangleq r_\mathtt{dom} \Big[ \tfrac{\sin \big(\tfrac{\pi}{2} N^{(\mathtt{rad})}_\mathtt{az} (\sin \theta_\mathtt{rad} \cos \phi_\mathtt{rad} - \sin \theta) \big)}{N^{(\mathtt{rad})}_\mathtt{az} \sin \big( \tfrac{\pi}{2} (\sin \theta_\mathtt{rad} \cos \phi_\mathtt{rad} - \sin \theta) \big)} \Big]^{\frac{2}{\alpha}}$ and using the above, the area $A(r_\mathtt{dom})$ is given by
\begin{align}
	\label{eq:derive_B_r_closedform2}
	A(r_\mathtt{dom}) &  = \int_{-\frac{\pi}{2}}^{\frac{\pi}{2}} \int_{r_\mathtt{exc}}^{\max(r_\mathtt{exc},\tilde{r}_\mathtt{dom}(\theta))} r {\rm d} r {\rm d} \theta  \nonumber \\
	& = \frac{1}{2} \int_{-\frac{\pi}{2}}^{\frac{\pi}{2}} \max \big(r^2_\mathtt{exc}, \tilde{r}^2_\mathtt{dom}(\theta) \big) {\rm d} \theta - \frac{\pi r^2_\mathtt{exc}}{2}.
\end{align}
Expanding and simplifying, we get the desired result.
\vspace{-5pt}
\subsection{Proof of Lemma \ref{lemma:r_dom_density}}\label{appendix:lemma_r_dom_density_proof}
The distribution of $r_\mathtt{dom}$ is given by $F_{R_\mathtt{dom}}(r_\mathtt{dom}) = \mathbb{P}[R_\mathtt{dom} \leq r_\mathtt{dom}]$. Since the area outside the exclusion zone enclosed by the contour is $A(r_\mathtt{dom})$, the CDF is the void probability given by
\begin{align}
	\label{eq:void_prob_PPP_B_r_dom_new}
	F_{R_\mathtt{dom}}(r_\mathtt{dom}) = 1 - \exp \big( -\lambda_\mathtt{BS} A(r_\mathtt{dom})\big), \text{ for } r_\mathtt{dom} \geq r_\mathtt{exc}. 
\end{align}
Substitution equation (\ref{eq:B_r_dom_closedform}) in the above, we get the desired CDF. Further, differentiating equation (\ref{eq:void_prob_PPP_B_r_dom_new}), the density of $r_\mathtt{dom}$ can be written as $f_{R_\mathtt{dom}}(r_\mathtt{dom}) = \frac{{\rm d} A(r_\mathtt{dom})}{{\rm d} r_\mathtt{dom}} \cdot \lambda_\mathtt{BS} e^{-\lambda_\mathtt{BS} A(r_\mathtt{dom})}, \text{ for } r_\mathtt{dom} \geq r_\mathtt{exc}$. Due to the presence of the $\max(\cdot)$ term in equation (\ref{eq:B_r_dom_closedform}), it can be shown that $A(r_\mathtt{dom})$ depends on $r_\mathtt{dom}$ only in certain ranges of $\theta$, which can also be observed in Fig. \ref{subfig:Illustration_of_contour_area_as_fn_of_r1}. Hence, we get
\begin{align}
	\label{eq:derivative_B_r_dom_cases}
	& \frac{{\rm d} [\max(r^2_\mathtt{exc}, \tilde{r}^2_\mathtt{dom}(\theta))]}{{\rm d} r_\mathtt{dom}} = \nonumber \\
	& \begin{cases}
		0  \qquad \qquad  \qquad \qquad \qquad \qquad \qquad \quad \text{ if } \tilde{r}_\mathtt{dom} (\theta) < r_\mathtt{exc} \\
		2r_\mathtt{dom} \Big[ \tfrac{\sin^2 \big(\tfrac{\pi}{2} N_\mathtt{az} (\sin \theta_\mathtt{rad} \cos \phi_\mathtt{rad} - \sin \theta) \big)}{N^2_\mathtt{az} \sin^2 \big( \tfrac{\pi}{2} (\sin \theta_\mathtt{rad} \cos \phi_\mathtt{rad} - \sin \theta) \big)} \Big]^{2/\alpha}  \qquad  \text{ otherwise,} \\		
	\end{cases}
\end{align}
Substituting this into $f_{R_\mathtt{dom}}(r_\mathtt{dom})$ and representing it in terms of $\mathbbm{1} [\cdot]$, we obtain the desired result.
\vspace{-5pt}
\subsection{Proof of Theorem \ref{theorem:I_tot_dom_int_method}} \label{appendix:proof_of_I_tot_dom_int_apprx}
The dominant interference power is given by $I_\mathtt{dom} = \kappa r^{-\alpha}_\mathtt{dom}$. Next, we compute the average interference power due to the rest of the network, conditioned on $I_\mathtt{dom}$, i.e. $\mathbb{E}[I_\mathtt{rest}|I_\mathtt{dom}]$. Due to the bijection between $r_\mathtt{dom}$ and $I_\mathtt{dom}$ in the AAECC model, we have $\mathbb{E}[I_\mathtt{rest}|I_\mathtt{dom}] = \mathbb{E}[I_\mathtt{rest}|r_\mathtt{dom}]$. Hence, we can compute the conditional average interference power using
\begin{align}
	\label{equation:Avg_est_condn_on_I_dom}
	& \mathbb{E}[I_\mathtt{rest}|I_\mathtt{dom}] = \tfrac{P_{BS} \lambda_\mathtt{BS} G_\mathtt{BS}(0, \phi_\mathtt{m}(1/\sqrt{\pi \lambda_\mathtt{BS}})) PL(r_0) }{K} \times \nonumber \\
	&\qquad  \int_{-\tfrac{\pi}{2}}^{\tfrac{\pi}{2}} \int_{\max (r_\mathtt{exc}, \tilde{r}_\mathtt{dom}(\theta) )}^{\infty}  G_\mathtt{rad}(0, \phi_\mathtt{rad}, \theta, 0) r^{-\alpha + 1} {\rm d}r {\rm d}\theta \nonumber \\
	& \stackrel{(a)}{=} \frac{\kappa}{\alpha - 2} \int_{-\tfrac{\pi}{2}}^{\tfrac{\pi}{2}}  \big[\max\big(r_\mathtt{exc}, \tilde{r}_\mathtt{dom}(\theta) \big) \big]^{-\alpha + 2} \tfrac{G_\mathtt{rad}(\theta_\mathtt{rad}, \phi_\mathtt{rad}, \theta, 0)}{G_\mathtt{rad}(0, \phi_\mathtt{rad}, 0, 0)} {\rm d}\theta.
\end{align} 
The equality in (a) is obtained by  defining $\kappa \triangleq \frac{P_{BS} \lambda_\mathtt{BS} G_\mathtt{rad}(0,\phi_\mathtt{rad},0,0) G_\mathtt{BS}(0, \phi_\mathtt{m}(1/\sqrt{\pi \lambda_\mathtt{BS}})) PL(r_0) }{K} $, and evaluating the inner integral. Using Lemma \ref{lemma:Contour_large_exc_zone} and equation (\ref{equation:Avg_est_condn_on_I_dom}) in equation (\ref{equation:dom_int_method_defn}) and simplifying, we get the desired result. 
\vspace{-5pt}
\subsection{Proof of Theorem \ref{theorem:P_fa_P_d_first_princip_DIA}} \label{appendix:proof_of_spat_Pd_Pfa_spat_chisq}
We note that under hypothesis $\mathcal{H}_i$, the received power is $P_{\mathtt{rad},i}$ for $i=\{ 0,1 \}$. By definition we have $\mathbb{P}[P_{\mathtt{rad},i} > P_\mathtt{th}|I_\mathtt{tot}] = 1 - F_{P_\mathtt{rad},i}(P_\mathtt{th}|I_\mathtt{tot})$. Therefore, using Lemma \ref{lemma:central_non_central_chi_sq_dist} in (\ref{equation:prob_spatial_det_fa_defn}), we get
\begin{align}
	\label{eq:proof_of_spat_Pd_Pfa_chi_sq_1}
	\bar{P}_\mathtt{d} & = 1 - \int_{0}^\infty Q_N \Big( \sqrt{\tfrac{2 N P_\mathtt{tar}}{I_\mathtt{tot} + \sigma^2_w}}, \sqrt{\tfrac{2 N P_\mathtt{th}}{I_\mathtt{tot} + \sigma^2_w}} \Big) f_{I_{\mathtt{tot}}}(x) {\rm d}x, \nonumber \\
	\bar{P}_\mathtt{fa} & = 1 - \int_{0}^\infty\tfrac{1}{ (N-1)! } \gamma_l \Big( N, \tfrac{NP_\mathtt{th}}{I_\mathtt{tot} + \sigma^2_w} \Big) f_{I_{\mathtt{tot}}}(x) {\rm d}x.	
\end{align}
The first approximation is obtained by replacing $I_\mathtt{tot}$ by $I_\mathtt{tot,DI}$ using Theorem \ref{theorem:I_tot_dom_int_method}, and changing the upper limit to $I_\mathtt{exc}+\bar{I}_\mathtt{rad,a}$ (Remark \ref{remark:finite_supp_I_tot_DI}). Using the bijection between $r_\mathtt{dom}$ and $I_\mathtt{tot,DI}$ (Corollary \ref{corollary:monoton_DI_I_tot}), the final result is obtained by substituting $I_\mathtt{tot,DI}$ by $r_\mathtt{dom}$, and applying the chain rule. 

%Bibliography appears below	
\bibliographystyle{IEEEtran}
\bibliography{IEEE_TWC_references}
% Can use something like this to put references on a page
% by themselves when using endfloat and the captionsoff option.
\ifCLASSOPTIONcaptionsoff
\newpage
\fi
\end{document}